\newcolumntype{L}[1]{>{\raggedright\arraybackslash}p{#1}}
\newcolumntype{C}[1]{>{\centering\arraybackslash}p{#1}}
\newcolumntype{R}[1]{>{\raggedleft\arraybackslash}p{#1}}
\renewcommand\title[1]{\gdef\@title{\reset@font\Large\bfseries #1}}
\renewcommand\section{\@startsection {section}{1}{\z@}%
	{-3.5ex \@plus -1ex \@minus -.2ex}%
	{2.3ex \@plus.2ex}%
	{\normalfont\large\bfseries}}
\renewcommand\subsection{\@startsection{subsection}{2}{\z@}%
	{-3ex\@plus -1ex \@minus -.2ex}%
	{1.5ex \@plus .2ex}%
	{\normalfont\normalsize\bfseries}}
\renewcommand\subsubsection{\@startsection{subsubsection}{3}{\z@}%
	{-2.5ex\@plus -1ex \@minus -.2ex}%
	{1.5ex \@plus .2ex}%
	{\normalfont\normalsize\bfseries}}
\def\@runningauthor{}\newcommand{\runningauthor}[1]{\def\runningauthor{#1}}
\def\@runningtitle{}\newcommand{\runningtitle}[1]{\def\runningtitle{#1}}
\renewcommand{\ps@plain}{%
	\renewcommand{\@evenhead}{\footnotesize\scshape \hfill\runningauthor\hfill}
	\renewcommand{\@oddhead}{\footnotesize\scshape \hfill\runningtitle\hfill}}
\g@addto@macro\bfseries{\boldmath}
\setlist{leftmargin=*, itemsep=2pt, topsep=2pt, parsep=0pt, partopsep=5pt}
\DeclareMathOperator{\dev}{dev}
\DeclareMathOperator{\Aut}{Aut}
\DeclareMathOperator{\wt}{wt}
\DeclareMathOperator{\supp}{supp}
\theoremstyle{plain}
\newtheorem{theorem}{Theorem}[section]
\newtheorem{lemma}[theorem]{Lemma}
\newtheorem{proposition}[theorem]{Proposition}
\newtheorem{result}[theorem]{Result}
\newtheorem{corollary}[theorem]{Corollary}
\theoremstyle{definition}
\newtheorem{hypothesis}[theorem]{Hypothesis}
\newtheorem{example}[theorem]{Example}
\newtheorem{definition}[theorem]{Definition}
\newtheorem{remark}[theorem]{Remark}
\newtheorem{question}[theorem]{Question}
\numberwithin{theorem}{section}
\numberwithin{equation}{section}
\numberwithin{table}{section}
\numberwithin{figure}{section}
\newcommand{\F}{\mathbb F}
\newcommand{\Z}{\mathbb Z}
\newcommand{\N}{\mathbb N}
\newcommand{\VF}{\mathcal{VF}}
\newcommand{\NF}{\mathcal{NF}}
\DeclareMathOperator{\nl}{nl}
\newcommand{\aBold}{\mathbf{a}}
\newcommand{\bBold}{\mathbf{b}}
\newcommand{\cBold}{\mathbf{c}}
\newcommand{\uBold}{\mathbf{u}}
\newcommand{\vBold}{\mathbf{v}}
\newcommand{\wBold}{\mathbf{w}}
\newcommand{\xBold}{\mathbf{x}}
\newcommand{\yBold}{\mathbf{y}}
\newcommand{\oBold}{\mathbf{0}}
\newcommand{\C}{\mathcal{C}}
\newcommand{\D}{\mathbb D}
\newcommand{\graphElt}[2]{\begin{pmatrix}
#1\\
#2
\end{pmatrix}}
\newcommand*\colvec[3][]{
    \begin{pmatrix}\ifx\relax#1\relax\else#1\\\fi#2\\#3\end{pmatrix}
}
\newcommand{\graph}[1]{\mathcal{G}_{#1}}
\newcommand{\support}[1]{\mathcal{D}_{#1}}
\renewcommand*{\backref}[1]{}
\renewcommand*{\backrefalt}[4]{%
	\ifcase #1 (Not cited.)%
	\or        (Cited on page~#2.)%
	\else      (Cited on pages~#2.)%
	\fi}
\let\OLDthebibliography\thebibliography
\renewcommand\thebibliography[1]{
	\OLDthebibliography{#1}
	\setlength{\parskip}{0pt}
	\setlength{\itemsep}{0pt plus 0.3ex}
}
\title{Linear codes and incidence structures\\of bent functions and their generalizations}
\runningtitle{Linear codes and incidence structures of bent functions and their generalizations}
\author{
$^1$Wilfried Meidl \qquad $^2$Alexandr A. Polujan \qquad $^2$Alexander Pott\\[2mm]
\small $^1$Institut f\"{u}r Mathematik, Alpen-Adria-Universit\"{a}t Klagenfurt\\
\small Universit\"{a}tsstra{\ss}e 65-67, 9020, Klagenfurt, Austria\\
\small {\tt meidlwilfried@gmail.com}\\[2mm]
\small $^2$Faculty of Mathematics\\
\small Institute of Algebra and Geometry\\
\small Otto von Guericke University\\
\small Universit\"{a}tsplatz 2, 39106, Magdeburg, Germany\\
\small\tt alexandr.polujan@$\{$gmail.com,ovgu.de$\}$, alexander.pott@ovgu.de
}
\runningauthor{W. Meidl, A. A. Polujan, A. Pott}
\date{}
\begin{document}
	\maketitle
	\thispagestyle{empty}
	\begin{abstract}
		\noindent
		In this paper, we consider further applications of $(n,m)$-functions for the construction of 2-designs. For instance, we provide a new application of the extended Assmus-Mattson theorem, by showing that linear codes of certain APN functions with the classical Walsh spectrum support 2-designs. With this result, we give several sufficient conditions for an APN function with the classical Walsh spectrum to be CCZ-inequivalent to a quadratic one. On the other hand, we use linear codes and combinatorial designs in order to study important properties of $(n,m)$-functions. In particular, we provide a characterization of a quadratic Boolean bent function by means of the 2-transitivity of its automorphism group.  Finally, we give a new design-theoretic characterization of $(n,m)$-plateaued and $(n,m)$-bent functions and provide a coding-theoretic as well as a design-theoretic interpretation of the extendability problem for $(n,m)$-bent functions.
		
		\ \\ 
		\noindent\textbf{Keywords}: Bent Function, Combinatorial Design, Linear Code, Relative Difference Set, Metric Complement, Covering Radius.		
		\ \\
		 
		\noindent\textbf{Mathematics Subject Classification (2010)}: 05B10, 06E30, 14G50, 94C30, 94B05.
	\end{abstract}	
	\section{Introduction}
	\label{section: 1 Introduction}
	Cryptographically significant Boolean and vectorial functions have many different applications in coding theory and design theory and vice versa. The classical application is the construction of linear codes and $t$-designs with optimal parameters from classes of Boolean and vectorial functions having nice cryptographic properties (e.g., high nonlinearity or low differential uniformity). At the same time, linear codes and incidence structures, constructed from Boolean and vectorial functions, are often used to characterize certain classes of functions or to distinguish different ones. Recently, there has been a lot of work on $(n,m)$-functions, their designs and codes~\cite{Ding19BCD,Ding2020,Li20VanishingFlats,polujan2020design,TangDing2019}. In this paper, we continue the study of the interaction between Boolean and vectorial functions, linear codes and $t$-designs. We begin with a thorough introduction into the subject.
	
	\subsection{Preliminaries}
	\paragraph{Boolean and vectorial Boolean functions.}	
	Let $\F_2=\{0,1\}$ be the finite field with two elements and let $\F_2^n$ be the vector space of dimension $n$ over $\F_2$. Mappings $F\colon\F_2^n\rightarrow\F_2^m$ are called \emph{$(n,m)$-functions}, in particular,  $(n,1)$-functions are referred to as \emph{Boolean} functions, while $(n,m)$-functions with $m\ge2$ are referred to as \emph{vectorial} functions. Any vectorial function $F\colon\F_2^n\rightarrow\F_2^m$ can be uniquely (up to the choice of basis of $\F_2^m$) associated with $m$ \emph{coordinate} Boolean functions $f_i\colon\F_2^n\rightarrow\F_2$ for $1\le i\le m$ as a column-vector $F(\mathbf{x}):=(f_1(\mathbf{x}),\ldots,f_m(\mathbf{x}))^T$.  \emph{Component functions} of an $(n,m)$-function $F$ are Boolean functions $F_{\bBold}\colon \xBold\in\F_2^n \mapsto \langle \bBold, F(\xBold) \rangle_{m}$, where $\bBold\in\F_2^m$ and $\langle \cdot,\cdot \rangle_m$ is a non-degenerate bilinear form on $\F_2^m$. Any Boolean function $f\colon\F_2^n\rightarrow\F_2$ has a unique multivariate polynomial representation in the ring $\F_2[x_1,\dots,x_n]/(x_1\oplus x_1^2,\dots,x_n\oplus x_n^2)$, called the \emph{algebraic normal form} (and denoted by ANF for short) and given by $f(\mathbf{x})=\bigoplus_{\mathbf{v}\in\F_2^n}c_{\mathbf{v}} \left( \prod_{i=1}^{n} x_i^{v_i} \right)$, where $\xBold = (x_1,\dots, x_n)\in\F_2^n$,  $c_{\vBold}\in\F_2$ and $\vBold = (v_1,\dots, v_n)\in\F_2^n$. The \emph{algebraic degree of a Boolean function} $f$ on $\F_2^n$ is defined as the algebraic degree of its ANF as a multivariate polynomial and denoted by $\deg(f)$. The \emph{algebraic degree} of a vectorial $(n,m)$-function $F$ is defined as the maximum algebraic degree of its coordinate functions, that is, $\deg(F):=\max_{1\le i\le m} \deg(f_i)$. Clearly, the algebraic degree of an $(n,m)$-function $F$ is at most $n$. The \emph{set of affine $(n,m)$-functions}, denoted by $\mathcal{A}_{n,m}$ in the vectorial and by $\mathcal{A}_{n}$ in the Boolean case, respectively, is the set of $(n,m)$-functions of algebraic degree at most one, that is, $\mathcal{A}_{n,m}:=\{A \colon \F_2^n\rightarrow \F_2^m \mid \deg(A) \le 1 \}$. Clearly, any affine Boolean function $a$ on $\F_2^n$ can be represented as $a(\xBold):=\l_\aBold(\xBold)\oplus b$ for some constant $b\in\F_2$ and a linear function $l_\aBold$ on $\F_2^n$, given by $l_\aBold(\xBold):=\langle \aBold, \xBold \rangle_{n}$.
	
	Affine functions, being simple algebraic objects, have to be avoided in order to construct secure cryptographic systems. However, it is not enough to take an $(n,m)$-function having only high algebraic degree, it must also satisfy several cryptographic criteria in order to be considered as a component of block ciphers. Among them are high nonlinearity and low differential uniformity, which represent a measure of being different from the set of all affine functions and formally are defined as follows.  
	
	We endow the set of Boolean functions $\mathfrak{B}_n$ on $\F_2^n$ with the structure of a metric space $(\mathfrak{B}_n, d)$, where $d(f, g):=|\{ \xBold\in\F_2^n\colon f(\xBold)\neq g(\xBold) \}|$ is the \emph{Hamming distance} between Boolean functions $f,g\in\mathfrak{B}_n$. The \emph{nonlinearity of a Boolean function} $f$ on $\F_2^n$, denoted by $\nl(f)$, is a measure of distance between the function $f$ and the set of all affine functions $\mathcal{A}_n$, namely $\nl(f):=\min\limits_{a\in \mathcal{A}_n} d(f,a)$. The definition of nonlinearity can be extended for the vectorial case using the notion of component functions as follows. The \emph{nonlinearity of a vectorial $(n,m)$-function} $F$ is the minimum nonlinearity of all its component functions and is given by $\nl(F):=\min\limits_{a\in\mathcal{A}_n,\bBold\in\F_2^m \setminus \{ \mathbf{0}\} } d(F_{\bBold},a)$. The nonlinearity of an $(n,m)$-function $F$ is at most $\nl( F ) \le 2 ^ { n - 1 } - 2 ^ { \frac { n } { 2 }-1}$ and $(n,m)$-functions, achieving this bound are called \emph{perfect nonlinear}.	
	\begin{definition}
		An $(n,m)$-function $F$ is called \emph{perfect nonlinear} if $\nl( F ) = 2 ^ { n - 1 } - 2 ^ { \frac { n } { 2 } - 1 }$.
	\end{definition}
	
	The standard tool to compute the nonlinearity of an $(n,m)$-function is the \emph{Walsh transform} $\hat{\chi}_F\colon \F_2^n\times\F_2^m\rightarrow \Z$, which is defined in the following way
	\begin{equation*}
		\hat{\chi}_F(\aBold,\bBold):=\hat{\chi}_{F_{\bBold}}(\aBold) \quad \mbox{and} \quad \hat{\chi}_{F_{\bBold}}(\aBold):=\sum\limits_{\xBold\in\F_2^n} (-1)^{F_{\bBold}(\xBold)\oplus\langle \aBold, \xBold \rangle_n}\;\mbox{ for }\; \aBold\in\F_2^n\;\mbox{ and }\bBold\in\F_2^m.
	\end{equation*}
  	Using the Walsh transform, the nonlinearity of an $(n,m)$-function $F$ can be computed as $\nl(F):=2^{n-1}- \frac{1}{2}\cdot\max\limits_{\aBold\in\F_2^n,\bBold\in\F_2^m \setminus \{ \mathbf{0}\} }\left| \hat{\chi}_F(\aBold,\bBold) \right|$. The multiset $\Lambda_{F}:=\{* \hat{\chi}_F(\aBold,\bBold) \colon \aBold\in\F_2^n,\bBold\in\F_2^m\setminus\{\mathbf{0}\} *\}$ is called the \emph{Walsh spectrum} of an $(n,m)$-function $F$.
  	
  	In order to introduce the differential uniformity, we define the \emph{first-order derivative} of an $(n,m)$-function $F$, that is a mapping $D_{\aBold}F$ given by $D_{\aBold}F\colon\xBold\mapsto F(\xBold\oplus \aBold)\oplus F(\xBold)$.  
An $(n, m)$-function $F$ has {\emph{differential uniformity} $\delta$ if the value $\delta(F)$, defined as follows
  	\begin{equation*}
  	\delta(F):=\max\limits_{\aBold\in\F_2^n\setminus \{ \mathbf{0}\},\bBold\in\F_2^m} \delta_F(\aBold,\bBold), \quad \mbox{where} \quad \delta_F(\aBold,\bBold):=|\{\xBold \in \F_2^n \colon D_{\aBold}F(\xBold)=\bBold \}|,
  	\end{equation*}
  	is equal to $\delta$, (see \cite{Nyberg94}).} The multiset set $\Delta_F:=\{* \delta_F(\aBold,\bBold) \colon \aBold \in \F_2^n\setminus\{\mathbf{0}\}, \bBold \in \F_2^m *\}$ is called the \emph{differential spectrum} of the function $F$. The differential uniformity of an $(n,m)$-function $F$ is at least $\delta(F)\ge 2^{n-m}$.
	
	In general, $(n,m)$-functions with the highest nonlinearity and $(n,m)$-functions with the lowest differential uniformity are two different sets of functions. However, as the following result shows, in some cases these sets are the same. 
		
	\begin{result}
		Let $F$ be an $(n,m)$-function with $n$ even and $m\le n/2$. The following statements are equivalent.
		\begin{enumerate}
			\item $F$ is perfect nonlinear, that is $\nl(F)=2^{n-1}-2^{\frac{n}{2}-1}$.
			\item The differential uniformity of $F$ is $\delta(F)=2^{n-m}$.
			\item For all $\aBold\in\F_2^n$ and $\bBold\in\F_2^m\setminus \{ \mathbf{0} \}$ the Walsh transform satisfies $|\hat{\chi}_F(\aBold,\bBold)|=2^{n/2}$.
		\end{enumerate}
	\end{result}
	\begin{remark}
		The last statement means, that all the nonzero component functions $F_\bBold$ of a vectorial perfect nonlinear $(n,m)$-function $F$ are \emph{Boolean bent functions} on $\F_2^n$, that is their Walsh transform satisfies $\hat{\chi}_{F_\bBold}(\aBold)=\pm 2^{n/2}$ for all $\aBold\in\F_2^n$. For this reason, vectorial perfect nonlinear $(n,m)$-functions are also called \emph{vectorial bent functions}. The conditions on $n$ and $m$ in the formulation of the previous result are explained by the fact, that perfect nonlinear $(n,m)$-functions exist if and only if $n$ is even and $m\le n/2$, see~\cite{ROTHAUS1976300} and~\cite{Nyberg91}, respectively.
\end{remark}
	
	Clearly, any $(n,m+1)$-bent function $G$ can be written as $G(\xBold)=(F(\xBold),f(\xBold))^T$ for some $(n,m)$-bent function $F$ and a Boolean bent function $f$ on $\F_2^n$. However, it is not clear, whether the converse is also true, that is whether an arbitrary $(n,m)$-bent function $F$ could be extended by a Boolean bent function $f$ on $\F_2^n$ to an $(n,m+1)$-bent function $G\colon \xBold\mapsto(F(\xBold),f(\xBold))^T$. This question motivates the following definition.
	\begin{definition}\label{definition: lonely bent functions}
		An $(n,m)$-bent function $F$ is called \emph{extendable}, if there exists a Boolean bent function $f\colon\F_2^n\rightarrow\F_2$, such that the function $G\colon \xBold\in\F_2^n\mapsto \left(F(\xBold),f(\xBold)\right)^T$ is $(n,m+1)$-bent. If no such a bent function $f$ exists, the function $F$ is called \emph{non-extendable} or \emph{lonely}, see~\cite{polujan2020design}.
	\end{definition}
		
	Perfect nonlinearity of $(n,m)$-functions for $n$ even and $m\le n/2$ is characterized by  the minimality of either the value set of the Walsh transform or of the differential spectrum. Further generalizations of perfect nonlinear functions are obtained by relaxing slightly the minimality conditions.
	
	\begin{definition}
		A Boolean function $f\colon\F_{2}^{n}\to \F_{2}$ is said to be \emph{$s$-plateaued} if for all $\aBold\in\F_2^n$ the absolute value of its Walsh transform takes only two values, i.e.,  $|\hat{\chi}_f(\aBold)|\in\{0,2^{\frac{n+s}{2}}\}$. The value $2^{\frac{n+s}{2}}$ is called the \emph{amplitude} of an $s$-plateaued Boolean function $f$. An $(n,m)$-function $F$ is said to be \emph{$s$-plateaued} if all its component functions $F_{\bBold}$ with $\bBold\neq\mathbf{0}$ are $s$-plateaued. If all the component functions $F_{\bBold}$ of an $(n,m)$-function $F$ are $s_\bBold$-plateaued (not necessarily with the same amplitude), then $F$ is called an \emph{$(n,m)$-plateaued} function. Boolean 1-plateaued  functions on $\F_2^n$ with $n$ odd and 2-plateaued functions on $\F_2^n$ with $n$ even are called \emph{semi-bent}.
	\end{definition}
	\begin{definition}
		An $(n,m)$-function $F$ is called \emph{differentially two-valued}, if there are only two different values in the differential spectrum, that is $\Delta_F=\{0,2^s\}$ (multiplicities are omitted). In particular, $(n,n)$-functions $F$ with $\Delta_F=\{0,2\}$ are called \emph{almost perfect nonlinear} or simply \emph{APN}.
	\end{definition}

	On the set of all $(n,m)$-functions we introduce the following equivalence relations. We say that two $(n,m)$-functions $F,F'$ are:
	\begin{itemize}
		\item \emph{Extended-affine equivalent} (\emph{EA-equivalent} for short), if there exist a linear permutation $A_1$ of $\F_{2}^{m}$, an affine permutation $A_2$ of $\F_{2}^{n}$ and an affine function $A_3\colon\F_2^{n}\rightarrow\F_2^m$ such that $F=A_1\circ F' \circ A_2 \oplus A_3$;
		\item \emph{Carlet-Charpin-Zinoviev-equivalent}, or simply \emph{CCZ-equivalent}, if there exists an affine permutation $\mathcal{L}$ of $\F_{2}^{n} \times \F_{2}^{m}$ such that $\mathcal{L}\left(\graph{F}\right)=\graph{F'}$, where $\graph{F}:=\left\{(\xBold, F(\xBold)) \colon \xBold \in \F_{2}^{n}\right\}$ is the \emph{graph} of an $(n,m)$-function $F$.
	\end{itemize}	
	Note that in general CCZ-equivalence is a coarser equivalence relation than EA-equivalence, however, for Boolean functions and $(n,m)$-bent functions they coincide~\cite{BudaghyanC2009,KyureghyanP2008}.

	\paragraph{Linear codes and incidence structures.}
	A \emph{linear code} $\C$ over $\F_2$ is a vector subspace $\C \subseteq \F_2^v$. Elements of a linear code $\mathcal{C} \subseteq \F_2^v$ are called \emph{codewords} and said to have a length $v$. The number of nonzero coordinates of a codeword $\cBold\in\mathcal{C}$ is called the \emph{weight} of $\cBold$ and is denoted by $\wt(\cBold)$.  The \emph{support} of a codeword $\cBold=(c_1,\ldots,c_v)\in\mathcal{C}$ is defined by $\supp(\cBold)=\left\{1 \leq i \leq v: c_{i} \neq 0\right\} \subseteq\{1,2,3, \ldots, v\}$.  Further we will denote by $A_i$ the number of codewords of weight $i$ in the code $\mathcal{C}$. For a linear code $\mathcal{C}\subseteq \F_2^v$ we will call the polynomial $W_{\mathcal{C}}(z):=\sum_{i=0}^{v}A_w z^i$ the \emph{weight enumerator} of $\mathcal{C}$. The \emph{minimum distance} of a linear code is the minimum weight of its nonzero codewords. We also say, that a linear code $\mathcal{C}\subseteq \F_2^v$ is a $[v,k]$-\emph{linear} code, if $\mathcal{C}$ has dimension $k$ and a $[v,k,d]$-\emph{linear} code, if $\mathcal{C}$ is a $[v,k]$-linear code, which has the minimum distance $d$. An integer $\rho=\rho(\mathcal{C})$ is said to be the \emph{covering radius} of the linear code $\mathcal{C}$ of length $v$, if $\rho=\max\limits_{\xBold\in\F_2^v}\min\limits_{\cBold\in\mathcal{C}}d(\xBold, \cBold)$. The \emph{dual} of the $[v,k]$-linear $\mathcal{C}$ code is the $[v,v-k]$-linear code $\mathcal{C}^\perp$, defined by $\mathcal{C}^\perp:=\{ \mathbf{u}\in\F_2^v\colon \mathbf{u} \cdot \mathbf{w}= 0\mbox{ for all } \mathbf{w}\in\mathcal{C} \}$, where $ \mathbf{u} \cdot \mathbf{w}=u_1w_1\oplus u_2w_2\oplus\cdots\oplus u_nw_n$ is the standard dot product on $\F_2^n$.
	
	An \emph{incidence structure} is a pair $\mathcal{S}=(\mathcal{P}, \mathcal{B})$, where $\mathcal{P}$ is a finite set, called the \emph{point set} of $\mathcal{S}$ and $\mathcal{B}$ is the collection of subsets of $\mathcal{P}$, called the \emph{block set} of $\mathcal{S}$. All the information about an incidence structure $\mathcal{S}$ is contained in its \emph{incidence matrix} $\mathbf{M}(\mathcal{S})=(m_{i,j})$, which is a binary $b\times v$ matrix with $m_{i,j}=1$ if $p_j\in B_i$ for $p_j\in\mathcal{P},B_i\in\mathcal{B}$ and $m_{i,j}=0$ otherwise. Two incidence structures $\mathcal{S}$ and $\mathcal{S}'$ are \emph{isomorphic}, if there exist permutation matrices $\mathbf{P}$ and $\mathbf{Q}$ such that $\mathbf{M}(\mathcal{S})=\mathbf{P}\cdot \mathbf{M}(\mathcal{S}') \cdot \mathbf{Q}$. An incidence structure $\mathcal{D}=(\mathcal{P}, \mathcal{B})$ is called a $t$-$(v, k, \lambda)$ \emph{design}, if the cardinality of the point set $\mathcal{P}$ is $v$, the block set $\mathcal{B}$ is a collection of \emph{$k$-subsets} of $\mathcal{P}$ and every $t$-subset of points of $\mathcal{P}$ is contained in exactly $\lambda$ blocks of $\mathcal{B}$. The parameter $\lambda$ is called the \emph{valency}. Any $t$-$(v, k, \lambda)$ design $\mathcal{D}$ is a \emph{regular} incidence structure, i.e., any point of $\mathcal{D}$ is contained in the same number of blocks $r$, which is called the \emph{replication number} of $\mathcal{D}$.  In the case $\mathcal{B}=\varnothing$ one sets $\lambda=0$ and calls $(\mathcal{P}, \varnothing)$ a $t$-$(v, k, 0)$ design for any $t$ and $k$ with $1 \leq t \leq v$ and $0 \leq k \leq v$. Further we will call such a design \emph{trivial}. In the case of $1$-$(v, k, \lambda)$ designs the parameters $r$ and $\lambda$ coincide, for this reason we will call $1$-designs regular incidence structures. If two 1-designs have the same replication number $r$, we will call them \emph{equiregular}. We say that a $t'$-$(v',k,\lambda')$ design $D' = (\mathcal{P}',\mathcal{B}')$ is a \emph{subdesign} of a $t$-$(v,k,\lambda)$ design $D=(\mathcal{P},\mathcal{B})$, if $\mathcal{P}'\subseteq \mathcal{P}$ and $\mathcal{B}'\subseteq \mathcal{B}$. Finally, if for a design $D=(\mathcal{P},\mathcal{B})$ there exist $n$ subdesigns $D_i=(\mathcal{P},\mathcal{B}_i)$ such that $\mathcal{B}=\bigsqcup_{i=1}^n \mathcal{B}_i$, we say that that $D$ \emph{is partitioned into subdesigns} $D_1,\ldots, D_n$ and write $D=\bigsqcup_{i=1}^n D_i$.
	
	\subsection{Classical tools to construct $t$-designs from linear codes}
	Let $\mathcal{C}$ be a linear code of length $v$. Let $w$ be an integer, such that $A_w\neq 0$. Define the following two sets $\mathcal{P}(\mathcal{C}):=$ $\{1, \ldots, v \}$, i.e., the set of all possible coordinate positions of $\mathcal{C}$ and $\mathcal{B}_{w}(\mathcal{C}):=\{ \supp(\cBold) : \wt(\cBold)=w \mbox{ and } \mathbf{c} \in \mathcal{C} \}$ i.e., the collection of the supports of the codewords of the fixed weight $w$. If $\left(\mathcal{P}(\mathcal{C}), \mathcal{B}_{w}(\mathcal{C})\right)$ is a $t$-design for a fixed $w$, we say that the codewords of the weight $w$ in the code $\mathcal{C}$ \emph{hold a $t$-design}. If for any $0 \leq w \leq v$ the codewords of the weight $w$ hold $t$-designs we say that the code $\mathcal{C}$ \emph{supports $t$-designs}. Note that this definition is also valid even if $\left(\mathcal{P}(\mathcal{C}), \mathcal{B}_{w}(\mathcal{C})\right)$ is a trivial design.
	
	In general, it is a nontrivial problem to construct $t$-designs. One of the standard ways to construct $t$-designs from linear codes is to consider the supports of the codewords of a fixed weight, and check whether either the automorphism group of the given code is $t$-transitive, or the conditions of the original Assmus-Mattson theorem are fulfilled. Below we formulate these results for binary linear codes, since these are the only codes considered in this paper, however the original statements are valid for linear codes over $\F_{q}$ with $q$ being a prime power.

	Let $G$ be a group, $S$ be a set and let $\cdot\colon G\times S\to S$ be a group action. The group action is called \emph{$t$-transitive} if for any two ordered $t$-tuples $(x_1,\ldots,x_t)$ and $(y_1,\ldots,y_t)$ of pairwise distinct elements of $S$, there exists an element $g\in G$ such that for all $i\in \{1,\ldots,t\}$ holds $g\cdot x_i=y_i$.
		
	\begin{result}[Transitivity theorem]\label{result: Designs from nice automorphism groups}\cite{assmus1992designs}.
		Let $\mathcal{C}$ be a code of length $v$ over $\F_2$ where $\Aut(\mathcal{C})$ is $t$-transitive. Then the codewords of any weight $w \ge t$ of $\mathcal{C}$ hold a $t$-design.
	\end{result}
	
	\begin{result}[Original Assmus-Mattson Theorem]\cite[Theorem 2.2]{TangDing2019}\label{result: Original Assmus-Mattson theorem}
		Let $\mathcal{C}$ be a linear code over $\F_2$ of length $v$ and minimum distance $d$. Let $\mathcal{C}^{\perp}$ with minimum weight $d^{\perp}$ denote the dual code of $\mathcal{C}$. Let $t$ with $1 \leq t<\min \left\{d, d^{\perp}\right\}$ be an integer such that there are at most $d^{\perp}-t$ weights of $\mathcal{C}$ in $\{1,2, \ldots, v-t\}$. Then $\left(\mathcal{P}(\mathcal{C}),\mathcal{B}_{k}(\mathcal{C})\right)$ and $\left(\mathcal{P}\left(\mathcal{C}^{\perp}\right), \mathcal{B}_{k}\left(\mathcal{C}^{\perp}\right)\right)$ are $t$-designs for all $k\in\{0,1, \ldots, v\}$.
	\end{result}
	
	This is the recent extension of the original Assmus-Mattson Theorem, which was shown to be a powerful tool one may use to construct 2-designs from cryptographically significant Boolean and vectorial functions.
	
	\begin{result}[Extended Assmus-Mattson Theorem]\label{result: Extended Assmus-Mattson theorem}\cite[Theorem 5.3]{TangDing2019}
		Let $\mathcal{C}$ be a linear code over $\F_2$ with length $v$ and minimum weight $d$. Let $\mathcal{C}^{\perp}$ denote the dual code of $\mathcal{C}$ with minimum weight $d^{\perp}$. Let $s$ and $t$ be two positive integers such that $t$ satisfies $t<\min\left\{d, d^{\perp}\right\}$. Let $S$ be an $s$-subset of $\{d, d+1, \ldots, v-t\}$. Suppose that $\left(\mathcal{P}(\mathcal{C}), \mathcal{B}_{\ell}(\mathcal{C})\right)$ and
		$\left(\mathcal{P}\left(\mathcal{C}^{\perp}\right), \mathcal{B}_{\ell^{\perp}}\left(\mathcal{C}^{\perp}\right)\right)$ are $t$-designs for $\ell \in\{d, d+1, \ldots, v-t\} \setminus S$ and $0 \leq \ell^\perp \leq s+t-1$. Then $\left(\mathcal{P}(\mathcal{C}), \mathcal{B}_{k}(\mathcal{C})\right)$ and
		$\left(\mathcal{P}\left(\mathcal{C}^{\perp}\right), \mathcal{B}_{k}\left(\mathcal{C}^{\perp}\right)\right)$ are $t$-designs for any $t \leq k \leq v$. 
	\end{result}	
	
	For an $(n,m)$-function $F$, we define the linear code $\mathcal{C}_F$ as the linear code, generated by the rows of the following matrix
	\begin{equation}
	\begin{bmatrix}
	1\\
	\xBold\\
	F(\xBold)
	\end{bmatrix}_{\xBold\in\F_2^n}.
	\end{equation}
	As we show further, this code and its dual contain all information about the Walsh- and differential spectra of a given $(n,m)$-function $F$. Moreover, they can be used to distinguish inequivalent functions, as the following statement shows.
	\begin{result}\cite[Theorem 9]{EdelP09}
		Two $(n,m)$-functions $F$ and $F'$ are CCZ-equivalent iff the linear codes $\mathcal{C}_F$ and $\mathcal{C}_{F'}$ (or equivalently $\mathcal{C}^\perp_F$ and $\mathcal{C}^\perp_{F'}$) are permutation equivalent.
	\end{result}
	Clearly, not only the weight enumerators of CCZ-equivalent functions are invariants, but also the incidence structures, supported by the codewords of fixed weight.
	\begin{result}\label{theorem: Support designsare  CCZ-invariant}
		Let $F$ and $F'$ be two CCZ-equivalent $(n,m)$-functions. Then the incidence structures $\left(\mathcal{P}(\mathcal{C}_F), \mathcal{B}_{k}(\mathcal{C}_F)\right)$
		and $\left(\mathcal{P}(\mathcal{C}_{F'}), \mathcal{B}_{k}(\mathcal{C}_{F'})\right)$ as well as 
		$\left(\mathcal{P}\left(\mathcal{C}^{\perp}_{F}\right), \mathcal{B}_{l}\left(\mathcal{C}^{\perp}_{F}\right)\right)$ and $\left(\mathcal{P}\left(\mathcal{C}^{\perp}_{F'}\right), \mathcal{B}_{l}\left(\mathcal{C}^{\perp}_{F'}\right)\right)$ are isomorphic for all $0\le k,l\le 2^n$.
	\end{result}
	Finally, we give the connection between the number of codewords of the weight 4 in the linear code $\mathcal{C}^\perp_F$ of an $(n,m)$-function $F$ and the fourth power moments of the Walsh transform.
	\begin{result}\cite[Theorem 2.5.]{Arshad2018}
	Let $F$ be an $(n,m)$-function. Then the number of codewords of the weight 4 in $\mathcal{C}^\perp_F$ is given by	
	\begin{equation}\label{equation: Number of weight 4 codewords in CF dual}
	A_4=\dfrac{1}{24}\left( \dfrac{1}{2^{n+m}} \left(\sum\limits_{\aBold\in\F_2^n,\bBold\in\F_2^m}(\hat{\chi}_F(\aBold,\bBold))^4 \right)-3\cdot 2^{2n} +2^{n+1}\right).
	\end{equation}
	\end{result}
	Using this statement it is not difficult to derive the following characterizations of perfect and almost perfect nonlinear functions.
	\begin{corollary}\label{corollary: Coding-theoretic characterization of bent functions}
		Let $n=2k$. The following statements are equivalent.
		\begin{enumerate}
			\item An $(n,m)$-function $F$ is bent.
			\item The linear code $\mathcal{C}_F$ is a $[2^n,n+m+1,2^{n-1}-2^{k-1}]$-linear code with the weight enumerator
			\begin{equation}\label{equation: Bentness via CF}
			W_{\mathcal{C}_F}(z)=1+\left(2^{m}-1\right) 2^{n} z^{2^{n-1}-2^{k-1}}+(2^{n+1}-2) z^{2^{n-1}} + \left(2^{m}-1\right) 2^{n} z^{2^{n-1}+2^{k-1}}+z^{2^{n}}.
			\end{equation}
			\item The linear code $\mathcal{C}^\perp_F$ is a $[2^n,2^n-n-m-1,4]$-linear code with the number of weight 4 codewords given by
			\begin{equation}\label{equation: Bentness via CF dual}
			A_4=\dfrac{1}{3}\left(2^{3n-m-3}-2^{2n-m-3}-2^{2n-2}+2^{n-2}  \right),
			\end{equation}
			which is the minimum possible value for an $(n,m)$-function $F$ with $n$ even and $m\le n/2$.
		\end{enumerate}		
	\end{corollary}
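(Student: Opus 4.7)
The plan is to translate every claim into properties of the Walsh transform $\hat{\chi}_F$. Every codeword of $\mathcal{C}_F$ is the evaluation vector of $c \oplus \langle \aBold, \xBold \rangle_n \oplus \langle \bBold, F(\xBold) \rangle_m$ for some $(c, \aBold, \bBold) \in \F_2 \times \F_2^n \times \F_2^m$, and since the weight of a Boolean function $g$ on $\F_2^n$ equals $2^{n-1} - \hat{\chi}_g(\mathbf{0})/2$, this codeword has weight $2^{n-1} - (-1)^c \hat{\chi}_F(\aBold, \bBold)/2$. This is the main computational bridge between the three statements.

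For $(1) \Leftrightarrow (2)$, I separate the $2^{n+1}$ parameter pairs $(c, \aBold)$ with $\bBold = \mathbf{0}$, which parameterize all affine Boolean functions on $\F_2^n$ and contribute one codeword of weight $0$, one of weight $2^n$, and $2(2^n - 1)$ of weight $2^{n-1}$. In the forward direction, bentness yields $|\hat{\chi}_F(\aBold, \bBold)| = 2^k$ for every $\bBold \ne \mathbf{0}$, so each of the $2 \cdot 2^n(2^m - 1)$ remaining codewords has weight $2^{n-1} \pm 2^{k-1}$, split evenly by the sign $(-1)^c$; since bent component functions cannot be affine, the generating rows are linearly independent, giving dimension $n + m + 1$. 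In the converse direction, summing the coefficients of the given enumerator confirms a total of $2^{n+m+1}$ codewords; since codewords with $\bBold = \mathbf{0}$ only contribute weights in $\{0, 2^{n-1}, 2^n\}$, every codeword with $\bBold \ne \mathbf{0}$ must have weight $2^{n-1} \pm 2^{k-1}$, which forces $|\hat{\chi}_F(\aBold, \bBold)| = 2^k$ for all $\bBold \ne \mathbf{0}$.

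For $(1) \Leftrightarrow (3)$, the dimension statement is dual to $\dim \mathcal{C}_F = n + m + 1$ established above. The bound $d^\perp \ge 4$ is structural and independent of bentness: the top row of the generator matrix is all-ones (ruling out weight-$1$ dual codewords), the columns $(1, \xBold, F(\xBold))^T$ are pairwise distinct (ruling out weight $2$), and no three such columns sum to zero because their top coordinates sum to $1 \in \F_2$ (ruling out weight $3$). Substituting the bent Walsh spectrum $\hat{\chi}_F(\aBold, \bBold)^2 = 2^n$ for $\bBold \ne \mathbf{0}$ and $\hat{\chi}_F(\mathbf{0},\mathbf{0})^2 = 2^{2n}$ into \eqref{equation: Number of weight 4 codewords in CF dual} gives the claimed value of $A_4$, which is positive and hence also settles $d^\perp = 4$. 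Minimality follows from Parseval's identity $\sum_\aBold \hat{\chi}_F(\aBold, \bBold)^2 = 2^{2n}$ combined with Cauchy--Schwarz, yielding $\sum_\aBold \hat{\chi}_F(\aBold, \bBold)^4 \ge 2^{3n}$ for every $\bBold \ne \mathbf{0}$, with equality iff $|\hat{\chi}_F(\aBold, \bBold)|$ is constant in $\aBold$, i.e., iff $F_\bBold$ is bent. Summing over $\bBold \ne \mathbf{0}$ and adding the contribution from $\bBold = \mathbf{0}$ shows via \eqref{equation: Number of weight 4 codewords in CF dual} that $A_4$ is minimized precisely when every nonzero component of $F$ is bent, proving both the minimality claim and $(3) \Rightarrow (1)$.

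The main obstacle is the converse direction of $(1) \Leftrightarrow (2)$: one has to verify that the prescribed enumerator forces \emph{every} codeword with $\bBold \ne \mathbf{0}$ to land in the two bent-type weights, which requires checking that the affine-function contribution already saturates the $2^{n-1}$ class, that the totals match, and that no weight-$2^{n-1}$ codewords can come from $\bBold \ne \mathbf{0}$. Once this bookkeeping is handled, the remaining work is routine application of Parseval, Cauchy--Schwarz, and the cited formula \eqref{equation: Number of weight 4 codewords in CF dual}.
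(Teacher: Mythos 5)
Your proposal is correct and follows exactly the route the paper intends: the paper gives no explicit proof, saying only that the corollary is "not difficult to derive" from the $A_4$ formula \eqref{equation: Number of weight 4 codewords in CF dual}, and your derivation --- identifying codeword weights with $2^{n-1}-(-1)^c\hat{\chi}_F(\aBold,\bBold)/2$, the structural argument for $d^\perp\ge 4$, substitution of the bent Walsh spectrum into the $A_4$ formula, and Parseval plus Cauchy--Schwarz for the minimality claim and the converse --- is the standard way to fill in those details. (The missing ``$+$'' in the displayed weight enumerator is a typo in the paper, not an issue with your argument.)
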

	
	\begin{corollary}\label{corollary: Coding-theoretic characterization of APN functions}
		An $(n,n)$-function $F$ is APN if and only if $\mathcal{C}^\perp_F$ is a $[2^n,2^n-2n-1,6]$-linear code or, equivalently, if the number of weight 4 codewords in $\mathcal{C}^\perp_F$ is $A_4=0$.
	\end{corollary}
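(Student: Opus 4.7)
The plan is to express $A_4$ directly in terms of the differential spectrum of $F$ and then to recognise the APN property as the equality case of a natural inequality.

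First, I would specialise equation~\eqref{equation: Number of weight 4 codewords in CF dual} to $m=n$ and rewrite the fourth moment of the Walsh transform using the classical identity
\begin{equation*}
\sum_{\aBold\in\F_2^n,\bBold\in\F_2^n}\hat{\chi}_F(\aBold,\bBold)^4 \;=\; 2^{2n}\sum_{\aBold\in\F_2^n,\bBold\in\F_2^n}\delta_F(\aBold,\bBold)^2,
\end{equation*}
which I would verify in one line by expanding the fourth power, summing first over $\aBold$ (which forces $\xBold_1\oplus\xBold_2\oplus\xBold_3\oplus\xBold_4=\mathbf{0}$) and then over $\bBold$ (which forces $F(\xBold_1)\oplus F(\xBold_2)\oplus F(\xBold_3)\oplus F(\xBold_4)=\mathbf{0}$), and identifying the resulting count of quadruples with $\sum_{\aBold,\bBold}\delta_F(\aBold,\bBold)^2$ via the pairing $\aBold=\xBold_1\oplus\xBold_2$, $\bBold=F(\xBold_1)\oplus F(\xBold_2)$. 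Substituting into the formula for $A_4$ gives
\begin{equation*}
A_4 \;=\; \tfrac{1}{24}\Bigl(\,\textstyle\sum_{\aBold,\bBold}\delta_F(\aBold,\bBold)^2 \;-\; 3\cdot 2^{2n} \;+\; 2^{n+1}\Bigr).
\end{equation*}

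Second, I would lower-bound the remaining sum. The $\aBold=\mathbf{0}$ term contributes $\delta_F(\mathbf{0},\mathbf{0})^2=2^{2n}$. For each $\aBold\neq\mathbf{0}$ the values $\delta_F(\aBold,\bBold)$ are non-negative even integers, since solutions of $D_\aBold F(\xBold)=\bBold$ occur in pairs $\xBold,\xBold\oplus\aBold$, and $\sum_\bBold \delta_F(\aBold,\bBold)=2^n$. Writing $\delta_F(\aBold,\bBold)=2m_{\aBold,\bBold}$ and using $m^2\ge m$ for $m\in\Z_{\ge 0}$, with equality iff $m\in\{0,1\}$, I obtain $\sum_\bBold \delta_F(\aBold,\bBold)^2\ge 2^{n+1}$ with equality iff $\delta_F(\aBold,\bBold)\in\{0,2\}$ for every $\bBold$. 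Summing over the $2^n-1$ nonzero vectors $\aBold$ and adding the $\aBold=\mathbf{0}$ contribution yields $\sum_{\aBold,\bBold}\delta_F(\aBold,\bBold)^2\ge 3\cdot 2^{2n}-2^{n+1}$, with equality precisely when $F$ is APN. Combined with the displayed expression for $A_4$, this gives the desired equivalence $A_4=0\iff F\text{ is APN}$.

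For the code parameters $[2^n,2^n-2n-1,6]$ the argument is as follows. The all-one row of the generator matrix of $\mathcal{C}_F$ forces every codeword of $\mathcal{C}^\perp_F$ to have even weight, and the $2^n$ columns of that generator matrix are pairwise distinct, so $A_2=0$; combined with $A_4=0$ this yields minimum distance at least $6$, and a direct count along the same lines shows $A_6>0$ for APN $F$. For the dimension, if some component $F_\bBold$ with $\bBold\neq\mathbf{0}$ were affine, a linear change of basis on the output side would exhibit $F$ as $(G,\ell)$ with $G\colon\F_2^n\to\F_2^{n-1}$, and the APN property of $F$ would force $\delta_G(\aBold,\bBold')\in\{0,2\}$ for every $\aBold\neq\mathbf{0}$, so $G$ would be vectorial bent, contradicting the Nyberg bound $m\le n/2$ for $n\ge 3$. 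Hence $\dim\mathcal{C}_F=2n+1$ and $\dim\mathcal{C}^\perp_F=2^n-2n-1$. The main obstacle is the fourth-moment identity; once that is in place, the remainder reduces to an elementary inequality together with a short degeneracy argument.
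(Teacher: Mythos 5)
Your proof is correct and follows exactly the route the paper intends: the corollary is stated there as an immediate consequence of Result 1.8, and your conversion of the fourth Walsh moment into $\sum_{\aBold,\bBold}\delta_F(\aBold,\bBold)^2$, followed by the equality case of $m^2\ge m$ applied to $\delta_F(\aBold,\bBold)=2m_{\aBold,\bBold}$, together with the degeneracy argument ruling out affine components, is the standard way to fill in that derivation. The only step you leave implicit is $A_6>0$ (needed so that the minimum distance equals, rather than merely exceeds, $6$); this is true and standard, and for $n\ge4$ it already follows from the sphere-packing bound, since a binary $[2^n,\,2^n-2n-1]$ code with all weights even and minimum distance at least $8$ would have to satisfy $1+2^n+\binom{2^n}{2}+\binom{2^n}{3}\le 2^{2n+1}$, which fails.
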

	\subsection{Motivation}
	In this paper, we extend the recent works~\cite{Li20VanishingFlats,TangDing2019}, which studied incidence structures, arising from the linear codes $\mathcal{C}_F$ and $\mathcal{C}_F^\perp$ of $(n,n)$-functions $F$. In particular, Li et al.~\cite{Li20VanishingFlats}, motivated by the study of CCZ-inequivalence of $(n,n)$-functions $F$, introduced the partial quadruple system $\VF(F)$, called the \emph{vanishing flats} of the $(n,n)$-function $F$, which captures a detailed combinatorial information about the given function. Formally, it is defined in the following way: $\VF(F)=(\mathcal{P},\VF_F)$, where the point set is given by $\mathcal{P}=\{ \xBold \colon \xBold \in\F_2^n \}$ and the block set $\VF_F$ is given as follows
	\begin{equation}\label{equation: Block set of the vanishing flats}
	\VF_F= \left\{ \{ \xBold_1,\xBold_2,\xBold_3,\xBold_4  \} \colon \bigoplus\limits_{i=1}^{4} \graphElt{\xBold_i}{F(\xBold_i)}=\graphElt{\mathbf{0}}{\mathbf{0}} \mbox{ for } \xBold_i\in\F_2^n  \right\} .
	\end{equation}
	It is not difficult to see, that the incidence structure $\VF(F)$ is $\left(\mathcal{P}\left(\mathcal{C}^{\perp}_{F}\right), \mathcal{B}_{4}\left(\mathcal{C}^{\perp}_{F}\right)\right)$, also studied by Tang, Ding and Xiong~\cite{TangDing2019}, who mainly focused on the construction of 2-designs of the form  $\left(\mathcal{P}\left(\mathcal{C}_{F}\right), \mathcal{B}_{k}\left(\mathcal{C}_{F}\right)\right)$ and $\left(\mathcal{P}\left(\mathcal{C}^{\perp}_{F}\right), \mathcal{B}_{k}\left(\mathcal{C}^{\perp}_{F}\right)\right)$ from $(n,n)$-functions $F$ having nice cryptographic properties. In this way, the number of vanishing flats $|\VF_F|$ is equal to the number of weight four codewords $A_4$ of $\mathcal{C}^{\perp}_{F}$ and given in~\eqref{equation: Number of weight 4 codewords in CF dual}. Although the number of blocks provides a tiny piece of information about the incidence structure, it is nevertheless enough to characterize bent and APN functions as we mentioned in Corollaries~\ref{corollary: Coding-theoretic characterization of bent functions} and~\ref{corollary: Coding-theoretic characterization of APN functions}. An essential step further is to analyse, whether one can read off more information about functions, considering the whole incidence structures (but not just the number of blocks).
	
	The motivation of this paper is three-fold. First, following the work~\cite{TangDing2019} of Tang, Ding and Xiong we investigate the coding-theoretic generalizations of vanishing flats, namely the incidence structures $\left(\mathcal{P}\left(\mathcal{C}_{F}\right), \mathcal{B}_{k}\left(\mathcal{C}_{F}\right)\right)$ and $\left(\mathcal{P}\left(\mathcal{C}^{\perp}_{F}\right), \mathcal{B}_{k}\left(\mathcal{C}^{\perp}_{F}\right)\right)$. More precisely, we ask what are the further classes of cryptographically significant $(n,m)$-functions $F$, for which the linear codes $\mathcal{C}_{F}$ and $\mathcal{C}^{\perp}_{F}$ support 2-designs. Second, we introduce a combinatorial generalization of vanishing flats, a partial quadruple system, called nonvanishing flats, based on the modification of the block set~\eqref{equation: Block set of the vanishing flats}. We show that the collection of all nonvanishing flats is an invariant under EA-equivalence and thus can be used as a distinguisher between different classes of functions. Using both vanishing and nonvanishing flats, we provide new characterizations of bent and plateaued $(n,m)$-function and explain the combinatorial difference between these two classes of functions. Finally, we provide coding-theoretic and design-theoretic points of view on the extendability problem of Boolean and vectorial bent functions, which may be considered as a generalization of the bent sum decomposition problem, formulated by Tokareva~\cite[Hypothesis 1]{Tokareva2011}. In particular, we provide a sufficient condition of non-extendability of bent functions in terms of the structural properties of vanishing flats. In this way, the question about the extendability of a given bent functions is reduced to the study of its internal combinatorial information.
	
	The rest of the paper is organized in the following way. In Section~\ref{section 2}, we provide further applications of the extended Assmus-Mattson theorem. First, in Subsection~\ref{subsection 2.1}, we show that the extended Assmus-Mattson theorem, applied to the linear codes of Boolean bent functions may also outperform the transitivity theorem. We show that the automorphism groups of linear codes $\mathcal{C}_f$ and $\mathcal{C}^{\perp}_f$ of Boolean bent functions $f$ on $\F_2^n$ are 2-transitive if and only if $f$ is quadratic. Later on, in Subsection~\ref{subsection: Designs from APN functions}, we show that linear codes of certain APN functions with the classical Walsh spectrum support 2-designs similarly to AB functions; the latter statement was shown in~\cite{TangDing2019}. Moreover, we provide new sufficient conditions for an APN function with the classical Walsh spectrum to be CCZ-inequivalent to a quadratic one. In Section~\ref{section 3}, we consider in details the vanishing flats of Boolean and vectorial bent functions. As it was shown in~\cite[Example 4]{TangDing2019}, linear codes $\mathcal{C}_F$ and $\mathcal{C}^\perp_F$ of all $(n,m)$-bent functions $F$ support 2-designs, from what follows that vanishing flats are 2-designs as well. Using a connection between bent functions and relative difference sets we show that this condition (being a 2-design) is actually sufficient for the perfect nonlinearity. Moreover, we determine the parameters of vanishing flats $\VF(F)$ for $(n,m)$-bent functions $F$. In Section~\ref{section: Nonvanishing flats of plateaued functions}, we generalize the original concept of vanishing flats. In Subsection~\ref{subsection 3.1}, we introduce the notion of nonvanishing flats $\mathcal{NF}_{\vBold}(F)$ for $(n,m)$-functions $F$ and consequently show that the collection of all nonvanishing flats for an $(n,m)$-function $F$ is an invariant under EA-equivalence. In Subsection~\ref{subsection 3.2}, we give a design-theoretic interpretation of the well-known characterization of plateaued functions, given by Carlet~\cite{Carlet2015}. For instance, we show that nonvanishing flats $\mathcal{NF}_{\vBold}(F)$ of $(n,m)$-plateaued functions $F$ are regular incidence structures, i.e., $1$-$(2^n,4,\lambda_\mathbf{v})$ designs (with not necessarily the same $\lambda_\mathbf{v}$). Moreover, we show that the regularity condition is also sufficient for plateauedness, and explain how one can compute the value $\lambda_{\mathbf{v}}$. Further we show that the equiregularity condition, i.e., that nonvanishing flats $\mathcal{NF}_{\vBold}(F)$ are $1$-$(2^n,4,\lambda_\mathbf{v})$ designs with $\lambda_\mathbf{v}=\lambda$ for all $\vBold\in\F_2^m\setminus \{\mathbf{0}\}$, is necessary and sufficient for $s$-plateauedness of $(n,m)$-functions $F$. Finally, in Subsection~\ref{subsection 3.3} we consider nonvanishing flats of bent functions, in particular, we characterize $(n,m)$-bent functions $F$ among the class of plateaued functions as those, for which the nonvanishing flats $\mathcal{NF}_{\vBold}(F)$ are $2$-$(2^n,4,2^{n-m-1})$ designs for all $\vBold\in\F_2^m\setminus \{\mathbf{0}\}$. In Section~\ref{section 5}, we consider extendable and lonely bent functions. In Subsection~\ref{subsection 5.1}, we give a coding-theoretic interpretation of the extendability problem and in Subsection~\ref{subsection 5.2} we provide a design-theoretic framework for studying extendability of bent functions by means of vanishing and nonvanishing flats. In Section~\ref{section 6}, we give concluding remarks and raise some further questions and open problems on $(n,m)$-functions, their linear codes  and incidence structures.
	
	\section{Further applications of the extended Assmus-Mattson theorem}\label{section 2}
	In this section, we investigate further applications of the extended Assmus-Mattson theorem to Boolean and vectorial functions, which, as was shown in~\cite{TangDing2019}, can outperform the original Assmus-Mattson theorem. First we show, that the only Boolean bent functions with 2-transitive automorphism group are quadratic. In this way, for every non-quadratic Boolean bent function $f$ on $\F_2^n$ the linear codes $\mathcal{C}_f$ and $\mathcal{C}_f^\perp$ support 2-designs by the extended Assmus-Mattson theorem, though their automorphism groups are not 2-transitive. We also provide a new application of the extended  Assmus-Mattson theorem, by showing that  linear codes of certain APN functions with the classical Walsh spectrum support 2-designs similarly to AB functions. We also provide new sufficient conditions for an APN function with the classical Walsh spectrum to be CCZ-inequivalent to a quadratic one.
	\subsection{The extended Assmus-Mattson theorem can outperform transitivity theorem}\label{subsection 2.1}
	Using Result~\ref{result: Designs from nice automorphism groups} and the known connections between bent functions, linear codes, addition and translation designs, we prove that vanishing flats of quadratic Boolean bent functions are 2-designs. First we recall that for a subset $A\subseteq G$ of a finite group $(G,+)$ the \emph{development} $\dev(A)$ of $A$ is an incidence structure, whose points are the elements in $G$, and whose blocks are the translates $A + g := \{a+g:a\in A \}$, where $g\in G$.
	
	\begin{definition}
		For a Boolean function $f$ on $\F_2^n$, we define the following two incidence structures:
		\begin{itemize}
			\item $\dev(\support{f})$, where $\support{f}:=\{ \xBold \in \F_2^n \colon f(\xBold)=1 \}$;
			\item $\D(f)=(\mathcal{P}, \mathcal{B})$ with $\mathcal{P}:=\{ \xBold : \xBold\in\F_2^n \}$ and $\mathcal{B}:=\{ \support{f\oplus l} \colon l\in\mathcal{A}_n \mbox{ with } \wt(f\oplus l)=\wt(f) \}$.
		\end{itemize}		
	\end{definition}
	
	For an arbitrary Boolean function $f$ on $\F_2^n$ the defined incidence structures $\dev(\support{f})$ and $\D(f)$ are not necessarily 2-designs. However, in the case if a function $f$ is bent, both of them are 2-designs with the classical Hadamard parameters. For this reason, we will call the incidence structures $\dev(\support{f})$ and $\D(f)$ of a bent function $f$ on $\F_2^n$  the \emph{translation} and \emph{addition} designs, respectively.

	\begin{result}\label{result: Bent functions and designs}
		Let $f$ be a Boolean function on $\F_2^n$. The following statements are equivalent.
		\begin{enumerate}
			\item The function $f$ is bent.
			\item \cite[Chapter 3]{lander_1983} The translation design $\dev(\support{f})$ is a $2$-$(2^{n},2^{n-1}\pm 2^{n/2-1},2^{n-2} \pm 2^{n/2-1})$ design with $+$ if $f(\mathbf{0})=1$ and $-$ otherwise.
			\item \cite[Chapter 9]{Bending1993} The addition design $\D(f)$ is equal to $(\mathcal{P}(\mathcal{C}_f),\mathcal{B}_{2^{n-1}+2^{n/2-1}}(\mathcal{C}_f))$ if $f(\mathbf{0})=1$, otherwise it is equal to $(\mathcal{P}(\mathcal{C}_f),\mathcal{B}_{2^{n-1}-2^{n/2-1}}(\mathcal{C}_f))$. In turn, $\D(f)$ is a $2$-$(2^{n},2^{n-1}\pm 2^{n/2-1},2^{n-2} \pm 2^{n/2-1})$ design, respectively.
		\end{enumerate}
	\end{result}

	In the following, we use Result~\ref{result: Designs from nice automorphism groups} and connections between translation and addition designs, in order to characterize quadratic Boolean bent functions in terms of the 2-transitivity of the automorphism group.
	
	\begin{theorem}\label{proposition: Aut group of the quadratic bent function}
		Let $f$ be a Boolean bent function on $\F_2^n$. The automorphism groups of linear codes $\mathcal{C}_f$ and $\mathcal{C}^\perp_f$ are 2-transitive if and only if $f$ is quadratic. Consequently, the linear codes $\mathcal{C}_f$ and $\mathcal{C}^\perp_f$ of quadratic Boolean bent functions $f$ on $\F_2^n$ support 2-designs.
	\end{theorem}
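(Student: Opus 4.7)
The plan is to use the classification of quadratic Boolean bent functions to reduce to the canonical form $q(\xBold) = x_1x_2 + x_3x_4 + \cdots + x_{n-1}x_n$. Since EA-equivalence of $(n,1)$-functions corresponds to permutation equivalence of the associated linear codes, the automorphism groups of $\mathcal{C}_f$ and $\mathcal{C}_q$ are conjugate, so I may assume without loss of generality that $f = q$. Under this identification the codewords of $\mathcal{C}_q$ correspond bijectively to the functions in $V := \spa\{1, x_1, \ldots, x_n, q\}$, and $\Aut(\mathcal{C}_q)$ is exactly the group of permutations $\pi$ of $\F_2^n$ such that $V \circ \pi = V$.

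The first step is to argue that every element of $\Aut(\mathcal{C}_q)$ is an affine permutation of $\F_2^n$. Inspecting the weight distribution from Corollary~\ref{corollary: Coding-theoretic characterization of bent functions}, the codewords of weights $0$, $2^{n-1}$ and $2^n$ are precisely those arising from $\spa\{1, x_1, \ldots, x_n\}$, whereas the remaining codewords $q + a$ with $a \in \mathcal{A}_n$ all have the two bent weights $2^{n-1} \pm 2^{n/2-1}$. Since code automorphisms preserve weights, they stabilise this first-order Reed--Muller-like subcode, whose automorphism group is the affine group $\F_2^n \rtimes \mathrm{GL}(n,2)$. Hence $\Aut(\mathcal{C}_q) \subseteq \F_2^n \rtimes \mathrm{GL}(n,2)$.

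The heart of the proof is to determine which affine permutations preserve $\mathcal{C}_q$. Translations $\tau_\aBold\colon \xBold \mapsto \xBold + \aBold$ work because $q(\xBold+\aBold) - q(\xBold)$ is affine in $\xBold$ by the bilinear-form expansion. For a linear $B \in \mathrm{GL}(n,2)$, the condition $V \circ B = V$ reduces to $q \circ B - q \in V$, and since this difference has no constant term it amounts to $q \circ B - q$ being an $\F_2$-linear function. The crucial characteristic-$2$ trick is that a quadratic form on $\F_2^n$ reduces to an $\F_2$-linear function if and only if its symmetric bilinear form vanishes identically, because modulo $x_i^2 + x_i$ all squares become linear. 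Therefore the condition on $B$ is $B^T M B = M$, where $M$ is the alternating form associated with $q$, which identifies $B$ with an element of the symplectic group $\mathrm{Sp}(n,2)$. This yields $\Aut(\mathcal{C}_q) = \F_2^n \rtimes \mathrm{Sp}(n,2)$; the main conceptual obstacle is recognising that the correct group is the full symplectic group rather than the strictly smaller orthogonal group of exact stabilisers of $q$.

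To finish I would invoke the classical fact, provable via Witt's extension theorem, that $\mathrm{Sp}(n,2)$ acts transitively on $\F_2^n \setminus \{\oBold\}$; combined with the transitivity of the translation subgroup $\F_2^n$, this makes $\F_2^n \rtimes \mathrm{Sp}(n,2)$ act $2$-transitively on $\F_2^n$. Since $\Aut(\mathcal{C}_f^\perp) = \Aut(\mathcal{C}_f)$, the same conclusion holds for the dual code, and the statement that both codes support $2$-designs follows immediately from Result~\ref{result: Designs from nice automorphism groups}.
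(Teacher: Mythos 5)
Your proof is correct, but it takes a genuinely different route from the paper. The paper does not compute $\Aut(\mathcal{C}_f)$ directly: it cites that the translation design $\dev(\support{f})$ of a quadratic bent function is the symplectic $2$-design with automorphism group the semidirect product of the translation group with $\mathrm{Sp}(n,2)$ (Lander), invokes Kantor's result that this group is $2$-transitive, then uses Bending's theorem that for quadratic bent functions the translation design $\dev(\support{f})$ and the addition design $\D(f)$ are isomorphic, and finally observes that the incidence matrix of $\D(f)$ is a generator matrix of $\mathcal{C}_f$, so the code inherits the $2$-transitive group. You instead normalise $f$ to the canonical form $q$, pin $\Aut(\mathcal{C}_q)$ inside $\mathrm{AGL}(n,2)$ via the weight distribution and the classical fact that $\Aut(\mathcal{RM}(n,1))=\mathrm{AGL}(n,2)$, and then compute the stabiliser condition $B^TMB=M$ explicitly, arriving at $\F_2^n\rtimes \mathrm{Sp}(n,2)$ from first principles; your observation that the relevant group is the full symplectic group rather than the orthogonal stabiliser of $q$ is exactly the right point, and the transitivity of $\mathrm{Sp}(n,2)$ on nonzero vectors closes the argument. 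What your approach buys is self-containedness (no reliance on Bending's translation-versus-addition isomorphism, which is special to quadratic bent functions, nor on Kantor's result) plus the exact automorphism group as a by-product, which is precisely what the paper needs for the order formula in its subsequent corollary; what the paper's approach buys is brevity and placement of the result within known design theory. The only things you lean on without proof are standard: the EA-classification of quadratic bent functions into the single class of $q$, and Witt-type transitivity for the symplectic group.
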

	\begin{proof}
		Let $f$ be a quadratic Boolean bent function on $\F_2^n$. Without loss of generality we assume that $f(0)=0$, in this way, $\dev(\support{f})$ is a $2$-$(2^{n},2^{n-1}- 2^{n/2-1},2^{n-2} - 2^{n/2-1})$  design by Result~\ref{result: Bent functions and designs}. By~\cite[Lemma 2.3]{Cusick2011} any two Boolean functions $g$ and $g'$ on $\F_2^n$ are affine equivalent (i.e., there exists an affine permutation $A$ of $\F_{2}^{n}$ such that $g=g'\circ A$) if and only if $\#\support{g}=\#\support{g'}$. Consequently, for a quadratic Boolean bent function $f$ on $\F_2^n$ with $f(0)=0$ its translation design $\dev(\support{f})$ is isomorphic to the symplectic $2$-$(2^{n},2^{n-1}- 2^{n/2-1},2^{n-2} - 2^{n/2-1})$ design $S$, which is realized as the translation design of the ``dot product'' quadratic bent function on $\F_2^n$, see~\cite[Chapter 3]{lander_1983}. The automorphism group of $S$, being a semidirect product of the translation group $\Sigma$ of the affine space $\mbox{AG}(n,2)$ with the symplectic group $\mbox{Sp}(n,2)$, is 2-transitive, as it was shown by Kantor~\cite{KANTOR197543}. Moreover, the symplectic $2$-$(2^{n},2^{n-1}- 2^{n/2-1},2^{n-2} - 2^{n/2-1})$ design $S$ with 2-transitive automorphism group is unique up to isomorphism for any even $n$, as it was shown by Kantor~\cite{Kantor1985}. Further, for a quadratic bent function $f$ on $\F_2^n$ its translation $\dev(\support{f})$ and addition $\D(f)$ designs are isomorphic as it was shown by Bending~\cite[Theorem 11.9]{Bending1993}. The characterization of quadratic bent functions now follows from the fact that EA-equivalence of Boolean bent functions $f$ and $f'$ on $\F_2^n$ and isomorphism of the addition designs $\D(f)$ and $\D(f')$ are the same concepts, see~\cite{DillonSchatz1987} and~\cite[Corollary 10.6]{Bending1993}. Finally, since the incidence matrix $\mathbf{M}(\D(f))$ is a generator matrix of the linear code $\mathcal{C}_f$, we have that $\Aut(\mathcal{C}_f)$ as well as $\Aut(\mathcal{C}_f^\perp)$ for a Boolean bent function $f$ on $\F_2^n$ are 2-transitive if and only $f$ is quadratic. In this way, by Result~\ref{result: Designs from nice automorphism groups}, the linear codes $\mathcal{C}_f$ and $\mathcal{C}_f^\perp$ of quadratic Boolean bent functions $f$ on $\F_2^n$  support 2-designs. In particular, the incidence structure supported by codewords of weight 4 in $\mathcal{C}^\perp_f$, i.e., the vanishing flats $\VF(f)$, is a 2-design.
	\end{proof}
	\begin{corollary}
		Let $f$ be a quadratic bent function on $\F_2^n$ with $n=2k$. Then the order of the automorphism group of the function $f$ is given by
		\begin{equation}\label{equation: Aut group of the quadratic bent function}
		|\Aut(f)|=2^{n} \cdot 2^{k^2} \prod_{i=1}^k (2^{2i} - 1).
		\end{equation}
		Here by $\Aut(f)$ we mean one of the following groups: $\Aut(\D(f))$, $\Aut(\VF(f))$, $\Aut(\mathcal{C}_f)$ and $\Aut(\mathcal{C}_f^\perp)$.
	\end{corollary}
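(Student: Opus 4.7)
The plan is to read off the desired order directly from the structural description of the automorphism group used in the proof of Theorem~\ref{proposition: Aut group of the quadratic bent function}. There, Kantor's theorem identifies $\Aut(\dev(\support{f}))$ with the semidirect product $\Sigma \rtimes \mbox{Sp}(n,2)$, where $\Sigma$ is the translation group of $\mbox{AG}(n,2)$ acting faithfully and regularly on the $2^n$ points. Hence $|\Aut(\dev(\support{f}))| = |\Sigma|\cdot|\mbox{Sp}(n,2)| = 2^n \cdot |\mbox{Sp}(n,2)|$, and substituting the classical Dickson formula $|\mbox{Sp}(2k,2)| = 2^{k^{2}} \prod_{i=1}^{k}(2^{2i}-1)$ gives exactly the right-hand side of~\eqref{equation: Aut group of the quadratic bent function}.

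It then remains to show that all four groups listed in the statement have this common order. For a quadratic bent $f$, Bending's isomorphism $\dev(\support{f}) \cong \D(f)$, already invoked in the preceding proof, yields $\Aut(\D(f)) \cong \Aut(\dev(\support{f}))$. From that same proof, $\mathbf{M}(\D(f))$ is a generator matrix of $\mathcal{C}_f$: every column permutation stabilising $\mathcal{C}_f$ permutes the weight-$\wt(f)$ codewords and therefore the blocks of $\D(f)$, and conversely, since those codewords span $\mathcal{C}_f$, any column permutation stabilising the block set of $\D(f)$ stabilises $\mathcal{C}_f$. Thus $\Aut(\mathcal{C}_f) = \Aut(\D(f))$, and $\Aut(\mathcal{C}_f) = \Aut(\mathcal{C}_f^\perp)$ is the standard permutation-automorphism identity.

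The most delicate step is $\Aut(\VF(f))$. The inclusion $\Aut(\mathcal{C}_f^\perp) \subseteq \Aut(\VF(f))$ is immediate from $\VF(f) = (\mathcal{P}(\mathcal{C}_f^\perp), \mathcal{B}_4(\mathcal{C}_f^\perp))$, but the reverse inclusion is the main obstacle: it requires that the weight-$4$ codewords of $\mathcal{C}_f^\perp$ linearly span the whole dual code. By Corollary~\ref{corollary: Coding-theoretic characterization of bent functions} the dual has parameters $[2^n, 2^n-n-2, 4]$, and by Dickson's classification every quadratic bent function is EA-equivalent to the canonical form $x_1x_2 \oplus \cdots \oplus x_{n-1}x_n$; for this canonical form I would exhibit an explicit family of weight-$4$ codewords of $\mathcal{C}_f^\perp$ of rank $2^n-n-2$, arising from the quadratic relations $x_{2i-1}x_{2i}(x_{2j-1} \oplus x_{2j}) = 0$-type identities. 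Once this spanning assertion is verified, every column permutation preserving $\VF(f)$ preserves $\mathcal{C}_f^\perp$, and all four groups share the common order $2^n \cdot 2^{k^2} \prod_{i=1}^{k}(2^{2i}-1)$, completing the proof.
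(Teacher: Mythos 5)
Your overall route is the one the paper itself relies on: the order is read off from the structure $\Sigma\rtimes\mathrm{Sp}(n,2)$ of $\Aut(\dev(\support{f}))$ together with the Dickson formula $|\mathrm{Sp}(2k,2)|=2^{k^2}\prod_{i=1}^{k}(2^{2i}-1)$, and the four groups are identified via Bending's isomorphism $\dev(\support{f})\cong\D(f)$, the generator-matrix/spanning argument giving $\Aut(\mathcal{C}_f)=\Aut(\D(f))$, and the standard identity $\Aut(\mathcal{C}_f)=\Aut(\mathcal{C}_f^\perp)$ for permutation automorphisms. The paper states the corollary without a separate proof, and these are precisely the ingredients of the proof of Theorem~\ref{proposition: Aut group of the quadratic bent function}; those parts of your argument are correct.

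The one genuine gap is the step you yourself flag: to get $\Aut(\VF(f))=\Aut(\mathcal{C}_f^\perp)$ you need the weight-$4$ codewords of $\mathcal{C}_f^\perp$ to span the whole dual code, and you only promise to exhibit a spanning family ``arising from quadratic relations'' without producing it; as written this cannot be checked (the paper verifies the analogous spanning statement only computationally for $n=6$, in Theorem~\ref{theorem: EA-equivalence via vanishing flats}). The assertion is true, and can be proved without passing to Dickson's normal form. For quadratic $f$ the blocks of $\VF(f)$ are exactly the cosets $\xBold+\langle\aBold,\bBold\rangle$ of the totally isotropic $2$-spaces of the symplectic form $B(\aBold,\bBold)=D_{\aBold,\bBold}f$. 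If $g$ is orthogonal to all of their characteristic vectors, then $D_{\aBold,\bBold}g\equiv 0$ whenever $B(\aBold,\bBold)=0$; fixing $\aBold$ and letting $\bBold$ range over the hyperplane $\aBold^{\perp}$ shows $D_{\aBold}g(\xBold)=\epsilon_{\aBold}\oplus\delta_{\aBold}B(\aBold,\xBold)$, so $g$ is quadratic with polarization $B_g(\aBold,\cdot)=\delta_{\aBold}B(\aBold,\cdot)$. Symmetry of $B_g$ forces $\delta_{\aBold}=\delta_{\bBold}$ whenever $B(\aBold,\bBold)\neq 0$, and connectedness of the non-orthogonality graph gives a common value $\delta$, whence $g\oplus\delta f$ has identically zero polarization and is therefore affine, i.e.\ $g\in\mathcal{RM}(n,1)+\langle f\rangle=\mathcal{C}_f$. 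Consequently the weight-$4$ codewords span $\mathcal{C}_f^\perp$, and with this lemma supplied your proof is complete.
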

	
	\begin{remark}\label{remark: Vanishing flaats of (6,m)-bent functions}
		In general, an $(n,m)$-bent function does not necessarily need to have a 2-transitive automorphism group. Using a Magma program~\cite{MR1484478} and the list of representatives of the equivalence classes (denoted by $F^m_i$) of $(6,m)$-bent functions, obtained in~\cite{polujan2020design}, it is possible to check that, up to EA-equivalence, the only $(6,m)$-bent functions, which have a 2-transitive automorphism group, are the following quadratic functions: $F^1_1,F^2_1,F^3_1$ and $F^3_3$. Since the automorphism group of the quadratic function $F^3_2$ is not 2-transitive, we conclude that the property of an $(n,m)$-bent function to be quadratic does not in general imply the 2-transitivity of its automorphism group.
	\end{remark}
	
	In this way, we conclude that 2-transitivity of the automorphism group is a characteristic property of quadratic Boolean bent functions. However, despite the fact that all Boolean bent functions but quadratic on $\F_2^n$  and many vectorial bent functions in a small number of variables do not have a 2-transitive automorphism group, their linear codes (and in general of all $(n,m)$-bent functions) still support 2-designs due to the extended Assmus-Mattson theorem, see ~\cite[Example 4]{TangDing2019}. In the following subsection, we provide another important class of vectorial Boolean functions, whose linear codes support 2-designs.
	\subsection{Designs from APN functions with the classical Walsh spectrum}\label{subsection: Designs from APN functions}
	Recently, Tang, Ding and Xiong~\cite{TangDing2019} proved that not only the vanishing flats $\VF(F)$ of differentially two-valued $s$-plateaued $(n,n)$-functions $F$ are 2-designs, but that the linear codes $\mathcal{C}_F$ and $\mathcal{C}_F^\perp$ support 2-designs.

\begin{result}\cite[Theorem 6.4]{TangDing2019}\label{result: 2-designs from s-plateaued functions}
	Let $F$ be a differentially two-valued $s$-plateaued $(n,n)$-function. Then the code $\mathcal{C}_F$ and its dual $\mathcal{C}_F^{\perp}$ support 2-designs.
\end{result}
\noindent As a corollary, this result implies that the linear codes $\mathcal{C}_F$ and $\mathcal{C}^\perp_F$ of all AB functions $F$ on $\F_2^n$, which belong to the class of APN functions with the classical Walsh spectrum, support 2-designs. In the following statement, we provide another subclass of APN functions $F$ on $\F_2^n$ with the classical Walsh spectrum, whose linear codes $\mathcal{C}_F$ and $\mathcal{C}^\perp_F$ support 2-designs.
\begin{definition}
	Let $F$ be an APN function on $\F_2^n$. Without loss of generality we assume that $F(0)=0$. We say that the function $F$ has \emph{the classical Walsh spectrum}, if it is either \emph{AB}, i.e., $n$ is odd and 
	\begin{equation}
	\Lambda_{F}=\left\{ *  2^{n}[1], \  0\left[\left(2^{n-1}+1\right)\left(2^{n}-1\right)\right], \ \pm 2^{(n+1) / 2}\left[\left(2^{n}-1\right)\left(2^{n-2} \pm 2^{(n-3) / 2}\right)\right] *\right\},
	\end{equation}
	or $n$ is even and $F$ has the Walsh spectrum of the Gold APN functions $f\colon x\in\F_{2^n}\mapsto x^{2^i+1}$ with $\gcd(i,n)=1$, namely,
	\begin{equation}
	\begin{array}{c}
	\Lambda_{F}=\left\{*  2^{n}[1], \ 0\left[\left(2^{n}-1\right)\left(2^{n-2}+1\right)\right], \ \pm 2^{(n+2) / 2}\left[\frac{1}{3}\left(2^{n}-1\right)\left(2^{n-3} \pm 2^{(n-4)/2}\right)\right],\right. \\
	\left.\pm 2^{n / 2}\left[\frac{2}{3}\left(2^{n}-1\right)\left(2^{n-1} \pm 2^{(n-2) / 2}\right)\right] *\right\}.
	\end{array}
	\end{equation}
\end{definition}
Recall that the \emph{dual} of a Boolean bent function $f\colon \F_2^n\rightarrow\F_2$ is a bent function $\tilde{f}\colon\F_2^n\rightarrow\F_2$, defined by $\hat{\chi}_f(\aBold)=2^{n/2}(-1)^{\tilde{f}(\aBold)}$. Bending in his thesis~\cite{Bending1993} showed, how one can construct an incidence matrix of the addition design $\D(f)$ of a bent function $f$ on $\F_2^n$ with the help of the dual function $\tilde{f}$ of $\F_2^n$.
\begin{result}\label{result: Bendings designs}
	\cite[Theorem 9.6]{Bending1993} Let $f$ be a bent function on $\F_2^n$. An incidence matrix $\mathbf{M}(\D(f))$ of the addition design $\D(f)$ can be constructed in the following way
	\begin{equation}\label{equation: Incidence matrix of the addition design}
	\mathbf{M}(\D(f))= (m_{\xBold,\yBold})_{\xBold,\yBold\in\F_2^n},\mbox{ where } m_{\xBold,\yBold}=\tilde{f}(\xBold)\oplus f(\yBold)\oplus \langle\xBold,\yBold\rangle_n\oplus\tilde{f}(\mathbf{0}).
	\end{equation}
\end{result}
With the use of this result and the extended Assmus-Mattson theorem we proceed with the main result of the section.
\begin{theorem}\label{theorem: APNs with classical Walsh spectrum support 2-designs}
	Let $F$ be an APN function on $\F_2^n$ with $n=2k$, which has the classical Walsh spectrum. If the function $F$ is CCZ-equivalent to a function $F'$ on $\F_2^n$ having only bent and semi-bent nonzero components, then the linear codes $\mathcal{C}_F$ and $\mathcal{C}^{\perp}_F$ support $2$-designs.
\end{theorem}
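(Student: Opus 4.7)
The plan is a case analysis on the parity of $n$, invoking the Extended Assmus--Mattson Theorem (Result~\ref{result: Extended Assmus-Mattson theorem}) in the harder case. For $n$ odd, an APN function with the classical Walsh spectrum is by definition AB, so it is simultaneously differentially two-valued (by APN-ness) and $1$-plateaued, since all Walsh values lie in $\{0,\pm 2^{(n+1)/2}\}$ with the single amplitude $2^{(n+1)/2}$. Hence Result~\ref{result: 2-designs from s-plateaued functions} applies directly, and there is nothing more to prove in this case.

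For $n$ even, the Gold-like Walsh spectrum contains two distinct nonzero amplitudes $2^{n/2}$ and $2^{(n+2)/2}$, so $F$ is not $s$-plateaued for a single $s$ and Result~\ref{result: 2-designs from s-plateaued functions} cannot be invoked. Instead I apply the Extended Assmus--Mattson Theorem to $\mathcal{C}_F$ with $t=2$. Reading off the spectrum, $\mathcal{C}_F$ has length $v=2^n$, minimum distance $d=2^{n-1}-2^{n/2}$, dual minimum distance $d^{\perp}=6$ (Corollary~\ref{corollary: Coding-theoretic characterization of APN functions}), and exactly five internal nonzero weights $\{2^{n-1}\pm 2^{n/2},\,2^{n-1}\pm 2^{n/2-1},\,2^{n-1}\}$ in $\{d,\ldots,v-2\}$. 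Since $d^{\perp}=6$, the dual block sets $\mathcal{B}_{\ell^{\perp}}(\mathcal{C}_F^{\perp})$ are empty (hence trivially $2$-designs) for $\ell^{\perp}\in\{1,\ldots,5\}$, so I take $s=4$; the dual-side hypotheses then hold automatically, and choosing $S$ to be a $4$-subset of $\{d,\ldots,v-2\}$ containing four of the five internal weights reduces the entire proof to verifying the $2$-design property of $\mathcal{B}_{\ell^{*}}(\mathcal{C}_F)$ at one single remaining internal weight $\ell^{*}$.

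I would take $\ell^{*}=2^{n-1}$. Its codewords come either from the Reed--Muller part ($\bBold=\mathbf{0}$, $\aBold\neq\mathbf{0}$) or from pairs $(\aBold,\bBold)$ with $\bBold\neq\mathbf{0}$ and $\hat{\chi}_F(\aBold,\bBold)=0$. Counting those whose support contains a fixed pair $\{x_1,x_2\}$ and using the identity $\mathbf{1}[u=0]=\tfrac{1}{2}(1+(-1)^u)$ gives
\[
\lambda_{\{x_1,x_2\}}=(2^{n-1}-1)+\tfrac{1}{2}\bigl(|Z|+\Phi(a,b)\bigr),
\]
where $Z=\{(\aBold,\bBold)\colon \bBold\neq\mathbf{0},\ \hat{\chi}_F(\aBold,\bBold)=0\}$, $(a,b)=(x_1\oplus x_2,\,F(x_1)\oplus F(x_2))$, and $\Phi(a,b)=\sum_{(\aBold,\bBold)\in Z}(-1)^{\langle\aBold,a\rangle_{n}\oplus\langle\bBold,b\rangle_{m}}$. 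Since $|Z|$ is pinned down by the classical spectrum, the $2$-design condition reduces to proving that $\Phi(a,b)$ is constant on the APN image $\{(a,b)\colon a\neq\mathbf{0},\ \delta_F(a,b)=2\}$.

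The main obstacle is exactly this constancy of $\Phi$. My plan is to interpolate the indicator $\mathbf{1}[\hat{\chi}_F=0]$ as a degree-two polynomial in $\hat{\chi}_F^{\,2}$, exploiting that only the three squared values $0,2^n,2^{n+2}$ occur, thereby reducing $\Phi(a,b)$ to the restricted Walsh moments $\sum_{\bBold\neq \mathbf{0},\aBold}\hat{\chi}_F(\aBold,\bBold)^{k}(-1)^{\langle\aBold,a\rangle_{n}\oplus\langle\bBold,b\rangle_{m}}$ for $k=2,4$. Standard Fourier manipulations rewrite these moments in terms of counts of $2$- and $4$-tuples with prescribed differential, which are rigidly pinned down by APN-ness ($A_{4}(\mathcal{C}_F^{\perp})=0$, Corollary~\ref{corollary: Coding-theoretic characterization of APN functions}) and by the fixed third and fourth moments of the classical spectrum; I expect the resulting expression for $\Phi(a,b)$ to depend on $(a,b)$ only through $\delta_F(a,b)$, which equals $2$ throughout the APN image. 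If the computation proves delicate at $\ell^{*}=2^{n-1}$, the same reduction works at any of the four other internal weights (only one such verification is needed), with $\ell^{*}=d$ being a convenient alternative whose codewords come uniformly from the single extremal Walsh value $\pm 2^{(n+2)/2}$.
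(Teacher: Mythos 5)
Your overall framing matches the paper's: the odd case is dispatched exactly as in the paper via Result~\ref{result: 2-designs from s-plateaued functions}, and for $n$ even you correctly set up the extended Assmus--Mattson theorem with $t=2$, note that $d^{\perp}=6$ makes the dual-side hypotheses vacuous, and identify the five internal weights $\{2^{n-1},\,2^{n-1}\pm 2^{n/2},\,2^{n-1}\pm 2^{n/2-1}\}$. You also correctly see that $s=5$ is unavailable (it would force $\mathcal{B}_{6}(\mathcal{C}_F^{\perp})$ to be a $2$-design as a hypothesis), so at least one primal weight must be verified by hand. The problem is that this one remaining verification -- the entire substance of the even case -- is left as an expectation rather than a proof, and the route you sketch does not close.

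Concretely: at $\ell^{*}=2^{n-1}$ your reduction is sound as far as it goes ($M_0=0$, and $M_2(a,b)=2^{2n}(\delta_F(a,b)-1)$ is indeed constant on the APN image), but the interpolation of $\mathbf{1}[\hat{\chi}_F=0]$ through the three squared values $0,2^n,2^{n+2}$ leaves $\Phi(a,b)$ depending on the \emph{localized} fourth moment $M_4(a,b)$, equivalently on $N_4(a,b)=|\{(\xBold_1,\dots,\xBold_4):\bigoplus \xBold_i=a,\ \bigoplus F(\xBold_i)=b\}|$. Writing $N_4(a,b)=2^{n+2}+4\,T(a,b)$, where $T(a,b)$ counts pairs of elements of $\{(c,d):c\neq\mathbf{0},\ \delta_F(c,d)=2\}$ differing by $(a,b)$, shows this is a genuine second-order structural property of the differential set, not something ``rigidly pinned down'' by $A_4(\mathcal{C}_F^{\perp})=0$ and the global Walsh moments: global moments fix $\sum_{a,b}N_4(a,b)$ but not its pointwise values. (The reason the AB case is easy is precisely that a single nonzero amplitude forces $\hat{\chi}^4=2^{n+1}\hat{\chi}^2$ pointwise, so $M_4$ collapses to $M_2$; with two amplitudes this collapse fails, and the constancy of $N_4$ on the APN image is essentially equivalent to the $2$-design conclusion you are trying to establish.) The fallback $\ell^{*}=d$ is worse, since separating $+2^{(n+2)/2}$ from $-2^{(n+2)/2}$ drags in the localized third moment as well. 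The paper avoids this entirely by choosing $S=\{2^{n-1},\,2^{n-1}\pm2^{n/2}\}$ (so $s=3$) and verifying the two \emph{bent-component} weights $2^{n-1}\pm2^{n/2-1}$ instead: by Bending's theorem (Result~\ref{result: Bendings designs}) the addition design of each bent component $F_{\bBold}$ is a $2$-$(2^{n},2^{n-1}\pm2^{n/2-1},2^{n-2}\pm2^{n/2-1})$ design, and $\mathcal{B}_{\ell}(\mathcal{C}_F)$ at these weights is a block-disjoint union of these designs over the bent components. That is the key lemma missing from your argument; to repair your proof you should either import it (and verify at the bent weights), or supply an actual proof that $N_4(a,b)$ is constant on $\{(a,b):\delta_F(a,b)=2\}$, which I do not believe follows from the stated spectral data by elementary manipulations.
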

\begin{proof}
	We show that the conditions of the extended Assmus-Mattson theorem (Result~\ref{result: Extended Assmus-Mattson theorem}) are fulfilled for
	linear codes $\mathcal{C}_F$ and $\mathcal{C}^\perp_{F'}$ with integers $t=2$ and $s:=|S|=3$, where the set $S$ is defined as follows $S:=\{ 2^{n-1}, 2^{n-1}\pm 2^{n/2} \}$. Since nonzero component functions of $F'$ are bent and semi-bent, we have that for any codeword $\cBold\in\mathcal{C}_{F'}\setminus\{\oBold,\mathbf{j}_{2^n} \}$ (where $\mathbf{j}_{2^n}$ denotes the all-one-vector of length $2^n$) holds
	\begin{equation*}
		\wt(\cBold)\in W:= \{ 2^{n-1}\pm 2^{n/2-1}, 2^{n-1}, 2^{n-1}\pm 2^{n/2} \},
	\end{equation*}
	since possible Hamming weights of bent functions are $2^{n-1}\pm 2^{n/2-1}$ and possible Hamming weights of semi-bent functions are $2^{n-1}$ and $2^{n-1}\pm 2^{n/2}$, respectively. Note that the number of bent components of $F'$ is equal to $\frac{2}{3}(2^n-1)$ and the number of semi-bent components is equal to $\frac{1}{3}(2^n-1)$.
	By Result~\ref{result: Bendings designs}, from any component function $F'_{\bBold}$ of $F'$, which is bent, one can construct 2-designs $(\mathcal{P}(\mathcal{C}_{F'_{\bBold}}),\mathcal{B}_{\ell}(\mathcal{C}_{F'_{\bBold}}))$ with $\ell\in W\setminus S= \{ 2^{n-1}\pm 2^{n/2-1} \}$ using the dual function $\tilde{F'}_{\bBold}$ as described in Result~\ref{result: Bendings designs}. Clearly, any two component bent functions $F'_{\bBold}$ and $F'_{\bBold'}$ with $\bBold\neq \bBold'$ do not differ by an affine function. In this way, we have that for $\ell\in \{ 2^{n-1}\pm 2^{n/2-1} \}$ the incidence structures $(\mathcal{P}(\mathcal{C}_{F'}),\mathcal{B}_{\ell}(\mathcal{C}_{F'}))$ are $2$-$(2^{n},2^{n-1}\pm 2^{n/2-1},\frac{2}{3}(2^n-1)\cdot(2^{n-2} \pm 2^{n/2-1}))$ designs, since they are obtained by a disjoint union of $2$-$(2^{n},2^{n-1}\pm 2^{n/2-1},$ $2^{n-2} \pm$ $2^{n/2-1})$ designs  $(\mathcal{P}(\mathcal{C}_{F'_{\bBold}}),\mathcal{B}_{w}(\mathcal{C}_{F'_{\bBold}}))$ having no repeated blocks. Since the function $F'$ is APN, the minimum distance $d^{\perp}$ of $\mathcal{C}^{\perp}_{F'}$ is equal to $d^{\perp}=6$. In this way, for any $0\le \ell^\perp \le s+t-1=4$ we have $\mathcal{B}_{\ell^\perp}(\mathcal{C}^\perp_{F'})=\varnothing$ and thus $(\mathcal{P}(\mathcal{C}^\perp_{F'}),\mathcal{B}_{\ell^\perp}(\mathcal{C}^\perp_{F'}))$ are trivial 2-designs. Since for all $\ell \in W \setminus S$ and $\ell^\perp$ with $0 \leq \ell^\perp \leq 4$, the incidence structures $\left(\mathcal{P}(\mathcal{C}_{F'}), \mathcal{B}_{\ell}(\mathcal{C}_{F'})\right)$ and
	$\left(\mathcal{P}\left(\mathcal{C}_{F'}^{\perp}\right), \mathcal{B}_{\ell^{\perp}}\left(\mathcal{C}_{F'}^{\perp}\right)\right)$ are $2$-designs, we have that the linear codes $\mathcal{C}_{F'}$ and $\mathcal{C}^{\perp}_{F'}$ support $2$-designs.
\end{proof} 

\begin{remark}
	In general, Theorem~\ref{theorem: APNs with classical Walsh spectrum support 2-designs} does not hold for all APN functions, as the following well-known example of an APN function with a nonclassical Walsh spectrum shows. Let $F$ be the following APN function on $\F_{2^6}$ from Dillon's list~\cite{browning2009apn}
	\begin{equation}
	F(x)=x^3 + a^{11} x^5 + a^{13}x^9 + x^{17} + a^{11} x^{33} + x^{48},
	\end{equation}
	where $a$ is a primitive element of $\F_{2^6}$, satisfying $a^6 + a^4 + a^3 + a + 1=0$. Using a Magma~\cite{MR1484478} program it is not difficult to check, that the Walsh spectrum of $f$, given below,
	\begin{equation*}
	\Lambda_{F}:=\{* -32[1], \ -16[96], \ -8[1288], \ 0[891], \ 8[1656], \ 16[160], \ 32[3] *\}
	\end{equation*}
	is nonclassical and the incidence structures supported by the codewords of the minimum weight are 1-designs, but not 2-designs. For instance:
	\begin{itemize}
		\item $(\mathcal{P}(\mathcal{C}_F),\mathcal{B}_{16}(\mathcal{C}_F))$ is a $1$-$(64, 16, 1)$ design with 4 blocks;
		\item $(\mathcal{P}(\mathcal{C}_F^\perp),\mathcal{B}_{6}(\mathcal{C}_F^\perp))$ is a $1$-$(64, 6, 1986)$ design with 21184 blocks.
	\end{itemize}
\end{remark}
Now, we describe a big class of APN functions with the classical Walsh spectrum, satisfying the conditions of Theorem~\ref{theorem: APNs with classical Walsh spectrum support 2-designs}. Note that this class contains most of the known examples and constructions of APN functions.
\begin{theorem}\label{theorem: Quadratic APNs with classical Walsh spectrum support 2-designs}
	Let $F$ be an APN function on $\F_2^n$ with $n=2k$, which has the classical Walsh spectrum. If the function $F$ is CCZ-equivalent to a quadratic APN function, then the linear codes $\mathcal{C}_F$ and $\mathcal{C}^{\perp}_F$ support $2$-designs.
\end{theorem}
\begin{proof}
	Let $F'$ be a quadratic APN function on $\F_2^n$ such that $F$ and $F'$ are CCZ-equivalent. From the fact that the extended Walsh spectrum is invariant under CCZ-equivalence, we have that $|\hat{\chi}_{F'}(\aBold,\bBold)|\in\left\{ 0,2^{\frac{n}{2}},2^{\frac{n+2}{2}} \right\}$ for all $\aBold\in\F_2^n$, $\bBold\in\F_2^m\setminus\{\mathbf{0}\}$. Since the function $F'$ is quadratic, it is plateaued, and thus for any component function $F'_\bBold$ with $\bBold\in\F_2^m\setminus\{ \mathbf{0} \}$, we have $|\hat{\chi}_{F'_\bBold}(\aBold)|\in\left\{ 2^{\frac{n}{2}} \right\}$ for all $\aBold\in\F_2^n$ if and only if $F'_\bBold$ is bent, and  $|\hat{\chi}_{F'_\bBold}(\aBold)|\in\left\{ 0,2^{\frac{n+2}{2}} \right\}$ for all $\aBold\in\F_2^n$ if and only if $F'_\bBold$  is semi-bent. In this way, nonzero components of $F'$ are bent and semi-bent. By Result~\ref{theorem: Support designsare  CCZ-invariant} and Theorem \ref{theorem: APNs with classical Walsh spectrum support 2-designs}, the linear codes $\mathcal{C}_F$ and $\mathcal{C}^{\perp}_F$ support $2$-designs, since functions $F$ and $F'$ are CCZ-equivalent.
\end{proof}
With Theorem~\ref{theorem: Quadratic APNs with classical Walsh spectrum support 2-designs}, we derive the following sufficient conditions for an APN function with the classical Walsh spectrum to be CCZ-inequivalent to a quadratic function.

\begin{corollary}\label{theorem: Sufficient conditions for CCZ-ineq to a qaudratic}
	Let $F$ be an APN function on $\F_2^n$ with $n=2k$, which has the classical Walsh spectrum.
	\begin{enumerate}
		\item If there exists an integer $\ell$, satisfying $1<\ell<2^n$ such that the incidence structure $\left(\mathcal{P}(\mathcal{C}_F), \mathcal{B}_{\ell}(\mathcal{C}_F)\right)$ is not a 2-design, then the APN function $F$ is CCZ-inequivalent to a quadratic function.
		\item If there exists an integer $\ell^\perp$, satisfying $6<\ell^\perp<2^n$ such that the incidence structure $\left(\mathcal{P}(\mathcal{C}^\perp_F), \mathcal{B}_{\ell^\perp}(\mathcal{C}^\perp_F)\right)$ is not a 2-design, then the APN function $F$ is CCZ-inequivalent to a quadratic function.
	\end{enumerate}
	\end{corollary}

\begin{example}
	Edel and Pott~\cite{Edel_non_quadratic_APN:200959} showed that the following APN function $F$ on $\F_{2^6}$ given by 
	\begin{equation}\label{equation: Edel-Pott APN} 
		\begin{split}
			F(x) & = x^3 + a^{17}(x^{17} + x^{18} + x^{20} + x^{24}) + a^{14}(a^{18}x^9 + a^{36} x^{18} + a^9 x^{36} + x^{21} + x^{42} \\
			&+Tr(a^{27} x + a^{52} x^3 + a^6 x^5 + a^{19} x^7 + a^{28} x^{11} + a^2 x^{13})),
		\end{split}
	\end{equation}
	where $a$ is a primitive element of $\F_{2^6}$, satisfying $a^6 + a^4 + a^3 + a + 1=0$ and $Tr(x)$ denotes the absolute trace of $x\in\F_{2^6}$, is CCZ-inequivalent to a quadratic APN function using the information about the automorphism group of the incidence structure $\dev(\graph{F})$. Now we show that the function $F$ is CCZ-inequivalent to a quadratic one using Corollary~\ref{theorem: Sufficient conditions for CCZ-ineq to a qaudratic}. First, we observe that the function $F$ has the classical Walsh spectrum. The weight distribution of the linear code $\mathcal{C}_F$ is given by 
	\begin{equation*}
		W_{\mathcal{C}_F}(z)=1+336 z^{24}+2688 z^{28} + 2142 z^{32}+ 2688 z^{36} + 336 z^{40}+z^{64}.
	\end{equation*}
	With {Magma}~\cite{MR1484478}, it is possible to check that the incidence structure $\left(\mathcal{P}\left(\mathcal{C}_F\right), \mathcal{B}_{24}\left(\mathcal{C}_F\right)\right)$ supported by the codewords of the minimum weight of $\mathcal{C}_F$ is a 1-(64, 24, 126) design with 336 blocks, but not a 2-design. By Corollary~\ref{theorem: Sufficient conditions for CCZ-ineq to a qaudratic}, the function $F$ defined in~\eqref{equation: Edel-Pott APN}  is CCZ-inequivalent to a quadratic function.
\end{example}
	For more details on (vectorial) Boolean functions, their codes and incidence structures, we refer to~\cite[Chapter 3]{Polujan_PhD}.
	\section{Vanishing flats of Boolean and vectorial bent functions}\label{section 3}
Vanishing flats $\VF(F)$ of $(n,m)$-bent functions $F$, being supports of codewords of weight 4 in the linear code $\mathcal{C}^\perp_F$, are 2-designs, as it was observed in~\cite[Example 4]{TangDing2019}. In this section, we explain the combinatorial structure of vanishing flats of bent functions and, consequently, compute the parameters of these designs, i.e., we show that vanishing flats $\VF(F)$ of $(n,m)$-bent functions $F$ are $2$-$(2^n,4,2^{n-m-1}-1)$ designs. Moreover, we show that this design-theoretic condition is also sufficient for the perfect nonlinearity. The key ingredient of the proof is the characterization of $(n,m)$-bent functions in terms of relative difference sets.
\begin{definition}
	Let $(G,+)$ be a finite group of order $\mu \cdot \nu$ and $N$ be a normal subgroup of order $\nu$ of $G$. A subset $R \subseteq G$ of $G$ is called a \emph{relative $(\mu, \nu , k, \lambda)$-difference set} relative to the subgroup $N$ if $|R| = k$ and the list of differences $r-r'$ with $r, r' \in R$ contains all the elements of $G\setminus N$ exactly $\lambda$ times. Moreover, no nonzero element in $N$, the so-called \emph{forbidden subgroup}, occurs in this list of differences.
\end{definition}

\begin{result}\cite[Theorem 1]{Pott04}\label{result: Vectorial bent functions and RDS} Let $n$ be even. The following statements are equivalent.
	\begin{enumerate}
		\item An $(n,m)$-function $F$ is bent.
		\item The graph $\graph{F}\subseteq\F_2^{n}\times \F_2^{m}$ is a relative $\left(2^{n}, 2^{m}, 2^{n}, 2^{n-m}\right)$-difference set in $\F_2^{n}\times \F_2^{m}$ relative to the subgroup $N=\{(\mathbf{0}, \mathbf{y})\colon \mathbf{y}\in\F_2^m \}$.
	\end{enumerate}
\end{result}

With the help of this connection, we characterize $(n,m)$-bent functions in terms of vanishing flats having the property to be 2-designs.

\begin{theorem}\label{theorem: Bent functions via vanishing flats}
	Let $F$ be an $(n,m)$-function. The following statements are equivalent.
	\begin{enumerate}
		\item $F$ is an $(n,m)$-bent function.
		\item $\mathcal{VF}(F)$ is a $2$-$(2^n,4,2^{n-m-1}-1)$ design.
	\end{enumerate}
\end{theorem}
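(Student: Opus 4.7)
The plan is to translate the defining property of a $2$-design on $\mathcal{VF}(F)$ into a statement about the differential spectrum of $F$, and then to invoke the relative difference set characterization in Result~\ref{result: Vectorial bent functions and RDS} (equivalently, the standard characterization of bent functions via uniformly distributed derivatives).

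For the implication $(1)\Rightarrow(2)$, I would fix a pair $\{x,y\}$ of distinct points of $\F_2^n$ and count the blocks of $\mathcal{VF}(F)$ containing it. Any such block has the form $\{x,y,z,w\}$ where the second pair satisfies $z\oplus w=x\oplus y$, $F(z)\oplus F(w)=F(x)\oplus F(y)$, and $\{z,w\}\cap\{x,y\}=\varnothing$. The useful preliminary observation is that if $\{z,w\}$ shares any element with $\{x,y\}$, say $z=x$, then $z\oplus w=x\oplus y$ forces $w=y$, so $\{z,w\}=\{x,y\}$; consequently the blocks containing $\{x,y\}$ are in bijection with the unordered pairs $\{z,w\}\neq\{x,y\}$ satisfying the two equations. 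By Result~\ref{result: Vectorial bent functions and RDS}, the graph $\graph{F}$ is a relative $(2^n,2^m,2^n,2^{n-m})$-difference set with forbidden subgroup $N=\{\mathbf{0}\}\times\F_2^m$; since $x\neq y$, the element $(x\oplus y,F(x)\oplus F(y))$ lies outside $N$, so it admits exactly $2^{n-m}$ ordered representations as a sum of two graph elements, i.e.\ $2^{n-m-1}$ unordered ones. Removing the representation $\{x,y\}$ itself leaves $\lambda=2^{n-m-1}-1$ blocks, which together with $v=2^n$ and $k=4$ yields the claimed parameters.

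For the reverse direction $(2)\Rightarrow(1)$, I would run the same count backwards. The $2$-design hypothesis gives $\delta_F(x\oplus y,F(x)\oplus F(y))=2(\lambda+1)=2^{n-m}$ for every pair of distinct $x,y$, hence $\delta_F(\aBold,\bBold)=2^{n-m}$ for every nonzero $\aBold\in\F_2^n$ and every $\bBold$ in the image of the derivative $D_{\aBold}F$. The identity $\sum_{\bBold\in\F_2^m}\delta_F(\aBold,\bBold)=2^n$ then pins down $|\mathrm{Im}(D_{\aBold}F)|=2^m$, so $D_{\aBold}F$ is surjective and $\delta_F(\aBold,\bBold)=2^{n-m}$ for \emph{all} $\bBold\in\F_2^m$, which is the defining property of perfect nonlinearity and hence forces $F$ to be bent.

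The only delicate point in both directions is the careful bookkeeping between ordered and unordered pairs combined with the disjointness condition built into the block set $\mathcal{VF}_F$; once the small observation that $\{z,w\}\cap\{x,y\}\neq\varnothing$ forces $\{z,w\}=\{x,y\}$ is in place, the rest of the argument is a direct translation of a single counting identity into the $2$-design parameter $\lambda$.
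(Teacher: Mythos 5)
Your proposal is correct. The forward direction $(1)\Rightarrow(2)$ is essentially identical to the paper's argument: both fix a pair $\{x,y\}$, pass to the element $(x\oplus y,\,F(x)\oplus F(y))\notin N$, and use the relative $(2^n,2^m,2^n,2^{n-m})$-difference set property of $\graph{F}$ to count $2^{n-m-1}-1$ further unordered representations; your explicit remark that a pair $\{z,w\}$ meeting $\{x,y\}$ must equal $\{x,y\}$ is the same bookkeeping the paper performs implicitly.

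The reverse direction $(2)\Rightarrow(1)$ is where you genuinely diverge. The paper does not use the $2$-design property pointwise at all: it only computes the total number of blocks of a $2$-$(2^n,4,2^{n-m-1}-1)$ design, observes that this equals the minimum possible value of $A_4$ in $\mathcal{C}_F^\perp$ from Corollary~\ref{corollary: Coding-theoretic characterization of bent functions}, and invokes the fact that this minimum (coming from the fourth power moment of the Walsh transform) is attained exactly for bent functions. You instead read the local condition ``every pair lies in exactly $2^{n-m-1}-1$ blocks'' as $\delta_F(\aBold,\bBold)=2(\lambda+1)=2^{n-m}$ for every nonzero $\aBold$ and every $\bBold$ in the image of $D_{\aBold}F$, and then use $\sum_{\bBold}\delta_F(\aBold,\bBold)=2^n$ to force surjectivity of each derivative, giving perfect nonlinearity directly. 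Both are valid. The paper's route is shorter given its Corollary~\ref{corollary: Coding-theoretic characterization of bent functions}, and in fact proves something formally stronger (the block \emph{count} alone already forces bentness); your route is more elementary and self-contained, avoiding the Walsh fourth-moment machinery entirely and staying within the combinatorics of the differential spectrum. The only point worth making explicit in your version is that a function with $\delta_F(\aBold,\bBold)=2^{n-m}$ for all nonzero $\aBold$ and all $\bBold$ is perfect nonlinear and hence bent (which also forces $n$ even and $m\le n/2$), so the appeal to the differential characterization closes the loop.
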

\begin{proof}
	\emph{1.$\Rightarrow$2.} Let $F$ be an $(n,m)$-bent function and let $\VF(F)=(\mathcal{P},\mathcal{B})$ be the vanishing flats of $F$. We will show that any two different points $\xBold_1,\xBold_2$ of $\mathcal{P}$ are contained in exactly $2^{n-m-1}-1$ blocks of $\mathcal{B}$. We define $\aBold:=\xBold_1\oplus\xBold_2$ and let $\vBold:=F(\xBold_1)\oplus F(\xBold_2)$. Since the graph $\graph{F}$ is a $\left(2^{n}, 2^{m}, 2^{n}, 2^{n-m}\right)$-difference set in the group $G=\F_2^{n} \times \F_2^{m}$ relative to the forbidden subgroup $N=\{(\mathbf{0}, \mathbf{y})\colon \mathbf{y}\in\F_2^m \}$, the element  $g:=\graphElt{\aBold}{\vBold} \in G\setminus N$ has $2^{n-m}-2$ further representations
	\begin{equation}\label{equation: Combiatorial view point on vanishing flats}
		\graphElt{\xBold_1}{F(\xBold_1)}\oplus \graphElt{\xBold_2}{F(\xBold_2)}=g=\graphElt{\xBold_3}{F(\xBold_3)}\oplus \graphElt{\xBold_4}{F(\xBold_4)}
	\end{equation}
	with $\{ \xBold_3,\xBold_4 \} \neq \{ \xBold_1,\xBold_2 \}$. In this way, any 2-subset $ \{ \xBold_1,\xBold_2 \}$ is contained in exactly $2^{n-m-1}-1$ blocks $\{ \xBold_1,\xBold_2,\xBold_3,\xBold_4 \}$, satisfying
	\begin{equation}\label{equation: Property of vanishing flats}
		\graphElt{\xBold_1}{F(\xBold_1)}\oplus \graphElt{\xBold_2}{F(\xBold_2)}\oplus \graphElt{\xBold_3}{F(\xBold_3)}\oplus \graphElt{\xBold_4}{F(\xBold_4)}=\graphElt{\mathbf{0}}{\mathbf{0}},
	\end{equation}
	from what follows that $\mathcal{VF}(F)$ is a $2$-$(2^n,4,2^{n-m-1}-1)$ design.
	
	\noindent\emph{2.$\Rightarrow$1.} The statement follows from the fact, that the obtained number of blocks~\eqref{equation: The number of vanishing flats of an (n,m)-bent function} of a $2$-$(2^n,4,2^{n-m-1}-1)$ design $\mathcal{VF}(F)$ corresponds to the minimum possible value \eqref{equation: Bentness via CF dual} of the weight 4 codewords $A_4$ in the linear code $\mathcal{C}^\perp_F$, which is attained if and only if $F$ is an $(n,m)$-bent function.
\end{proof}

\begin{remark}\label{remark: Number of vanishing flats of a bent function}
	The proof of the previous statement gives a constructive combinatorial way to determine the number
	of vanishing flats $|\mathcal{VF}_{F}|$ of $F$ as follows:
	\begin{itemize}
		\item there are $2^{n+m}-2^{m}$ ways to pick an element $g\in G\setminus N$;
		\item for a selected element $g$, there exist $2^{n-m}$ ways to choose the left-hand side in~\eqref{equation: Combiatorial view point on vanishing flats}, what gives $2^{n-m}-2$ remaining choices of the right-hand side;
		\item in this way, we have $\left(2^{n+m}-2^m\right) \cdot 2^{n-m} \cdot \left(2^{n-m}-2\right)$ ``ordered'' vanishing flats of $F$, i.e., quadruples $(\xBold_1,\xBold_2,\xBold_3,\xBold_4)$ with the property~\eqref{equation: Property of vanishing flats} . 
	\end{itemize}
	Dividing this number by 24, we get that the number of blocks of $\VF(F)$ is equal to
	\begin{equation}\label{equation: The number of vanishing flats of an (n,m)-bent function}
		|\mathcal{VF}_{F}|=\frac{\left(2^{n+m}-2^m\right) \cdot 2^{n-m} \cdot \left(2^{n-m}-2\right)}{24},
	\end{equation}
	which after expanding and simplifying coincides with the value in \eqref{equation: Bentness via CF dual}.
\end{remark}
As it was mentioned in Result~\ref{theorem: Support designsare  CCZ-invariant}, vanishing flats are invariants under CCZ-equivalence for $(n,m)$-functions. Conversely, two $(n,m)$-functions $F,F'$ for which the vanishing flats $\VF(F)$ and $\VF(F')$ are isomorphic, are not necessarily CCZ-equivalent, as the case of APN functions shows, since the obtained incidence structures are trivial. However, according to our computational results, the converse is also true for $(6,m)$-bent functions.
\begin{theorem}\label{theorem: EA-equivalence via vanishing flats}
	Let $F$ and $F'$ be two $(6,m)$-bent functions with $m\ge2$. The following statements are equivalent.
	\begin{enumerate}
		\item Bent functions $F$ and $F'$ are extended-affine equivalent.
		\item Vanishing flats $\VF(F)$ and $\VF(F')$ are isomorphic.
	\end{enumerate}
	Moreover for any $(6,m)$-bent function $F$ the linear code $\mathcal{C}_F^\perp$ is spanned by the codewords of minimum weight.
\end{theorem}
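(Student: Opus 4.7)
The plan is to handle the equivalence $(1)\Leftrightarrow(2)$ and the spanning statement separately, leaning in both cases on the classification of $(6,m)$-bent functions up to EA-equivalence obtained in~\cite{polujan2020design}.

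The direction $(1)\Rightarrow(2)$ is immediate: EA-equivalence is a special case of CCZ-equivalence, so by Result~\ref{theorem: Support designsare  CCZ-invariant} applied with $k=4$, the incidence structures $(\mathcal{P}(\C_F^\perp),\mathcal{B}_4(\C_F^\perp))=\VF(F)$ and its counterpart for $F'$ are isomorphic. The nontrivial direction is $(2)\Rightarrow(1)$, which I would establish computationally. Fix a complete set of EA-class representatives $F_1,\dots,F_k$ of $(6,m)$-bent functions for the admissible range $m\in\{2,3\}$ (provided by the classification in~\cite{polujan2020design}). For each $F_i$ form the block set $\VF(F_i)$ via its definition, reduce the resulting incidence structure to a canonical form using a graph-isomorphism backend (e.g.\ \texttt{nauty} or the built-in Magma routines), and verify that the $k$ canonical forms are pairwise distinct. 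Together with the forward implication this gives the claimed equivalence.

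For the spanning statement, for each representative $F_i$ I would assemble the binary $|\mathcal{VF}_{F_i}|\times 2^n$ matrix $M_i$ whose rows are the characteristic vectors of the blocks of $\VF(F_i)$. By Corollary~\ref{corollary: Coding-theoretic characterization of bent functions} these are precisely the minimum-weight (weight $4$) codewords of $\C_{F_i}^\perp$, so it suffices to check that $\rank_{\F_2}(M_i)=\dim(\C_{F_i}^\perp)=2^n-n-m-1$, a routine linear-algebra computation.

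The main obstacle is the combinatorial cost of isomorphism testing: by Remark~\ref{remark: Number of vanishing flats of a bent function} the design $\VF(F)$ has $2352$ blocks for $(n,m)=(6,2)$ and $1008$ blocks for $(n,m)=(6,3)$, and the number of EA-classes---especially for $m=2$---is non-trivial, so naive pairwise isomorphism comparisons could be expensive. In practice this is mitigated by the regularity of bent-function designs together with cheap invariants (automorphism group order, block-intersection spectrum, $2$-rank of the incidence matrix) that prune most comparisons before invoking the full canonical form routine; the feasibility of this style of analysis is already demonstrated by the related computations described in Remark~\ref{remark: Vanishing flaats of (6,m)-bent functions}.
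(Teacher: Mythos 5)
Your proposal is correct and matches the paper's approach: the paper's own proof consists precisely of a Magma computation over the EA-class representatives of $(6,m)$-bent functions from~\cite{polujan2020design}, checking pairwise non-isomorphism of the vanishing flats and the rank of the weight-$4$ codeword matrix, with the forward implication being the trivial consequence of CCZ-invariance (Result~\ref{theorem: Support designsare  CCZ-invariant}) that you note. Your write-up merely makes explicit the implementation details (canonical forms, pruning invariants, block counts) that the paper leaves implicit.
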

\begin{proof}
	All computations about equivalence and isomorphism are carried out with {Magma}~\cite{MR1484478} using the representatives of the equivalence classes of $(6,m)$-bent functions from~\cite{polujan2020design}.
\end{proof}
\section{Nonvanishing flats of plateaued functions}\label{section: Nonvanishing flats of plateaued functions}
In this section, we introduce a combinatorial generalization of vanishing flats by modifying the definition of the block set in~\eqref{equation: Block set of the vanishing flats} and consequently use this generalization in order to derive new characterizations of plateaued and bent functions.
\subsection{Definition and invariance under EA-equivalence}\label{subsection 3.1}
First, we give a formal definition of nonvanishing flats and show that the collection of nonvanishing flats of an $(n,m)$-function is an invariant under EA-equivalence.
\begin{definition}
	Let $F$ be an $(n,m)$-function. We define a partial quadruple system, called the \emph{nonvanishing flats of the $(n,m)$-function $F$ with respect to the nonzero vector $\vBold\in\F_2^m$}, as the incidence structure $\NF_{\vBold}(F):=(\mathcal{P}, \NF_{\vBold,F})$ where the point set is given by $\mathcal{P}=\{ \xBold \colon \xBold \in\F_2^n \}$ and the block set $\NF_{\vBold,F}$ is defined as follows
	\begin{equation}\label{equation: Block set of the nonvanishing flats}
		\NF_{\vBold,F}= \left\{ \{ \xBold_1,\xBold_2,\xBold_3,\xBold_4  \} \colon \bigoplus\limits_{i=1}^{4} \graphElt{\xBold_i}{F(\xBold_i)}=\graphElt{\mathbf{0}}{\mathbf{\vBold}} \mbox{ for } \xBold_i\in\F_2^n  \right\}.
	\end{equation}
\end{definition}

\begin{remark}
	Clearly, for an arbitrary $(n,m)$-function $F$ the collection of incidence structures $\{ \VF(F) \}\cup \{ \NF_{\vBold}(F) \colon \vBold\in\F_2^m\setminus \{ \mathbf{0} \} \}$ forms a partition of \emph{the affine Steiner quadruple system} $SQS(2^n):=(\mathcal{P},\mathcal{B})$ with point and block sets being defined as follows
	\begin{equation*}
		\mathcal{P}=\{ \xBold \colon \xBold \in \F_2^n \} \quad \mbox{and} \quad \mathcal{B}=\{ \{ \xBold_1,\xBold_2,\xBold_3,\xBold_4 \}\colon \xBold_1 \oplus\xBold_2\oplus\xBold_3\oplus\xBold_4=\mathbf{0} \mbox{ for } \xBold_i\in\F_2^n   \},
	\end{equation*}
	which is a $3$-$(2^n,4,1)$ design.
\end{remark}

As we mentioned in Result~\ref{theorem: Support designsare  CCZ-invariant}, the vanishing flats $\VF(F)$ are invariants under CCZ-equivalence for $(n,m)$-functions. Further we will show that the collection of all nonvanishing flats $\{\NF_{\vBold}(F) \colon \vBold\in\F_2^m\setminus\{ \mathbf{0}\} \}$ is an invariant under EA-equivalence for $(n,m)$-functions. First, we recall the following characterization of EA-equivalence for perfect nonlinear functions~\cite[Section 5]{EdelP09}.

\begin{result}\cite{EdelP09}
	Let $F$ and $F'$ be two $(n,m)$-functions. Then $F$ and $F'$ are extended-affine equivalent if and only if there exists an affine permutation $\mathcal{L}$ of $ \ \F_2^n\times \F_2^m$ of the form
	\begin{equation}\label{equation: EA permutation}
		\mathcal{L}\colon\graphElt{\xBold}{\yBold} \mapsto \begin{pmatrix}
			\mathbf{A}_{11} & \mathbf{O}_{\phantom{12}}\\
			\mathbf{A}_{21} & \mathbf{A}_{22}\end{pmatrix} \graphElt{\xBold}{\yBold} \oplus \graphElt{\aBold}{\bBold}.
	\end{equation}
	such that $\mathcal{L}(\graph{F})=\graph{F'}$.
\end{result}
Following the original proof of the invariance of vanishing flats under CCZ-equivalence~\cite[Theorem II.1.]{Li20VanishingFlats} and using the mentioned characterization of EA-equivalence, we proof the following result.
\begin{theorem}
	\label{thm-CCZinvariant}
	Let $F$ and $F'$ be two EA-equivalent $(n,m)$-functions and let $\mathcal{L}$ be an affine permutation of $\F_2^n\times \F_2^m$ of the form~\eqref{equation: EA permutation} such that $\mathcal{L}(\graph{F})=\graph{F'}$. Then for any nonzero $\vBold\in\F_2^m$ the block $\{\xBold_1,\xBold_2,\xBold_3,\xBold_4\} \in \NF_{\vBold}(F)$ if and only if $\{\xBold'_1,\xBold'_2,\xBold'_3,\xBold'_4\} \in \NF_{\vBold'}(F')$, where $\xBold'_i=\mathbf{A}_{11} \xBold_i \oplus \cBold$ and $\vBold'=\mathbf{A}_{22}\vBold$.
\end{theorem}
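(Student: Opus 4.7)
The plan is to show that $\mathcal{L}$ induces a bijection between $\NF_{\vBold}(F)$ and $\NF_{\vBold'}(F')$ at the level of blocks; one direction suffices by symmetry, since $\mathcal{L}^{-1}$ is again affine of the block lower-triangular form \eqref{equation: EA permutation}. So assume $\{\xBold_1,\xBold_2,\xBold_3,\xBold_4\}\in\NF_{\vBold}(F)$ and track what $\mathcal{L}$ does.

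First, because each pair $\graphElt{\xBold_i}{F(\xBold_i)}$ lies in $\graph{F}$ and $\mathcal{L}(\graph{F})=\graph{F'}$, the image $\mathcal{L}\!\left(\graphElt{\xBold_i}{F(\xBold_i)}\right)$ lies in $\graph{F'}$, so it has the shape $\graphElt{\xBold'_i}{F'(\xBold'_i)}$. Reading off the top block of the matrix identity gives $\xBold'_i=\mathbf{A}_{11}\xBold_i\oplus\aBold$ (this is the translation part from $\graphElt{\aBold}{\bBold}$ in the definition of $\mathcal{L}$). Since $\mathcal{L}$ is a permutation, $\mathbf{A}_{11}$ must be invertible, so $\xBold\mapsto \mathbf{A}_{11}\xBold\oplus\aBold$ is a bijection of $\F_2^n$; in particular the $\xBold'_i$ remain pairwise distinct and $\{\xBold'_1,\xBold'_2,\xBold'_3,\xBold'_4\}$ is a genuine $4$-subset.

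Next I would perform the key calculation: sum the four images. In characteristic two, the four copies of the affine shift $\graphElt{\aBold}{\bBold}$ cancel, so
\begin{equation*}
\bigoplus_{i=1}^{4}\graphElt{\xBold'_i}{F'(\xBold'_i)}
=\begin{pmatrix}\mathbf{A}_{11}&\mathbf{O}\\ \mathbf{A}_{21}&\mathbf{A}_{22}\end{pmatrix}\bigoplus_{i=1}^{4}\graphElt{\xBold_i}{F(\xBold_i)}
=\begin{pmatrix}\mathbf{A}_{11}&\mathbf{O}\\ \mathbf{A}_{21}&\mathbf{A}_{22}\end{pmatrix}\graphElt{\mathbf{0}}{\vBold}.
\end{equation*}
Here the crucial point is the zero block in the upper right corner of the matrix, which is precisely the structural hallmark of EA-equivalence (as opposed to general CCZ-equivalence): it guarantees that an input vector with a zero top part produces an output with a zero top part. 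Thus the right-hand side equals $\graphElt{\mathbf{0}}{\mathbf{A}_{22}\vBold}=\graphElt{\mathbf{0}}{\vBold'}$, which is exactly the defining relation \eqref{equation: Block set of the nonvanishing flats} for membership of $\{\xBold'_1,\xBold'_2,\xBold'_3,\xBold'_4\}$ in $\NF_{\vBold'}(F')$.

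The ``if'' direction is then obtained by applying the same argument to $\mathcal{L}^{-1}$, which still has the form \eqref{equation: EA permutation} (with lower-triangular linear part $\begin{pmatrix}\mathbf{A}_{11}^{-1}&\mathbf{O}\\ *&\mathbf{A}_{22}^{-1}\end{pmatrix}$) and satisfies $\mathcal{L}^{-1}(\graph{F'})=\graph{F}$; here one uses that $\mathbf{A}_{22}$ is invertible, which follows from invertibility of the whole block-triangular matrix. I do not expect a genuine obstacle: the only delicate point is recognizing that the zero upper-right block is exactly what makes the sum land in $\{\mathbf{0}\}\times\F_2^m$, and this is what restricts the statement to EA-equivalence.
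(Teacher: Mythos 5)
Your proposal is correct and follows essentially the same route as the paper: apply $\mathcal{L}$ to each graph point, sum the four images so the affine shift cancels, and use the zero upper-right block to see that the sum stays in $\{\mathbf{0}\}\times\F_2^m$ with bottom part $\mathbf{A}_{22}\vBold$, then handle the converse via $\mathcal{L}^{-1}$, which has the same block-triangular form. Your explicit remarks that $\mathbf{A}_{11}$ being invertible keeps the four points distinct, and that the zero block is precisely what distinguishes EA- from CCZ-equivalence here, are accurate and only make the argument slightly more self-contained than the paper's version.
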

\begin{proof}
	Let $\{\xBold_1,\xBold_2,\xBold_3,\xBold_4\}$ be a block of $\NF_{\vBold}(F)$. Then the following holds,
	\begin{equation*}
		\graphElt{\xBold_1}{F(\xBold_1)}\oplus \graphElt{\xBold_2}{F(\xBold_2)}\oplus \graphElt{\xBold_3}{F(\xBold_3)}\oplus \graphElt{\xBold_4}{F(\xBold_4)}=\graphElt{\mathbf{0}}{\mathbf{v}}.
	\end{equation*}
	Let $\pi$ be a mapping from $\F_2^n$ to $\F_2^n$, such that $\pi(\xBold)=\mathbf{A}_{11}\xBold\oplus \cBold$. By definition of EA-equivalence, $\pi$ induces a permutation on $\F_2^n$. Further, let $\xBold'_i:=\pi(\xBold_i)$ and $F'(\xBold'_i)=\mathbf{A}_{21} \xBold_i \oplus \mathbf{A}_{22} F(\xBold_i) \oplus \bBold$. We then have
	\begin{equation*}
		\graphElt{\xBold'_1}{F'(\xBold'_1)}\oplus \graphElt{\xBold'_2}{F'(\xBold'_2)}\oplus \graphElt{\xBold'_3}{F'(\xBold'_3)}\oplus \graphElt{\xBold'_4}{F'(\xBold'_4)}=\graphElt{\mathbf{0}}{\mathbf{A}_{22}\bigoplus\limits_{i=1}^4F(\xBold_i) }=\graphElt{\mathbf{0}}{\mathbf{v'}},
	\end{equation*}
	where $\vBold':=\mathbf{A}_{22} \vBold$. In this way, $\{\xBold'_1,\xBold'_2,\xBold'_3,\xBold'_4\}$ is a block of $\NF_{\vBold'}(F')$ and $\pi$ induces an injective mapping, which maps blocks of $\NF_{\vBold}(F)$ to the blocks of $\NF_{\vBold'}(F')$. Let now  $\{\xBold'_1,\xBold'_2,\xBold'_3,\xBold'_4\}$ be a block of $\NF_{\vBold'}(F')$. Clearly, the inverse of $\mathcal{L}$ has the form
	\begin{equation*}\label{equation: EA permutation inverse}
		\mathcal{L}^{-1}\colon\graphElt{\xBold}{\yBold} \mapsto \begin{pmatrix}
			\mathbf{A}'_{11} & \mathbf{O}_{\phantom{12}}\\
			\mathbf{A}'_{21} & \mathbf{A}'_{22}\end{pmatrix} \graphElt{\xBold}{\yBold} \oplus \graphElt{\aBold}{\bBold}.
	\end{equation*}
	with $\mathbf{A}'_{11}=\mathbf{A}^{-1}_{11},  \mathbf{A}'_{21}=\mathbf{A}^{-1}_{22}\mathbf{A}_{21}\mathbf{A}_{11}$ and $\mathbf{A}'_{22}=\mathbf{A}^{-1}_{22}$. Let $\xBold_i:=\pi^{-1}(\xBold'_i)=\mathbf{A}^{-1}_{11}\xBold'_i\oplus \aBold$ and $F(\xBold_i)=\mathbf{A}'_{21}\xBold'_i \oplus \mathbf{A}'_{22} F'(\xBold'_i) \oplus \bBold$. In this way, the following equality holds
	\begin{equation*}
		\graphElt{\xBold_1}{F(\xBold_1)}\oplus \graphElt{\xBold_2}{F(\xBold_2)}\oplus \graphElt{\xBold_3}{F(\xBold_3)}\oplus \graphElt{\xBold_4}{F(\xBold_4)}=\graphElt{\mathbf{0}}{\mathbf{A}'_{22}\bigoplus\limits_{i=1}^4F'(\xBold'_i) }=\graphElt{\mathbf{0}}{\mathbf{A}'_{22}\mathbf{v}'}=\graphElt{\mathbf{0}}{\mathbf{v}},
	\end{equation*}
	from what follows that $\{\xBold_1,\xBold_2,\xBold_3,\xBold_4\}$ is a block of $\NF_{\vBold}(F)$. Hence, $\pi$ induces a bijection between the block sets of $\NF_{\vBold}(F)$ and $\NF_{\vBold'}(F')$. Thus the nonvanishing flats $\NF_{\vBold}(F)$ and $\NF_{\vBold'}(F')$ are isomorphic.
\end{proof}

\begin{remark}\label{remark: Nonvanishing flats of APN functions}
	In general, the behavior of the nonvanishing flats is not a CCZ-invariant for $(n,m)$-functions, as the following example shows. We endow $\F_{2}^{6}$ with the structure of the finite field $\left(\F_{2^{6}},+,\cdot \right)$ in such a way, that the multiplicative group $\F^*_{2^{6}}$ is given by $\F^*_{2^{6}}=\langle a \rangle$, where $a$ is a root of the primitive polynomial $p(x)=x^6+x^4+x^3+x+1$. Consider the following CCZ-equivalent but not EA-equivalent functions on $\F_{2^{6}}$: the Kim's APN function $x\in\F_{2^6}\mapsto K(x)$ and the Dillon's APN permutation $x\in\F_{2^6}\mapsto G(x)$, of which the univariate representations can be found in~\cite{APNPermutation}. It is not difficult to check with a computer that for the Kim's APN function there exist:
	\begin{itemize}
		\item $42$ elements $v\in\F_{2^6}$ such that $\NF_{v}(K)$ is a $1$-$(64,4,9)$ design;
		\item $21$ elements $v\in\F_{2^6}$ such that $\NF_{v}(K)$ is a $1$-$(64,4,13)$ design.
	\end{itemize}
	At the same time, among the nonvanishing flats of Dillon's APN permutation only 7 of them, namely $\NF_{v}(G)$ for $v\in V=\{ 1, a^{7}, a^{8}, a^{29}, a^{44}, a^{50}, a^{53} \}$, are $1$-$(64,4,13)$ designs.
\end{remark}
\subsection{Characterization of plateaued functions}\label{subsection 3.2}
As we mentioned in Corollary~\ref{corollary: Coding-theoretic characterization of bent functions}, $(n,m)$-bent functions are those $(n,m)$-functions which have the minimum possible number of the vanishing flats $|\VF_F|$. In this way, the property of the vanishing flats to be a 2-design is in some sense redundant with respect to the characterization of bentness. Further we will show that in contrast to the bent case, one indeed needs the information about all nonvanishing flats in order to characterize the class of plateaued functions. First, we give a formula for the number of vanishing flats for an arbitrary plateaued $(n,m)$-function.
\begin{remark}
	Using Equation~\eqref{equation: Number of weight 4 codewords in CF dual}, it is not difficult to compute the number of vanishing flats for plateaued functions. Let $F$ be an $(n,m)$-plateaued function. For each nonzero $\bBold\in\F_2^m$ let $s_{\bBold}$ be an integer with $0\le s_{\bBold} \le n$ such that the component function $F_{\bBold}$ is $s_{\bBold}$-plateaued. Then the number of vanishing flats $|\mathcal{VF}_{F}|$ is given by
	\begin{equation}\label{equation: Vanishing flats of plateaued functions}
		|\VF_{F}|=\dfrac{1}{3}\left(2^{3n-m-3}+2^{2n-m-3}\sum\limits_{\bBold\in\F_2^m \setminus \{ \mathbf{0} \} }2^{s_{\bBold}}-3\cdot 2^{2n-3} +2^{n-2}  \right).
	\end{equation}
	In particular, if $F$ is an $s$-plateaued $(n,m)$-function, then
	\begin{equation}\label{equation: Number of vanishing flats of s-plateaued functions}
		|\VF_{F}|=\dfrac{1}{3}\left(2^{3n-m-3} + 2^{2n+s-3} - 2^{2n + s-m-3}-3\cdot 2^{2n-3} +2^{n-2}  \right).
	\end{equation}
	\noindent However, this information, namely the number of vanishing flats $|\VF_{F}|$ is not enough to characterize the class of all plateaued functions, as it was shown in~\cite[Section 5.2]{MesnagerOS18}. There exist infinite families of nonplateaued Boolean functions, having the fourth power moments of the Walsh transform of plateaued functions, and, hence the same number of vanishing flats $|\VF_{F}|$, as one can see from Result~\ref{equation: Number of weight 4 codewords in CF dual}. In this way, the same characterization in terms of the number of vanishing flats is no longer possible for $(n,m)$-plateaued functions. 
\end{remark}

Further we will show that one has to analyze the combinatorial structure of nonvanishing flats of $(n,m)$-functions in order to characterize the plateauedness. First, we recall the well-known characterization of plateaued functions by Carlet.
\begin{result}\label{result: Second-order derivatives}\cite[Theorem 1]{Carlet2015}
	Let $F$ be an $(n,m)$-function and $D_{\aBold,\bBold}F(\xBold):=F(\xBold)\oplus F(\xBold\oplus \aBold)\oplus F(\xBold\oplus \bBold)\oplus F(\xBold\oplus \aBold \oplus \bBold)$ be the second-order derivative of $F$ at the point $\xBold\in\F_2^n$. For $\vBold\in\F_2^m$ and $\xBold\in\F_2^m$ we define $N_F(\mathbf{v};\xBold)$ to be the cardinality of the set
	\begin{equation}\label{equation: Set of pairs}
		\left\{(\aBold, \bBold) \in \F_{2}^{n}\times \F_{2}^{n} \colon D_{\aBold,\bBold}F(\xBold)=\vBold \right\}.
	\end{equation}
	Then for any $\xBold\in \F_2^n$ and $\mathbf{v}\in \F_2^m$ the number $N_F(\mathbf{v};\xBold)$ can be computed as follows
	\begin{equation}\label{equation: Almost replication number}
		N_F(\mathbf{v};\xBold) =2^{-m}\cdot \sum_{\mathbf{u} \in \F_2^m} \sum_{\aBold, \bBold \in \F_{2}^{n}}(-1)^{\langle \mathbf{u}, D_{\aBold,\bBold}F(\xBold)\rangle_m \oplus \langle \mathbf{u}, \mathbf{v} \rangle_m }.
	\end{equation}
\end{result}

\begin{theorem}\label{theorem: Characterization of plateaued functions via regularity of VF}
	Let $f$ be a Boolean function on $\F_2^n$. The following statements are equivalent.
	\begin{enumerate}
		\item The function $f$ is $s$-plateaued.
		\item $\VF(f)$ is a $1$-$\left(2^n,4,\dfrac{2^{n+s-1}(2^{n-s}+1)-3\cdot 2^n+2}{6}\right)$ design.
		\item $\NF_1(f)$ is a $1$-$\left(2^n,4,\dfrac{2^{n+s-1}(2^{n-s}-1)}{6}\right)$ design.
	\end{enumerate}
\end{theorem}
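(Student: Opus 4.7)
The plan is to translate each of the design-theoretic conditions 2 and 3 into the statement that the cardinality $N_f(\vBold;\xBold)$ from Result~\ref{result: Second-order derivatives} is independent of $\xBold$, and then use Fourier inversion on $\F_2^n$ to show that constancy of $N_f(\vBold;\xBold)$ is precisely the spectral condition defining $s$-plateauedness.

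The first step is the combinatorial bookkeeping linking the replication numbers of $\VF(f)$ and $\NF_1(f)$ to $N_f(0;\xBold)$ and $N_f(1;\xBold)$. A block through a fixed point $\xBold$ corresponds to an unordered pair $\{\aBold,\bBold\}$ with $\aBold,\bBold,\aBold\oplus\bBold$ pairwise distinct and nonzero, and each such pair contributes six ordered pairs $(\aBold,\bBold)$ to the set in~\eqref{equation: Set of pairs}. Subtracting the $3\cdot 2^n-2$ trivial ordered pairs (those with $\aBold=\mathbf{0}$, $\bBold=\mathbf{0}$, or $\aBold=\bBold$, all of which satisfy $D_{\aBold,\bBold}f(\xBold)=0$) yields
\[ 6\,r_{\VF(f)}(\xBold)=N_f(0;\xBold)-(3\cdot 2^n-2),\qquad 6\,r_{\NF_1(f)}(\xBold)=N_f(1;\xBold). \]
So conditions 2 and 3 are respectively equivalent to the constancy in $\xBold$ of $N_f(0;\xBold)$ and $N_f(1;\xBold)$, with the stated values of $\lambda$ prescribing exactly those constants.

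The second step reduces both design conditions to a single spectral condition on $f$. For $m=1$, Carlet's formula~\eqref{equation: Almost replication number} collapses to $N_f(\vBold;\xBold)=\tfrac{1}{2}\bigl[2^{2n}+(-1)^{\vBold}S(\xBold)\bigr]$, where $S(\xBold):=\sum_{\aBold,\bBold}(-1)^{D_{\aBold,\bBold}f(\xBold)}$. Since $N_f(0;\xBold)+N_f(1;\xBold)=2^{2n}$, conditions 2 and 3 are automatically equivalent, and both reduce to: $S(\xBold)$ is constant in $\xBold$. A direct Fourier calculation, which expands each $(-1)^{f(\cdot)}$ appearing in $S(\xBold)$ via the inverse Walsh transform and then collapses two resulting Kronecker deltas, yields
\[ S(\xBold) = 2^{-n}(-1)^{f(\xBold)}\sum_{\alpha\in\F_2^n}\hat{\chi}_f(\alpha)^3(-1)^{\langle\alpha,\xBold\rangle_n}. \]

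The third step -- the one I expect to be the main obstacle -- extracts the $s$-plateaued condition from $S(\xBold)\equiv K$. Because $(-1)^{f(\xBold)}=\pm 1$, constancy of $S$ forces $\sum_\alpha\hat{\chi}_f(\alpha)^3(-1)^{\langle\alpha,\xBold\rangle_n}=K\cdot 2^n(-1)^{f(\xBold)}$ for every $\xBold$. Applying the Walsh transform once more in the variable $\xBold$ and invoking Fourier inversion on $\F_2^n$ collapses this identity to the pointwise equation
\[ \hat{\chi}_f(\alpha)^3=K\,\hat{\chi}_f(\alpha)\quad\text{for every }\alpha\in\F_2^n, \]
so each $\hat{\chi}_f(\alpha)$ lies in $\{0,\pm\sqrt{K}\}$. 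Combining this with Parseval's identity $\sum_\alpha\hat{\chi}_f(\alpha)^2=2^{2n}$ gives $|\{\alpha:\hat{\chi}_f(\alpha)\ne 0\}|=2^{2n}/K$; since this cardinality is a positive integer dividing $2^{2n}$, it equals $2^{n-s}$ for some integer $0\le s\le n$, so $K=2^{n+s}$ and $|\hat{\chi}_f(\alpha)|\in\{0,2^{(n+s)/2}\}$, which is precisely the defining condition of $s$-plateauedness. The converse is immediate from the identity $\hat{\chi}_f(\alpha)^3=2^{n+s}\hat{\chi}_f(\alpha)$, which yields $S(\xBold)\equiv 2^{n+s}$. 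Finally, substituting $K=2^{n+s}$ into the formulas of Step~1 recovers the two $\lambda$ values in the theorem, closing the chain $1\Leftrightarrow 2\Leftrightarrow 3$.
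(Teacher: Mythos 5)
Your proposal is correct, and its combinatorial half (Step 1) is essentially identical to the paper's: the paper likewise identifies blocks through a fixed point $\xBold$ with $2$-dimensional subspaces $\langle \aBold,\bBold\rangle$ on which $D_{\aBold,\bBold}f(\xBold)$ takes the prescribed value, discards the $3\cdot 2^n-2$ degenerate ordered pairs for $\VF(f)$, and divides by $6$. Where you diverge is in the spectral half. The paper simply cites the Carlet--Prouff criterion (\cite[Theorem 1]{CarletP03}): $f$ is $s$-plateaued if and only if $\sum_{\aBold,\bBold}(-1)^{D_{\aBold,\bBold}f(\xBold)}=2^{n+s}$ for all $\xBold$, and plugs this into Carlet's formula \eqref{equation: Almost replication number}. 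You instead prove this equivalence from scratch: the convolution identity $S(\xBold)=2^{-n}(-1)^{f(\xBold)}\sum_{\alpha}\hat{\chi}_f(\alpha)^3(-1)^{\langle\alpha,\xBold\rangle_n}$, a second Walsh transform giving $\hat{\chi}_f(\alpha)^3=K\,\hat{\chi}_f(\alpha)$ pointwise, and Parseval to pin down $K=2^{n+s}$. All of these steps check out (you should note in passing that $K>0$ is forced, since $K\le 0$ would make every $\hat{\chi}_f(\alpha)$ vanish and contradict Parseval, and that the nonzero-support size is at most $2^n$, which gives $s\ge 0$). The payoff of your route is that it is self-contained and actually proves something slightly stronger: mere constancy of $S(\xBold)$, i.e.\ the $1$-design property with \emph{any} valency, already forces $f$ to be $s$-plateaued for some $s$, with the stated $\lambda$ then determining $s$; the paper's citation-based argument needs the specific valency from the outset to invoke the ``only if'' direction of the quoted criterion. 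The cost is a page of Fourier analysis that the paper avoids by reference.
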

\begin{proof}
	
	By~\cite[Theorem
	1]{CarletP03}, a Boolean function $f$ on $\F_{2}^{n}$ is $s$-plateaued if and only if for all $\xBold \in \mathbb{F}_{2}^{n}$ holds
	\begin{equation}
		\sum_{\aBold, \bBold \in \mathbb{F}_{2}^{n}}(-1)^{D_{\aBold,\bBold}f(\xBold)}=2^{n+s}.
	\end{equation}
	In this way, from~\eqref{equation: Almost replication number} we deduce that a Boolean function $f$ on $\F_2^n$ is $s$-plateaued if and only if
	\begin{equation}
		N_f(v;\xBold)=2^{n-1}(2^n+(-1)^v \cdot 2^s)
	\end{equation}
	for all $\xBold \in \mathbb{F}_{2}^{n}$. Clearly, a given point $\xBold\in\F_2^n$ is incident with the block $B=\{ \xBold, \xBold\oplus \aBold, \xBold\oplus \bBold, \xBold\oplus \aBold \oplus \bBold \}$ of $\VF(F)$ if and only if there exists a 2-dimensional vector subspace $\langle \aBold, \bBold \rangle$ with $\aBold, \bBold\in \F_2^n$ such that $D_{\aBold,\bBold}f(\xBold)=0$. In order to determine the number of such two-dimensional vector subspaces it is enough to exclude from the set $\left\{(\aBold, \bBold) \in \F_{2}^{n}\times \F_{2}^{n} \colon D_{\aBold,\bBold}f(\xBold)=v \right\}$
	the pairs $(\aBold,\mathbf{0})$, $(\mathbf{0},\bBold)$ and $(\aBold,\aBold)$, which do not correspond to the affine two-dimensional vector spaces, and divide the cardinality of the obtained set by 6, since any 2-dimensional vector subspace $\langle \aBold, \bBold \rangle$ can be represented by 6 different pairs $(\aBold, \bBold)$ with $\aBold, \bBold\in \F_2^n$. In this way, any point $\xBold\in\F_2^n$ is incident with exactly
	\begin{equation*}
		\lambda_0=\frac{N_f(0;\xBold)-(2\cdot(2^n-1)+2^n)}{6}=\dfrac{2^{n-1}(2^n+2^s)-3\cdot 2^n+2}{6}
	\end{equation*}
	blocks of $\VF(f)$. In the case of the nonvanishing flats $\NF_1(f)$ one applies the same argument as in the case of the vanishing flats, however now without throwing away the pairs of the form $(\aBold,\mathbf{0})$, $(\mathbf{0},\bBold)$ and $(\aBold,\aBold)$, since they lead only to the constant zero (but not constant one) second-order derivatives. In this way, any point $\xBold\in\F_2^n$ is incident with exactly
	\begin{equation*}
		\lambda_1=\frac{N_f(1;\xBold)}{6}=\dfrac{2^{n-1}(2^n-2^s)}{6}
	\end{equation*}
	blocks of $\NF_1(f)$, what, in turn, completes the proof.
\end{proof}
	
As we will show further, a similar characterization is also valid for $(n,m)$-plateaued functions. First we recall the following characterization of vectorial plateaued functions.

\begin{result}\label{result: Characterization of plateaued functions}\cite[Theorem 1]{Carlet2015}.
	Let $F$ be an $(n,m)$-function. Then the following holds.
	\begin{enumerate}
		\item $F$ is plateaued if and only if, for every $\vBold \in \F_{2}^{m}$ the number $N_F(\mathbf{v};\xBold)$
		does not depend on $\xBold \in \F_{2}^{n}$
		\item $F$ is plateaued with single amplitude if and only if the number $N_F(\mathbf{v};\xBold)$ does not depend on $\xBold \in \F_{2}^{n}$, nor on $\vBold \in \F_{2}^{m}$ when $\vBold \neq \mathbf{0}$.
	\end{enumerate}
\end{result}

Following the proof of Theorem~\ref{theorem: Characterization of plateaued functions via regularity of VF}, we characterize plateaued functions as those $(n,m)$-functions, for which the collection $\{ \mathcal{NF}_{\vBold}(F) \colon \vBold\in \F_2^n\setminus \{ \mathbf{0} \} \} \} \sqcup  \{ \VF(F) \}$ is a partition of the affine Steiner quadruple system into 1-designs.

\begin{theorem}\label{theorem: Vanishing flats of plateaued functions}
	Let $F$ be an $(n,m)$-function and $\lambda_{\mathbf{v}}\in \N$ be defined in the following way
	\begin{equation}\label{equation: Regularity of plateaued functions}
		\lambda_\mathbf{0}=\dfrac{N_F(\mathbf{0};\xBold)-3\cdot 2^n+2}{6}\quad \mbox{and}\quad  \lambda_\mathbf{v}=\dfrac{N_F(\mathbf{v};\xBold)}{6} \mbox{ for } \vBold\in\F_2^m\setminus \{ \mathbf{0} \}.
	\end{equation}
	Then the function $F$ is plateaued if and only if for all $\vBold\in\F_2^m\setminus \{ \mathbf{0} \}$ the incidence structure $\NF_{\vBold}(F)$ is a $1$-$(2^n,4,\lambda_{\mathbf{v}})$ design. Additionally, the vanishing flats $\VF(F)$ of an $(n,m)$-function $F$ is a $1$-$(2^n,4,\lambda_{\mathbf{0}})$ design.
\end{theorem}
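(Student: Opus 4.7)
The plan is to adapt the proof of Theorem~\ref{theorem: Characterization of plateaued functions via regularity of VF} (the Boolean case) to the vectorial setting, and then to close the equivalence by invoking Carlet's characterization of plateauedness, Result~\ref{result: Characterization of plateaued functions}.

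First I would fix a point $\xBold\in\F_2^n$ and, for each $\vBold\in\F_2^m$, count the number of blocks through $\xBold$ in $\NF_{\vBold}(F)$ (respectively in $\VF(F)$ when $\vBold=\mathbf{0}$). Any such block has the form $\{\xBold,\xBold\oplus\aBold,\xBold\oplus\bBold,\xBold\oplus\aBold\oplus\bBold\}$ for some two-dimensional subspace $\langle\aBold,\bBold\rangle\subseteq\F_2^n$ satisfying $D_{\aBold,\bBold}F(\xBold)=\vBold$. Since every two-dimensional $\F_2$-subspace admits exactly $6$ ordered bases, the replication number at $\xBold$ equals the cardinality of the set~\eqref{equation: Set of pairs} from which one has removed all degenerate pairs (those with $\aBold=\mathbf{0}$, $\bBold=\mathbf{0}$, or $\aBold=\bBold$), divided by $6$.

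The key observation is that every such degenerate pair satisfies $D_{\aBold,\bBold}F(\xBold)=\mathbf{0}$ identically, and a straightforward inclusion-exclusion shows that their total number is $3\cdot 2^n-2$. Consequently, for $\vBold\neq\mathbf{0}$ none of the degenerate pairs contributes to $N_F(\vBold;\xBold)$, so the replication number at $\xBold$ is exactly $N_F(\vBold;\xBold)/6=\lambda_{\vBold}$; for $\vBold=\mathbf{0}$ we first subtract $3\cdot 2^n-2$ and divide by $6$, obtaining $\lambda_{\mathbf{0}}=(N_F(\mathbf{0};\xBold)-3\cdot 2^n+2)/6$. Hence $\NF_{\vBold}(F)$ is a $1$-$(2^n,4,\lambda_{\vBold})$ design if and only if $N_F(\vBold;\xBold)$ is independent of $\xBold$, and the same criterion governs $\VF(F)$ for the value $\vBold=\mathbf{0}$.

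Finally I would close the equivalence: if $\NF_{\vBold}(F)$ is a $1$-design for every $\vBold\neq\mathbf{0}$, then $N_F(\vBold;\xBold)$ does not depend on $\xBold$ for every $\vBold\neq\mathbf{0}$; combined with the trivial identity $\sum_{\vBold\in\F_2^m}N_F(\vBold;\xBold)=2^{2n}$ (which merely counts all ordered pairs in $\F_2^n\times\F_2^n$), the same independence holds for $\vBold=\mathbf{0}$, so Result~\ref{result: Characterization of plateaued functions} declares $F$ plateaued and $\VF(F)$ becomes a $1$-$(2^n,4,\lambda_{\mathbf{0}})$ design as well. The converse is immediate from Result~\ref{result: Characterization of plateaued functions}. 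The main subtlety is the careful bookkeeping of the degenerate pairs and the observation that they all fall into the $\vBold=\mathbf{0}$ class; beyond that, the argument is a clean vectorial transcription of the Boolean proof.
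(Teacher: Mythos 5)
Your proposal is correct and follows essentially the same route as the paper, which only sketches this proof by saying ``following the proof of Theorem~\ref{theorem: Characterization of plateaued functions via regularity of VF}'': you carry out the same point-count via second-order derivatives, the same removal of the $3\cdot 2^n-2$ degenerate pairs (all of which indeed land in the $\vBold=\mathbf{0}$ class), the same division by $6$, and then close the equivalence with Result~\ref{result: Characterization of plateaued functions}. Your use of the identity $\sum_{\vBold}N_F(\vBold;\xBold)=2^{2n}$ to deduce regularity of $\VF(F)$ from that of the nonvanishing flats is exactly the observation the paper records in the remark immediately following the theorem.
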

\begin{remark}
	We want to emphasize, that the information about the regularity of vanishing flats is not necessary, provided that all the nonvanishing flats are regular, since vanishing and nonvanishing flats form a partition of the affine Steiner quadruple system.
\end{remark}
Finally, if a plateaued function has a single amplitude, one can determine the regularity of vanishing and nonvanishing flats explicitly.
\begin{corollary}\label{corollary: Nonvanishing flats of s-plateaued (n,m)-functions}
	Let $F$ be an $(n,m)$-function. The following statements are equivalent.
	\begin{enumerate}
		\item The function $F$ is $s$-plateaued.
		\item For all $\vBold\in\F_2^m\setminus \{ \mathbf{0} \}$ the incidence structure $\NF_{\vBold}(F)$ is a
		\begin{equation*}
			1\mbox{-}\left(2^n,4,\dfrac{2^{n+s-m}(2^{n-s}-1)}{6}\right)
		\end{equation*}
		design. Additionally, the vanishing flats $\VF(F)$ of an $s$-plateaued $(n,m)$-function $F$ is a
		\begin{equation*}
			1\mbox{-}\left(2^n,4,\dfrac{2^{n+s-m}(2^{n-s}+2^m-1)-3\cdot 2^n+2}{6}\right)
		\end{equation*}
		design.
	\end{enumerate}
\end{corollary}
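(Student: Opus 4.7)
The plan is to derive this corollary from Theorem~\ref{theorem: Vanishing flats of plateaued functions} by computing $N_F(\vBold;\xBold)$ explicitly for an $s$-plateaued function, and then inverting the resulting valencies for the converse direction. The key input is the second-order derivative characterization of $s$-plateaued Boolean functions (already recalled in the proof of Theorem~\ref{theorem: Characterization of plateaued functions via regularity of VF}), which is applied simultaneously to all nonzero components.

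For the implication $1.\Rightarrow 2.$, I would first use that if $F$ is $s$-plateaued then every component $F_\uBold$ with $\uBold\neq\mathbf{0}$ is an $s$-plateaued Boolean function, so
\begin{equation*}
\sum_{\aBold,\bBold\in\F_2^n}(-1)^{D_{\aBold,\bBold}F_\uBold(\xBold)}=2^{n+s}\quad\text{for every }\xBold\in\F_2^n.
\end{equation*}
Substituting this into the formula~\eqref{equation: Almost replication number} and isolating the $\uBold=\mathbf{0}$ term, which contributes $2^{2n}$, one obtains
\begin{equation*}
N_F(\vBold;\xBold)=2^{-m}\Big(2^{2n}+2^{n+s}\sum_{\uBold\in\F_2^m\setminus\{\mathbf{0}\}}(-1)^{\langle\uBold,\vBold\rangle_m}\Big).
\end{equation*}
Orthogonality of characters on $\F_2^m$ evaluates the inner sum to $-1$ for $\vBold\neq\mathbf{0}$ and to $2^m-1$ for $\vBold=\mathbf{0}$, giving
\begin{equation*}
N_F(\vBold;\xBold)=2^{n+s-m}(2^{n-s}-1)\quad\text{and}\quad N_F(\mathbf{0};\xBold)=2^{n+s-m}(2^{n-s}+2^m-1).
\end{equation*}
Both expressions are independent of $\xBold$, hence $F$ is plateaued, and the formulas~\eqref{equation: Regularity of plateaued functions} combined with Theorem~\ref{theorem: Vanishing flats of plateaued functions} produce exactly the claimed valencies $\lambda_\vBold$ and $\lambda_\mathbf{0}$.

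For the converse $2.\Rightarrow 1.$, the hypothesis that every $\NF_\vBold(F)$ with $\vBold\neq\mathbf{0}$ is a $1$-design, together with the automatic regularity of $\VF(F)$ (since vanishing and nonvanishing flats partition the affine Steiner quadruple system), supplies the hypothesis of Theorem~\ref{theorem: Vanishing flats of plateaued functions} and shows that $F$ is plateaued. Because the valency $\lambda_\vBold$ does not depend on $\vBold\neq\mathbf{0}$, Result~\ref{result: Characterization of plateaued functions}(2) upgrades this to plateauedness with a single amplitude $s'$. Matching the forward computation against the prescribed valency forces $2^{n+s'-m}(2^{n-s'}-1)=2^{n+s-m}(2^{n-s}-1)$, and since the map $t\mapsto 2^{2n-m}-2^{n+t-m}$ is strictly decreasing on $\{0,1,\dots,n\}$ we conclude $s'=s$. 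The only real computation is the orthogonality-based evaluation of $N_F(\vBold;\xBold)$ in the forward direction, which is routine, so no genuine obstacle is expected; the conceptual point is that the uniform valency across nonzero $\vBold$ is precisely what promotes plateauedness to single-amplitude plateauedness, and the numerical value of this common valency pins down the amplitude $s$.
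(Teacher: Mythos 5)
Your proposal is correct and follows essentially the same route the paper intends: the corollary is the single-amplitude specialization of Theorem~\ref{theorem: Vanishing flats of plateaued functions}, obtained by evaluating $N_F(\vBold;\xBold)$ via~\eqref{equation: Almost replication number} together with the second-order-derivative characterization of $s$-plateaued components, exactly as in the proof of Theorem~\ref{theorem: Characterization of plateaued functions via regularity of VF}, and your orthogonality computation reproduces the stated valencies. Your converse, using Result~\ref{result: Characterization of plateaued functions}(2) to upgrade equiregularity to single-amplitude plateauedness and monotonicity of $t\mapsto 2^{2n-m}-2^{n+t-m}$ to pin down $s$, is the intended argument as well.
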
	

\paragraph{Group algebra interpretation.}
Group ring equations are a powerful tool, used to characterize nice combinatorial objects. For more background on group ring equations we refer to~\cite{Pott1995FiniteGeometry}, and their applications to perfect nonlinear functions we refer to~\cite{Pott04,Pott16}. For instance, Boolean functions $f$ on $\F_2^n$ with $n=2k$ may be identified with their supports $\support{f}\subseteq\F_2^{n}$, which in the bent case are $\left(2^{n},2^{n-1} \pm 2^{k-1}, 2^{n-2} \pm 2^{k-1}\right)$ difference sets, characterized by the following group ring equation
$$D_f^2= \left(2^{n-1} \pm 2^{k-1}\right)\cdot\mathbbm{1}_G+\left(2^{n-2} \pm 2^{k-1}\right)\cdot G$$
with $G=\F_2^n$. In a similar manner one can identify an $(n,m)$-function $F$ with its graph $\graph{F}\subset G=\F_2^n\times\F_2^m$. In this way, $(n,m)$-bent functions, being $\left(2^{n}, 2^{m}, 2^{n}, 2^{n-m}\right)$-difference sets in $G=\F_2^{n}\times \F_2^{m}$ relative to the subgroup $N=\{(\mathbf{0}, \mathbf{y})\colon \mathbf{y}\in\F_2^m \}$, can be described by the following group ring equation, 
$$\graph{F}^2 = 2^{n} \cdot \mathbbm{1}_G + 2^{n-m}\cdot(G-N).$$
In a similar manner, Budaghyan and Pott~\cite[Theorem 5]{BudaghyanP09} characterized $s$-plateaued $(n,n)$-functions $F$ using the group ring representation of $\graph{F}^3$ in the following way,
\begin{equation}
	\graph{F}^3= 2^{n+s}\cdot \graph{F} + (2^n-2^s)\cdot G.
\end{equation}
Further we generalize this statement by specifying the group ring equation, which a graph of an $(n,m)$-plateaued function has to satisfy. Clearly the following system of equations,
\begin{equation}
	\left\{\begin{array}{l}
		\xBold_1 \oplus \xBold_2 \oplus \xBold_3= \xBold \\
		F(\xBold_1)\oplus F(\xBold_2)\oplus F(\xBold_3)=\yBold
	\end{array}\right.
\end{equation}
is equivalent to the system of equations 
\begin{equation}
	\left\{\begin{array}{l}
		\xBold_1 \oplus \xBold_2 \oplus \xBold_3 \oplus \xBold =\mathbf{0} \\
		F(\xBold_1)\oplus F(\xBold_2)\oplus F(\xBold_3) \oplus F(\xBold) = \vBold
	\end{array}\right.,
\end{equation}
where $\vBold:=F(\xBold) \oplus \yBold$. In this way, for an arbitrary $(n,m)$-function $F$, we obtain that
\begin{equation} 
	\graph{F}^3=\sum_{\vBold\in \F_2^m} \sum_{\xBold\in \F_2^n} \left[ N_F(\vBold,\xBold)\cdot\graphElt{\xBold}{F(\xBold)\oplus \vBold}\right].
\end{equation}
Since the values $N_F(\vBold,\xBold)$ are independent of $\xBold\in\F_2^n$, according to Result~\ref{result: Characterization of plateaued functions}, we get the following characterization of $(n,m)$-plateaued functions.
\begin{theorem}
	An $(n,m)$-function $F$ is plateaued if and only if its graph $\graph{F}$ satisfies the following group ring equation
	\begin{equation}
		\graph{F}^3=\sum_{\vBold\in \F_2^m} N_F(\vBold,\xBold)\cdot\left[\graph{F}\oplus\graphElt{\mathbf{0}}{\vBold}\right].
	\end{equation}
	In particular, an $(n,m)$-function $F$ is $s$-plateaued if and only if its graph $\graph{F}$ satisfies the following group ring equation,
	\begin{equation}
		\graph{F}^3= 2^{n+s}\cdot \graph{F} +(2^{2n-m}-2^{n+s-m})\cdot G,
	\end{equation}
	where $G=\F_2^n\times \F_2^m$.
\end{theorem}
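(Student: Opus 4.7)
The plan is to build directly on the identity
\begin{equation*}
\graph{F}^3=\sum_{\vBold\in \F_2^m} \sum_{\xBold\in \F_2^n} N_F(\vBold,\xBold)\cdot\graphElt{\xBold}{F(\xBold)\oplus \vBold},
\end{equation*}
which was derived in the paragraph immediately preceding the theorem and holds for any $(n,m)$-function, and then invoke Carlet's characterization (Result~\ref{result: Characterization of plateaued functions}) to translate the independence of $N_F(\vBold,\xBold)$ from $\xBold$ into the group-algebra language. The key observation used throughout is that in $\Z[\F_2^n\times\F_2^m]$ the translate $\graph{F}\oplus\graphElt{\mathbf{0}}{\vBold}$ equals the formal sum $\sum_{\xBold\in\F_2^n}\graphElt{\xBold}{F(\xBold)\oplus\vBold}$.

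For the first equivalence, the forward direction is immediate: if $F$ is plateaued, then by Result~\ref{result: Characterization of plateaued functions}(1) we have $N_F(\vBold,\xBold)=N_F(\vBold)$, and the inner sum over $\xBold$ collapses to the translate above. For the converse, I would compare coefficients in $\Z[G]$: the coefficient of $\graphElt{\xBold}{\yBold}$ on the left is $N_F(F(\xBold)\oplus\yBold,\xBold)$, while on the right the only summand contributing to this basis element is the one indexed by $\vBold=F(\xBold)\oplus\yBold$, giving a coefficient that depends only on $\vBold$. Hence $N_F(\vBold,\xBold)$ is independent of $\xBold$ and Result~\ref{result: Characterization of plateaued functions}(1) again yields plateauedness.

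For the $s$-plateaued specialization, Result~\ref{result: Characterization of plateaued functions}(2) ensures that $N_F(\vBold)$ takes one value for $\vBold=\mathbf{0}$ and a single constant value $N_F^{*}$ for all $\vBold\neq\mathbf{0}$. Using the group-algebra identity $\sum_{\vBold\in\F_2^m}\bigl(\graph{F}\oplus\graphElt{\mathbf{0}}{\vBold}\bigr)=G$, the general formula collapses to
\begin{equation*}
\graph{F}^3=\bigl(N_F(\mathbf{0})-N_F^{*}\bigr)\graph{F}+N_F^{*}\cdot G.
\end{equation*}
The remaining calculation plugs in the explicit values $N_F(\mathbf{0})=2^{2n-m}+2^{n+s}-2^{n+s-m}$ and $N_F^{*}=2^{2n-m}-2^{n+s-m}$, which can be extracted from Corollary~\ref{corollary: Nonvanishing flats of s-plateaued (n,m)-functions} via the relations $\lambda_{\vBold}=N_F(\vBold)/6$ for $\vBold\neq\mathbf{0}$ and the corresponding correction for $\vBold=\mathbf{0}$ in Theorem~\ref{theorem: Vanishing flats of plateaued functions}. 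The difference simplifies cleanly to $N_F(\mathbf{0})-N_F^{*}=2^{n+s}$, yielding the claimed equation. The converse specialization follows from the first equivalence together with the uniqueness of coefficients in $\Z[G]$: the displayed equation forces $N_F(\vBold)$ to be the constant $2^{2n-m}-2^{n+s-m}$ for all $\vBold\neq\mathbf{0}$, and hence $F$ is plateaued with a single amplitude equal to $2^{(n+s)/2}$.

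The main obstacle is not conceptual but bookkeeping: one must be careful that the translate $\graph{F}\oplus\graphElt{\mathbf{0}}{\vBold}$ is handled consistently (it is indeed the image of $\graph{F}$ under addition of $\graphElt{\mathbf{0}}{\vBold}$ in the group algebra and not a set-theoretic union), and that the collapse $\sum_{\vBold}(\graph{F}\oplus\graphElt{\mathbf{0}}{\vBold})=G$ is justified by observing that for each fixed $\xBold$ the element $F(\xBold)\oplus\vBold$ runs through $\F_2^m$ bijectively as $\vBold$ does. Once these identifications are made, the proof reduces to the arithmetic verification above.
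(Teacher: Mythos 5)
Your proposal is correct and follows essentially the same route as the paper: the paper derives the identity $\graph{F}^3=\sum_{\vBold}\sum_{\xBold} N_F(\vBold,\xBold)\cdot\binom{\xBold}{F(\xBold)\oplus\vBold}$ in the paragraph preceding the theorem and then simply invokes Carlet's characterization (Result~\ref{result: Characterization of plateaued functions}) to conclude, offering no further proof. Your write-up is in fact more complete than the paper's one-sentence justification — the coefficient comparison for the converse and the explicit computation of $N_F(\mathbf{0})$ and $N_F^{*}$ (both values check out) are details the paper leaves implicit.
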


\subsection{Characterization of bent functions among plateaued functions}\label{subsection 3.3}
In this subsection, we show that bent functions are special mappings among plateaued functions, since they partition the affine Steiner quadruple system not only into 1-designs, but also into 2-designs. The proof is based on the connection between bent functions and relative difference sets and follows the proof of Theorem~\ref{theorem: Bent functions via vanishing flats}.
\begin{theorem}\label{theorem: Bent functions via nonvanishing flats}
	Let $F$ be an $(n,m)$-function. The following statements are equivalent.
	\begin{enumerate}
		\item The function $F$ is $(n,m)$-bent.
		\item For any $\vBold\in{\F_2^m}\setminus \{ \mathbf{0} \}$ the incidence structure $\NF_{\vBold}(F)$ is a $2$-$(2^n,4,2^{n-m-1})$ design. 
	\end{enumerate}
	Moreover, the number of the nonvanishing flats of an $(n,m)$-bent function $F$ with respect to a 
	nonzero vector $\vBold\in\F_2^m$ is given by
	\begin{equation}\label{equation: The number of nonvanishing flats of an (n,m)-bent function}
		|\NF_{\vBold,F}|=\frac{\left(2^{n+m}-2^m\right) \cdot 2^{2(n-m)}}{24}.
	\end{equation}
\end{theorem}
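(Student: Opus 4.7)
The plan is to mimic the proof of Theorem~\ref{theorem: Bent functions via vanishing flats}, exploiting the characterization of $(n,m)$-bent functions as relative difference sets given in Result~\ref{result: Vectorial bent functions and RDS}. For the implication $1.\Rightarrow 2.$, I would fix two distinct points $\xBold_1, \xBold_2\in\F_2^n$ and a nonzero $\vBold\in\F_2^m$, and then rewrite the defining equation of $\NF_{\vBold,F}$ in the form
\begin{equation*}
\graphElt{\xBold_3}{F(\xBold_3)}\oplus\graphElt{\xBold_4}{F(\xBold_4)}=g',\quad\text{where}\quad g':=\graphElt{\xBold_1\oplus\xBold_2}{F(\xBold_1)\oplus F(\xBold_2)\oplus\vBold}.
\end{equation*}
Since $\xBold_1\neq\xBold_2$, the first coordinate of $g'$ is nonzero, so $g'\in G\setminus N$ for $G=\F_2^n\times\F_2^m$ and $N=\{(\mathbf{0},\yBold):\yBold\in\F_2^m\}$. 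By the RDS property, $g'$ has exactly $2^{n-m}$ ordered representations as a sum of two elements of $\graph{F}$, hence $2^{n-m-1}$ unordered pairs $\{\xBold_3,\xBold_4\}$.

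The only new subtlety compared to the vanishing-flats case is to verify that $\{\xBold_3,\xBold_4\}\cap\{\xBold_1,\xBold_2\}=\varnothing$ (so that a genuine 4-element block arises). Suppose $\xBold_3=\xBold_1$: comparing first coordinates in the displayed equation forces $\xBold_4=\xBold_2$, and then the second coordinates yield $\vBold=\mathbf{0}$, contradicting $\vBold\neq\mathbf{0}$. The symmetric cases are identical. Thus every such pair produces a valid block, and $\NF_{\vBold}(F)$ is a $2$-$(2^n,4,2^{n-m-1})$ design.

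For the converse $2.\Rightarrow 1.$, the key observation is that the collection $\{\VF(F)\}\cup\{\NF_{\vBold}(F):\vBold\in\F_2^m\setminus\{\mathbf{0}\}\}$ partitions the affine Steiner quadruple system $SQS(2^n)$, which is a $3$-$(2^n,4,1)$ design and hence a $2$-$(2^n,4,2^{n-1}-1)$ design. Any fixed 2-subset $\{\xBold_1,\xBold_2\}$ lies in exactly $2^{n-1}-1$ blocks of $SQS(2^n)$; under the hypothesis, $(2^m-1)\cdot 2^{n-m-1}$ of these belong to the nonvanishing flats, leaving $2^{n-1}-1-(2^m-1)\cdot 2^{n-m-1}=2^{n-m-1}-1$ blocks for $\VF(F)$. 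Since this count is independent of the chosen pair, $\VF(F)$ is a $2$-$(2^n,4,2^{n-m-1}-1)$ design, and Theorem~\ref{theorem: Bent functions via vanishing flats} forces $F$ to be bent.

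Finally, the block count follows either directly from the standard $2$-design formula $b=\lambda\binom{v}{2}/\binom{k}{2}$ with $v=2^n$, $k=4$, $\lambda=2^{n-m-1}$, or by the ordered-quadruple argument of Remark~\ref{remark: Number of vanishing flats of a bent function}: there are $(2^{n+m}-2^m)$ choices of $g\in G\setminus N$, for each such $g$ there are $2^{n-m}$ ordered representations as a pair summing to $g$, giving $2^{n-m}$ further ordered representations of $g\oplus(\mathbf{0},\vBold)$ on the other side, and dividing the resulting number of ordered $4$-tuples by $24$ yields exactly $(2^{n+m}-2^m)\cdot 2^{2(n-m)}/24$. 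The main obstacle is the verification of disjointness $\{\xBold_3,\xBold_4\}\cap\{\xBold_1,\xBold_2\}=\varnothing$, which was the only qualitative departure from the vanishing-flats proof.
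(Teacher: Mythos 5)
Your proposal is correct and follows essentially the same route as the paper: the forward direction uses the relative-difference-set characterization of $(n,m)$-bent functions to count the $2^{n-m-1}$ blocks through a fixed pair, and the converse uses the partition of the affine Steiner quadruple system into vanishing and nonvanishing flats to reduce to the vanishing-flats characterization (you count per pair to see $\VF(F)$ is a $2$-$(2^n,4,2^{n-m-1}-1)$ design, while the paper counts total blocks and invokes the minimality of $A_4$; the two are equivalent). Your explicit verification that $\{\xBold_3,\xBold_4\}\cap\{\xBold_1,\xBold_2\}=\varnothing$ when $\vBold\neq\mathbf{0}$ is a detail the paper leaves implicit, and it is a worthwhile addition.
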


\begin{proof}
	\emph{1.$\Rightarrow$2.} Let $F$ be an $(n,m)$-bent function, $\vBold$ be a nonzero element of $\F_2^m$ and $\NF_{\vBold}(F)=(\mathcal{P},\mathcal{B})$ be the nonvanishing flats of $F$ with respect to $\vBold$. We will show that any two different points $\xBold_1,\xBold_2$ of $\mathcal{P}$ are contained in exactly $2^{n-m-1}$ blocks of $\mathcal{B}$. We define $\aBold:=\xBold_1\oplus\xBold_2$ and let $\vBold':=F(\xBold_1)\oplus F(\xBold_2)$ and $\vBold'':=\vBold'\oplus \vBold$. Then the following holds
	\begin{equation}
		\graphElt{\xBold_1}{F(\xBold_1)}\oplus \graphElt{\xBold_2}{F(\xBold_2)}=\graphElt{\aBold}{\vBold'}=\graphElt{\mathbf{0}}{\vBold}\oplus \graphElt{\aBold}{\vBold''}.
	\end{equation}
	Since the graph $\graph{F}$ is a $\left(2^{n}, 2^{m}, 2^{n}, 2^{n-m}\right)$-difference set in the group $G=\F_2^{n} \times \F_2^{m}$ relative to the forbidden subgroup $N=\{(\mathbf{0}, \mathbf{y})\colon \mathbf{y}\in\F_2^m \}$, the element  $g:=\graphElt{\aBold}{\vBold''} \in G\setminus N$ has $2^{n-m}$ representations
	\begin{equation}
		g=\graphElt{\xBold_3}{F(\xBold_3)}\oplus \graphElt{\xBold_4}{F(\xBold_4)}
	\end{equation}
	with $\{ \xBold_3,\xBold_4 \} \neq \{ \xBold_1,\xBold_2 \}$. In this way, any 2-subset $ \{ \xBold_1,\xBold_2 \}$ is contained in exactly $2^{n-m-1}$ blocks $\{ \xBold_1,\xBold_2,\xBold_3,\xBold_4 \}$ of the form \eqref{equation: Block set of the nonvanishing flats},	from what follows that $\NF_{\vBold}(F)$ is a $2$-$(2^n,4,2^{n-m-1})$ design. The count of the blocks is the same as in Remark~\ref{remark: Number of vanishing flats of a bent function}.
	
	\noindent\emph{2.$\Rightarrow$1.} Follows from the fact that vanishing and nonvanishing flats form a partition of the affine Steiner quadruple system, and that the resulting number of vanishing flats is equal to the value in~\eqref{equation: Bentness via CF dual}, which is the minimum possible value among all $(n,m)$-functions.
\end{proof}

\section{Characterizations of extendable and lonely bent functions}\label{section 5}
Many questions about Boolean functions have a coding theoretic interpretation, for instance, as it is well-known, the question about the maximum nonlinearity of a Boolean function is equivalent to the covering radius problem of the  \emph{first-order Reed-Muller code} $\mathcal{RM}(n,1)$. In this section we show, that the concept of extendable (lonely) $(n,m)$-bent functions $F$ can be interpreted in terms of the covering radius problem for the linear code $\mathcal{C}_F$. Additionally we provide a purely combinatorial description of the extendability problem of bent functions by means of the subdesign problem, using the theory of vanishing and nonvanishing flats, developed in previous sections.
\subsection{Coding-theoretic approach}\label{subsection 5.1}
For an $(n,m)$-bent function $F$,
consider the $[2^n, n+m+1 ,2 ^{n-1} - 2^{n/2-1}]$-linear code $\mathcal{C}_F\subset \F_2^{2^n}$, which contains the first-order Reed-Muller code $\mathcal{RM}(n,1)$. As it is well known, the covering radius $\rho$ of $\mathcal{RM}(n,1)$ is $\rho = 2^{n-1}-2^{n/2-1}$, when $n$ is even. The definition of the covering radius can be extended for an arbitrary subset of $\F_2^v$ in the following way, see \cite{Oblaukhov2019,Oblaukhov2021} for details.
\begin{definition}
	The \emph{covering radius} $\rho=\rho(A)$ of the subset $A\subseteq\F_2^v$ is  $\rho=\max\limits_{\xBold\in\F_2^v}\min\limits_{\aBold\in A}d(\xBold, \aBold)$. The set $\widehat{A}=\{\xBold \in\F_2^v \colon d(\xBold,A)=\rho(A)\}$ is called the \emph{metric complement} of $A$. If $\widehat{\widehat{A}} = A$, then $A$ is called a \emph{metrically regular set}.
\end{definition}

Clearly, metrically regular sets always come in pairs, as with $A$ also $\widehat{A}$ is metrically regular. Bent functions are exactly the functions at maximal distance $\rho = 2^{n-1}-2^{n/2-1}$ from the set of affine 
functions. Hence, by definition,
the set $\mathcal{B}_n$ of Boolean bent functions in (even) dimension $n$ is the metric complement of $\mathcal{RM}(n,1)$. As it was shown in \cite{Tok12}, $\mathcal{RM}(n,1)$, $n$ even, (and $\mathcal{B}_n$) are metrically regular sets, i.e., we also have $\widehat{\mathcal{B}_n} = \mathcal{RM}(n,1)$. Since the code $\mathcal{C}_F$ of an $(n,m)$-bent function $F$ contains $\mathcal{RM}(n,1)$ as a subcode, the covering radius $\rho=\rho(\mathcal{C}_F)$ can be at most $2^{n-1}-2^{n/2-1}$.
As we will show, the covering radius of $\mathcal{C}_F$ is strongly connected with the concept of lonely bent functions and the concept of extendable bent functions, see Definition~\ref{definition: lonely bent functions}. 

\begin{remark}
	By the Nyberg's bound \cite{Nyberg91}, a vectorial $(n,n/2)$-bent function is non-extendable. However, so far there are no known examples of non-extendable $(n,m)$-bent functions with $m<n/2$. Moreover,  as the recent study~\cite{polujan2020design} shows, all $(6,m)$-bent functions with $m=1,2$ are extendable.
\end{remark}
The question on the existence of lonely Boolean bent functions has a connection with \emph{the bent sum decomposition problem}, formulated by Tokareva~\cite[Hypothesis 1]{Tokareva2011}, which if it holds, also excludes the existence of lonely Boolean bent functions.
\begin{hypothesis}\label{hypothesis: Tokarevas}\cite[Hypothesis 1]{Tokareva2011}
	Every Boolean function of algebraic degree at most $n/2$ in $n$ variables, can be expressed as the sum of two bent functions in $n$ variables ($n$ is even, $n\ge2$).
\end{hypothesis}

Though it is only confirmed for a few classical constructions of Boolean functions that they are a sum of two bent functions (mostly seen straightforward, see~\cite{QuFDL14}), the hypothesis is not falsified. In fact if this were true in general, then one would have quite precise (upper and lower) bounds for the number of Boolean bent functions in any dimension $n$.

Let $f,g$ be two Boolean functions on $\F_2^n$, then $f\oplus g$ is bent if and only if $d(f\oplus g,l_\aBold) = 
2^{n-1}-2^{n/2-1}$ or $d(f\oplus g,l_\aBold) = 2^{n-1}+2^{n/2-1}$ 
(then $d(f\oplus g,l_\aBold\oplus 1) = 2^{n-1}-2^{n/2-1}$). We can state this fact also as follows.

\begin{lemma}\label{lemma: Sum is bent trivially}
	Let $f,g$ be two Boolean functions on $\F_2^n$, which are not necessarily bent. Then their sum $h=f\oplus g$ is bent if and only if $d(f\oplus l_\aBold, g) = 2^{n-1}\pm 2^{n/2-1}$ for all $\aBold\in\F_2^n$.
\end{lemma}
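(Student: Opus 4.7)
The plan is to translate the bentness of $h = f \oplus g$ into a Hamming distance condition via the standard identity between Walsh transform values and Hamming weights. Concretely, for any Boolean function $\phi$ on $\F_2^n$ and any $\aBold \in \F_2^n$, one has the basic identity
\begin{equation*}
\hat{\chi}_\phi(\aBold) \;=\; \sum_{\xBold \in \F_2^n} (-1)^{\phi(\xBold) \oplus l_\aBold(\xBold)} \;=\; 2^n - 2\,\wt(\phi \oplus l_\aBold),
\end{equation*}
since each $\xBold$ contributes $+1$ when $\phi(\xBold) = l_\aBold(\xBold)$ and $-1$ otherwise.

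Applying this identity to $\phi = h = f \oplus g$ and rewriting the weight as a Hamming distance, I would obtain
\begin{equation*}
\hat{\chi}_h(\aBold) \;=\; 2^n - 2\,\wt(f \oplus g \oplus l_\aBold) \;=\; 2^n - 2\,d(f \oplus l_\aBold,\, g)
\end{equation*}
for every $\aBold \in \F_2^n$. Recall that $h$ is bent precisely when $|\hat{\chi}_h(\aBold)| = 2^{n/2}$ for all $\aBold \in \F_2^n$. Substituting the displayed expression, this is equivalent to
\begin{equation*}
\bigl|\,2^n - 2\,d(f \oplus l_\aBold, g)\,\bigr| \;=\; 2^{n/2} \quad \text{for all } \aBold \in \F_2^n,
\end{equation*}
which rearranges to $d(f \oplus l_\aBold, g) = 2^{n-1} \pm 2^{n/2-1}$ for all $\aBold \in \F_2^n$, as claimed.

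There is no real obstacle here; the argument is a direct calculation based on the standard correspondence between the Walsh transform and the Hamming distance to affine functions. The only point worth highlighting in the write-up is that the two sign choices in $2^{n-1} \pm 2^{n/2-1}$ correspond exactly to the two possible values $\pm 2^{n/2}$ of $\hat{\chi}_h(\aBold)$, so no case distinction beyond the absolute value is needed.
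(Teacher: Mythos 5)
Your proof is correct and matches the paper's treatment: the authors state the lemma as an immediate reformulation of bentness via the standard identity $\hat{\chi}_{f\oplus g}(\aBold) = 2^n - 2\,d(f\oplus l_\aBold, g)$, which is exactly the computation you carry out. Nothing is missing; you have simply written out explicitly what the paper leaves as a one-line remark.
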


The following lemma shows that we can relax the condition in Lemma \ref{lemma: Sum is bent trivially}. We will use this fact to show a connection between loneliness (extendability) of bent functions and the covering radii of their codes.

\begin{lemma}\label{lemma: Sufficient condition}
	Let $f,g$ be two Boolean functions on $\F_2^n$, which satisfy the following inequality
	\begin{equation}
	2^{n-1}-2^{n/2-1} \le d(f\oplus l_\aBold,g) \le 2^{n-1}+2^{n/2-1}
	\end{equation}
	for all $\aBold\in\F_2^n$. Then we have $d(f\oplus l_\aBold,g)=2^{n-1}\pm 2^{n/2-1}$ for all $\aBold\in\F_2^n$.
\end{lemma}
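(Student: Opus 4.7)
The strategy is to reformulate the distance condition as a bound on the Walsh transform of $h := f \oplus g$ and then invoke Parseval's identity to force equality everywhere.

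First, I would observe that $d(f \oplus l_\aBold, g) = d(l_\aBold, f \oplus g) = d(l_\aBold, h)$, and use the standard relation
\begin{equation*}
\hat{\chi}_h(\aBold) = 2^n - 2\, d(l_\aBold, h).
\end{equation*}
Under this identification, the hypothesis $2^{n-1}-2^{n/2-1} \le d(l_\aBold, h) \le 2^{n-1}+2^{n/2-1}$ for every $\aBold \in \F_2^n$ is equivalent to
\begin{equation*}
|\hat{\chi}_h(\aBold)| \le 2^{n/2} \quad \text{for all } \aBold \in \F_2^n.
\end{equation*}

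Second, I would apply Parseval's identity $\sum_{\aBold \in \F_2^n} \hat{\chi}_h(\aBold)^2 = 2^{2n}$. Since the sum has $2^n$ terms and each one is at most $2^n$ by the bound above, the total cannot exceed $2^n \cdot 2^n = 2^{2n}$. Parseval forces equality, so every summand must achieve the maximum, i.e., $\hat{\chi}_h(\aBold)^2 = 2^n$ for every $\aBold$. Equivalently $\hat{\chi}_h(\aBold) = \pm 2^{n/2}$, which via the relation above gives $d(l_\aBold, h) = 2^{n-1} \pm 2^{n/2-1}$, and hence $d(f \oplus l_\aBold, g) = 2^{n-1} \pm 2^{n/2-1}$ for all $\aBold$.

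There is no real obstacle here: the argument is a one-line application of Parseval after the Walsh-transform translation, which also implicitly reproves that $h = f \oplus g$ must in fact be Boolean bent (consistent with Lemma~\ref{lemma: Sum is bent trivially}). The only thing to be careful about is the sign bookkeeping between Hamming distance and Walsh coefficient, which is routine.
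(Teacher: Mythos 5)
Your proof is correct and follows essentially the same route as the paper's: both translate the distance hypothesis into the bound $|\hat{\chi}_{f\oplus g}(\aBold)|\le 2^{n/2}$ and then use Parseval's identity to force equality in every term. The paper phrases this as a contradiction with an explicit parameter $\epsilon_\aBold$, while you argue directly that each of the $2^n$ summands must attain the maximum $2^n$, but the mathematical content is identical.
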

\begin{proof}
	We prove the lemma by contradiction. Recall that by Parseval's identity, for every Boolean function $h$ on $\F_2^n$ we have $\sum_{\aBold\in\F_2^n}\hat{\chi}_h(\aBold)^2 = 2^{2n}$. Suppose that for all $\aBold\in\F_2^n$ we have 
	$d(f\oplus l_\aBold,g) = 2^{n-1}-2^{n/2-1}+\epsilon_\aBold$ with $0 \le \epsilon_\aBold \le 2^{n/2}$, and suppose that for 
	at least one $\tilde{\aBold}$ we have $0 < \epsilon_{\tilde{\aBold}} < 2^{n/2}$. Then 
	\begin{align*}
	\hat{\chi}_{f\oplus g}(\aBold) & = \sum_{\xBold\in\F_2^n}(-1)^{g(\xBold)\oplus f(\xBold)\oplus \aBold\cdot \xBold} 
	= 2^n - 2d(f\oplus l_\aBold,g) \\
	& = 2^n - 2(2^{n-1}-2^{n/2-1}+\epsilon_\aBold) = 2^{n/2} - 2\epsilon_\aBold.
	\end{align*}
	With $0 \le \epsilon_\aBold \le 2^{n/2}$ we have $\hat{\chi}_{f\oplus g}(\aBold)^2 = (2^{n/2} - 2\epsilon_\aBold)^2 \le 2^n$ with equality if and only if $\epsilon_\aBold = 0$ or $\epsilon_\aBold = 2^{n/2}$. However by assumption, for 
	$\tilde{\aBold}$ we then have $\hat{\chi}_{f\oplus g}(\tilde{\aBold})^2 < 2^n$. This contradicts Parseval's identity for the function $f\oplus g$ on $\F_2^n$. 
\end{proof}
	Lemma~\ref{lemma: Sum is bent trivially} and Lemma~\ref{lemma: Sufficient condition} can be trivially extended to the vectorial bent case, by considering the necessary and sufficient conditions component-wise.
\begin{corollary}\label{corollary: Sufficient condition vectorial}
	Let $n$ be even and $F,G$ be two $(n,m)$-functions, which are not necessarily bent. The following statements are equivalent.
	\begin{enumerate}
		\item The sum $F\oplus G$ is an  $(n,m)$-bent function.
		\item For all nonzero $\vBold\in\F_2^m$ the equality $d(F_\vBold\oplus l_\aBold, G_\vBold) = 2^{n-1}\pm 2^{n/2-1}$ holds for all $\aBold\in\F_2^n$.
		\item For all nonzero $\vBold\in\F_2^m$ the inequality $2^{n-1}-2^{n/2-1} \le d(F_\vBold\oplus l_\aBold, G_\vBold) \le 2^{n-1}+2^{n/2-1}$ holds for all $\aBold\in\F_2^n$.
	\end{enumerate}
\end{corollary}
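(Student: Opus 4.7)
The plan is to reduce the corollary to the two Boolean lemmas by using the defining property that an $(n,m)$-function $H$ is bent if and only if every nonzero component $H_\vBold$ is a Boolean bent function on $\F_2^n$. Applied to $H = F \oplus G$ this gives $(F\oplus G)_\vBold = F_\vBold \oplus G_\vBold$, so bentness of $F\oplus G$ is equivalent to the Boolean bentness of $F_\vBold \oplus G_\vBold$ for every nonzero $\vBold \in \F_2^m$.

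For the implication $(1) \Leftrightarrow (2)$, I would fix a nonzero $\vBold$ and apply Lemma \ref{lemma: Sum is bent trivially} to the pair of Boolean functions $f := F_\vBold$ and $g := G_\vBold$. That lemma asserts that $f \oplus g$ is bent on $\F_2^n$ if and only if $d(f \oplus l_\aBold, g) = 2^{n-1} \pm 2^{n/2-1}$ for all $\aBold \in \F_2^n$. Quantifying this equivalence over all nonzero $\vBold \in \F_2^m$ yields exactly the equivalence between (1) and (2).

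For $(2) \Leftrightarrow (3)$, the direction $(2) \Rightarrow (3)$ is immediate since $\{2^{n-1}-2^{n/2-1}, 2^{n-1}+2^{n/2-1}\}$ is contained in the closed interval $[2^{n-1}-2^{n/2-1}, 2^{n-1}+2^{n/2-1}]$. The converse $(3) \Rightarrow (2)$ follows by applying Lemma \ref{lemma: Sufficient condition} separately to each pair $(F_\vBold, G_\vBold)$ with $\vBold \neq \mathbf{0}$: the hypothesis (3) says precisely that the distance lies in the prescribed interval for all $\aBold$, and the lemma upgrades this to the equality $d(F_\vBold \oplus l_\aBold, G_\vBold) = 2^{n-1} \pm 2^{n/2-1}$, which is (2).

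There is essentially no main obstacle here, since the corollary is, as the paper itself notes, a trivial component-wise extension; the only thing to be careful about is that the three conditions in the corollary are all uniform in $\vBold$, so the component-wise translation preserves the "for all $\vBold$" quantifier throughout. No additional Parseval-type argument is needed at the vectorial level because all of the work has already been done at the scalar level inside Lemma \ref{lemma: Sufficient condition}.
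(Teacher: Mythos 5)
Your proof is correct and matches the paper's intent exactly: the paper gives no separate argument for this corollary, stating only that Lemmas \ref{lemma: Sum is bent trivially} and \ref{lemma: Sufficient condition} ``can be trivially extended to the vectorial bent case, by considering the necessary and sufficient conditions component-wise,'' which is precisely your reduction via $(F\oplus G)_\vBold = F_\vBold\oplus G_\vBold$ and the fact that an $(n,m)$-function is bent iff all its nonzero components are Boolean bent.
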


In the following statement, we provide a coding-theoretic characterization of extendable and lonely bent functions.

\begin{theorem}\label{theorem: Lonely bent functions}
	Let $F$ be an $(n,m)$-bent function with $m\le n/2-1$. Then $F$ is extendable if and only if the linear code $\mathcal{C}_F$ has the covering radius $\rho(\mathcal{C}_F)=2^{n-1}-2^{n/2-1}$. The metric complement $\widehat{\mathcal{C}_F}$ of $\mathcal{C}_F$ for an extendable $(n,m)$-bent function $F$ is
	\begin{equation}\label{equation: metric component}
		\widehat{\mathcal{C}_F} = \{f\in\mathcal{B}_n\;:\;f\oplus F_{\vBold}\;\mbox{is bent for all }\,\vBold\in\F_2^m \}.
	\end{equation}
\end{theorem}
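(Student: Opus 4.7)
The plan is to translate codewords of $\mathcal{C}_F$ into a Walsh-transform computation, express $d(f,\mathcal{C}_F)$ as a maximum over the Walsh spectra of the functions $f\oplus F_\bBold$, and then extract both the covering radius bound and the characterization of extendability by a single application of Parseval's identity.

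First, I would observe that, by the definition of $\mathcal{C}_F$, every codeword can be uniquely written as $F_\bBold\oplus l_\aBold\oplus c$ for some $(\aBold,\bBold,c)\in\F_2^n\times\F_2^m\times\F_2$, using the identity $\hat{\chi}_g(\aBold)=2^n-2d(g,l_\aBold)$ valid for any Boolean function $g$ on $\F_2^n$. This gives
$$d(f,F_\bBold\oplus l_\aBold\oplus c)\;=\;2^{n-1}-\tfrac{1}{2}(-1)^{c}\,\hat{\chi}_{f\oplus F_\bBold}(\aBold),$$
and minimizing over $(\aBold,\bBold,c)$ yields the key formula
$$d(f,\mathcal{C}_F)\;=\;2^{n-1}-\tfrac{1}{2}\max_{\bBold\in\F_2^m,\,\aBold\in\F_2^n}\bigl|\hat{\chi}_{f\oplus F_\bBold}(\aBold)\bigr|.$$

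Next, I would apply Parseval's identity $\sum_{\aBold}\hat{\chi}_{f\oplus F_\bBold}(\aBold)^{2}=2^{2n}$ for each fixed $\bBold$, obtaining $\max_{\aBold}|\hat{\chi}_{f\oplus F_\bBold}(\aBold)|\ge 2^{n/2}$ and hence $d(f,\mathcal{C}_F)\le 2^{n-1}-2^{n/2-1}$, so that $\rho(\mathcal{C}_F)\le 2^{n-1}-2^{n/2-1}$. Equality $d(f,\mathcal{C}_F)=2^{n-1}-2^{n/2-1}$ holds exactly when $|\hat{\chi}_{f\oplus F_\bBold}(\aBold)|\le 2^{n/2}$ for all $\aBold$ and all $\bBold\in\F_2^m$; together with the Parseval lower bound this forces $|\hat{\chi}_{f\oplus F_\bBold}(\aBold)|=2^{n/2}$ for every $\aBold$ and every $\bBold$, i.e. $f\oplus F_\bBold$ is bent for every $\bBold\in\F_2^m$ (taking $\bBold=\mathbf{0}$ shows in particular that $f\in\mathcal{B}_n$).

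Finally, I would match this condition with extendability directly from the definition. An $(n,m)$-bent function $F$ is extendable iff there is a Boolean bent function $g$ for which every nonzero component of $G=(F,g)^T$ is bent. Writing these components as $F_\bBold\oplus c\cdot g$ with $(\bBold,c)\in(\F_2^m\times\F_2)\setminus\{\mathbf{0}\}$, the case $c=0$ is automatic because $F$ is already bent, while the case $c=1$ requires $F_\bBold\oplus g$ to be bent for every $\bBold\in\F_2^m$ (the choice $\bBold=\mathbf{0}$ enforcing that $g$ itself be bent). Combined with the characterization of the previous paragraph, this says: $F$ is extendable iff some $f$ satisfies $d(f,\mathcal{C}_F)=2^{n-1}-2^{n/2-1}$, i.e.\ iff $\rho(\mathcal{C}_F)=2^{n-1}-2^{n/2-1}$. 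Moreover, in the extendable case, $\widehat{\mathcal{C}_F}$ is by definition the set of $f$ attaining this maximal distance, which is exactly the right-hand side of~\eqref{equation: metric component}.

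There is no real obstacle here; the only bookkeeping point is to absorb the constant term $c$ inside the sign of the Walsh coefficient so that $|\hat{\chi}_{f\oplus F_\bBold}(\aBold)|$ appears uniformly, after which the whole result is a clean consequence of Parseval's identity and the definition of extendability. The Nyberg bound (i.e. the hypothesis $m\le n/2-1$) plays no role in the logical derivation; it only ensures that the conclusion is not vacuous, since for $m=n/2$ extension is ruled out by dimension.
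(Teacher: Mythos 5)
Your proposal is correct and follows essentially the same route as the paper: the Parseval step that upgrades the bound $|\hat{\chi}_{f\oplus F_{\bBold}}(\aBold)|\le 2^{n/2}$ for all $\aBold$ to equality everywhere is precisely the content of the paper's Lemma~\ref{lemma: Sufficient condition} and Corollary~\ref{corollary: Sufficient condition vectorial}, which you have simply inlined in Walsh-transform rather than Hamming-distance language. The remaining bookkeeping --- writing the codewords of $\mathcal{C}_F$ as $F_{\bBold}\oplus l_{\aBold}\oplus c$, noting $\rho(\mathcal{C}_F)\le\rho(\mathcal{RM}(n,1))$, and matching the resulting condition with the component-wise definition of extendability --- mirrors the paper's argument as well.
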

\begin{proof}
	Let $f$ be a Boolean function on $\F_2^n$ and $\tilde{F}$ be an $(n,m+1)$-function, defined as follows $\tilde{F}\colon\xBold\mapsto (F(\xBold),f(\xBold))$. Clearly, for any such a function $\tilde{F}$ the following inequality holds,
	$$\rho(\mathcal{C}_{\tilde{F}})\le \rho(\mathcal{C}_{F})=2^{n-1}-2^{n/2-1}$$ with equality if only if $f\not\in \mathcal{C}_F$ is bent (so that the distance from $\mathcal{RM}(n,1)$ is kept) and satisfies $d(f, F_{\vBold}\oplus l) \ge 2^{n-1}-2^{n/2-1}$ for all 
	$l \in\mathcal{RM}(n,1)$ and $\vBold\in\F_2^m$.
	Note that then $f$ also satisfies $d(f, F_{\vBold}\oplus l) \le 2^{n-1}+2^{n/2-1}$ for all 
	$l \in\mathcal{RM}(n,1)$ and $\vBold\in\F_2^m$. 
	With Corollary~\ref{corollary: Sufficient condition vectorial} we then conclude, that $F$ is not extendable if and only if $\rho(\mathcal{C}_F) < 2^{n-1}-2^{n/2-1}$. Finally, the  metric complement $\widehat{\mathcal{C}_F}$ is given by~\eqref{equation: metric component}.
\end{proof}
\begin{remark}
	Alternatively one can describe the metric complement of $A=\mathcal{C}_F$ for an extendable $(n,m)$-bent function $F$ as follows
	\[ \widehat{A} = \{f \colon f\oplus g \mbox{ is bent for all } g \in A \}. \]
	This description of the metric complement then also applies to $A=\mathcal{RM}(n,1)$. Finally we would like to mention, that for $(n,m)$-bent functions $F$ with $n=6$ the values $|\widehat{\mathcal{C}_F}|/2^{n+1}$ were studied under the name \emph{number of bent friends} in~\cite{polujan2020design}, however outside the context of metric properties of Boolean and vectorial functions.
\end{remark}

\begin{remark}
	In the light of the Theorem \ref{theorem: Lonely bent functions}, the question about the existence of Boolean bent functions $f$ on $\F_2^n$ satisfying $\rho(\mathcal{C}_{f})<\rho(\mathcal{RM}(n,1))$, is the same as the existence question of lonely Boolean bent functions.
\end{remark}

From Lemma \ref{lemma: Sufficient condition} or Theorem \ref{theorem: Lonely bent functions} we can straightforwardly infer some more statements, which may shed a light on covering radius and metric complement of the codes $\mathcal{C}_F$, and on the existence of lonely and extendable bent functions. Further we denote by $B(f, r):=\{ g\in\mathfrak{B}_n \colon d(f, g) < r \}$ the \emph{open Hamming ball} of radius $r$ centered at the Boolean function $f$ on $\F_2^n$.
\begin{corollary}\label{corollary: Bent and covering radius}
	Let $\rho = 2^{n-1}-2^{n/2-1}$ be the covering radius of $\mathcal{RM}(n,1)$ and $\mathcal{B}_n$ be the set of Boolean bent functions on $\F_2^n$. The following hold.
	\begin{enumerate}
		\item A Boolean function $f$ on $\F_2^n$ is the sum of two bent functions on $\F_2^n$ if and only if
		\begin{equation*}
		\mathcal{B}_n\not\subset \bigcup\limits_{l \in\mathcal{RM}(n,1)} B(f\oplus l, \rho).
		\end{equation*}
		\item An $(n,m)$-bent function $F$ is non-extendable (lonely) if and only if
		\begin{equation*}
		\mathcal{B}_n \subset \bigcup\limits_{\vBold \in \F_2^m,l \in\mathcal{RM}(n,1)} B(F_\vBold \oplus l, \rho).
		\end{equation*}
		\item Let $n=2m$ and $F$ be a vectorial $(n,m)$-bent function given as
		$F( \xBold) = (f_1( \xBold),\ldots, f_m( \xBold))^T$. We define the $(n,m-1)$-bent function $F'$ as follows $F'( \xBold):= (f_1( \xBold),\ldots, f_{m-1}( \xBold))^T
		$. Then $\mathcal{C}_{F'}$ is a linear $[2^n,3n/2,2^{n-1}-2^{n/2-1}]$-code with covering radius $2^{n-1}-2^{n/2-1}$. The metric complement $\widehat{\mathcal{C}_{F'}}$ of $\mathcal{C}_{F'}$ contains 
		$\{f_m \oplus l \colon l\in \mathcal{RM}(n,1) \}$. This set is exactly the metric complement if and only if $F$ is the only way to extend $F'$ to an $(n,m)$-bent function
		(up to the addition of an affine function to the component $f_m$).
		\item Let $F=(f_1(\xBold),\ldots,f_m(\xBold))^T$ be an $(n,m)$-bent function, which is extendable by a Boolean bent function $f$ on $\F_2^n$. Then $F$ is extendable by any bent function $f'$ on $\F_2^n$ from the set $f\oplus\langle f_1,\ldots f_m \rangle$. In general the metric complement of the linear code $\mathcal{C}_{F}$ for an extendable $(n,m)$-bent function $F(\xBold)=(f_1(\xBold),\ldots,f_m(\xBold))^T$ has the following structure
			\begin{equation}\label{equation: Structure of metric component}
				\widehat{\mathcal{C}_F}=\left(\bigsqcup_{f'}  f' \oplus \langle f_1,\ldots f_m \rangle\right) \oplus\  \mathcal{RM}(n,1),
			\end{equation}
		where $f'$ runs through all different coset leaders, which extend $F$.
	\end{enumerate}
\end{corollary}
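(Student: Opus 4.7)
The plan is to derive all four parts by combining Theorem~\ref{theorem: Lonely bent functions} and Lemma~\ref{lemma: Sufficient condition} with the translation-invariance of distance to a linear code. For part~1, I would observe that $\mathcal{B}_n \not\subset \bigcup_{l \in \mathcal{RM}(n,1)} B(f \oplus l, \rho)$ is, by definition of the open ball, equivalent to the existence of a bent function $g$ with $d(g, f \oplus l) \ge \rho$ for every affine $l$. Rewriting $d(g, f \oplus l) = d(g \oplus f, l)$, this says $\nl(g \oplus f) \ge \rho$; since $\rho$ is the maximum nonlinearity in even dimension, equality must hold and $g \oplus f$ is bent, so $f = g \oplus (g \oplus f)$ is a sum of two bent functions. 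The converse is immediate. For part~2, I would apply this argument component-wise: via Corollary~\ref{corollary: Sufficient condition vectorial}, $F$ is extendable by a bent $f$ if and only if $f \oplus F_\vBold$ is bent for every nonzero $\vBold$, which translates to $f \in \mathcal{B}_n \setminus \bigcup_{\vBold, l} B(F_\vBold \oplus l, \rho)$. Negating yields the stated characterization of non-extendability.

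For part~3, the dimension $n + (m-1) + 1 = 3n/2$ and minimum distance $2^{n-1} - 2^{n/2-1}$ of $\mathcal{C}_{F'}$ are read off from Corollary~\ref{corollary: Coding-theoretic characterization of bent functions} using $n = 2m$. Since $F'$ is extended by $f_m$ to the $(n,m)$-bent function $F$, Theorem~\ref{theorem: Lonely bent functions} applies and gives $\rho(\mathcal{C}_{F'}) = 2^{n-1} - 2^{n/2-1}$. The inclusion $\{f_m \oplus l : l \in \mathcal{RM}(n,1)\} \subseteq \widehat{\mathcal{C}_{F'}}$ then follows because each $f_m \oplus F'_\vBold$ is a component of $F$, hence bent, so $d(f_m, F'_\vBold \oplus l') \ge \rho$ for all $\vBold, l'$, and because translation by any element of $\mathcal{C}_{F'}$ preserves the distance to $\mathcal{C}_{F'}$. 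The equality criterion follows by identifying $\widehat{\mathcal{C}_{F'}}$ with the set of all Boolean bent extensions of $F'$, exactly as in the proof of Theorem~\ref{theorem: Lonely bent functions}.

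Part~4 splits into two assertions. The first one I would verify directly: if $G = (f_1, \ldots, f_m, f)^T$ is $(n, m+1)$-bent and $g \in \langle f_1, \ldots, f_m \rangle$, then every nonzero component of $(f_1, \ldots, f_m, f \oplus g)^T$ coincides with a nonzero component of $G$ via an invertible change of output basis, hence is bent. For the structural formula~\eqref{equation: Structure of metric component}, I would invoke the translation-invariance of $\widehat{\mathcal{C}_F}$ under $\mathcal{C}_F$: the metric complement decomposes as a disjoint union of cosets of $\mathcal{C}_F = \mathcal{RM}(n,1) \oplus \langle f_1, \ldots, f_m \rangle$, and selecting one coset leader $f'$ per coset (each of which necessarily extends $F$ by Theorem~\ref{theorem: Lonely bent functions}) yields the displayed formula. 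The main subtlety throughout is the equality statement in part~3: distinct bent components such as $f_m$ and $f_m \oplus f_i$ a priori give different extensions of $F'$, yet they lie in the same coset of $\mathcal{C}_{F'}$; the phrase ``up to the addition of an affine function to the component $f_m$'' must therefore be read modulo the larger code $\mathcal{C}_{F'}$, a point that the coset decomposition in part~4 makes transparent.
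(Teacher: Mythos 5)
Your proof is correct and follows exactly the route the paper intends: the corollary is stated there without its own proof, as a ``straightforward'' consequence of Lemma~\ref{lemma: Sufficient condition} and Theorem~\ref{theorem: Lonely bent functions}, and your unpacking of the open-ball conditions into $d(g\oplus f,l)\ge\rho$ for all affine $l$ (hence bentness of $g\oplus f$), combined with the description of $\widehat{\mathcal{C}_F}$ as the set of bent extensions and its decomposition into cosets of $\mathcal{C}_F$, is precisely that argument. Your closing remark on part~3 is moreover a genuine catch rather than a pedantic aside: since $f_m\oplus f_i$ always extends $F'$ and lies in $f_m\oplus\mathcal{C}_{F'}$ but not in $f_m\oplus\mathcal{RM}(n,1)$, the literal equality $\widehat{\mathcal{C}_{F'}}=\{f_m\oplus l\colon l\in\mathcal{RM}(n,1)\}$ can never hold for $m\ge 2$, so the ``exactly'' clause must indeed be read modulo the full code $\mathcal{C}_{F'}$, as the coset structure in part~4 makes explicit.
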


\begin{remark}\label{remark: Uniquely extendable bent functions}
	From the classification of vectorial bent functions in six variables~\cite[Figure IV.1.]{polujan2020design} and the structure of the metric complement~\eqref{equation: Structure of metric component}, one can see that the only uniquely extendable (up to the choice of the coset leader, having no affine terms in ANF) class of vectorial bent functions in 6 variables, is the class $C^2_1$ given in~\cite[Figure IV.1.]{polujan2020design}.
\end{remark}

By the definition of bent function and covering radius of $\mathcal{RM}(n,1)$, $n$ even, the union of the open Hamming balls $B(l,\rho)$ of radius $\rho=\rho(\mathcal{RM}(n,1))$ and centers $l\in\mathcal{RM}(n,1)$ contains all Boolean functions but the bent functions. Further we observe that Boolean functions of algebraic degree larger than $n/2$ have completely different behavior.

\begin{corollary}
	\label{allbentin}
	Let $\rho = 2^{n-1}-2^{n/2-1}$ and let $f$ be a Boolean function on $\F_2^n$ with $\deg(f)>n/2$. Then the union of the Hamming balls $B(f \oplus l, \rho)$ with radius $\rho$ and centers $f \oplus l$ for $l\in\mathcal{RM}(n,1)$ contains all 
	bent functions, i.e.,
	\begin{equation*}
	\mathcal{B}_n \subset \bigcup\limits_{l \in\mathcal{RM}(n,1)} B(f \oplus l, \rho).
	\end{equation*}
\end{corollary}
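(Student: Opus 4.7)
The plan is to argue by contradiction using the two lemmas immediately preceding the corollary, together with the standard fact that Boolean bent functions in $n$ variables have algebraic degree at most $n/2$.

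Suppose there exists a bent function $g \in \mathcal{B}_n$ that lies outside $\bigcup_{l \in \mathcal{RM}(n,1)} B(f \oplus l, \rho)$. Then for every affine function $l \in \mathcal{RM}(n,1)$ we have $d(f \oplus l, g) \ge \rho$. First I would exploit the fact that $\mathcal{RM}(n,1)$ is closed under adding the constant $1$: replacing $l_\aBold$ by $l_\aBold \oplus 1$ turns the distance $d(f \oplus l_\aBold, g)$ into $2^n - d(f \oplus l_\aBold, g)$, so the lower bound $d(f \oplus l_\aBold \oplus 1, g) \ge \rho$ is equivalent to the upper bound $d(f \oplus l_\aBold, g) \le 2^n - \rho = 2^{n-1} + 2^{n/2-1}$. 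Consequently
\[
2^{n-1} - 2^{n/2-1} \;\le\; d(f \oplus l_\aBold, g) \;\le\; 2^{n-1} + 2^{n/2-1} \qquad \text{for all } \aBold \in \F_2^n.
\]

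Next I would feed this into Lemma~\ref{lemma: Sufficient condition} to deduce that $d(f \oplus l_\aBold, g) \in \{2^{n-1} - 2^{n/2-1},\,2^{n-1} + 2^{n/2-1}\}$ for every $\aBold \in \F_2^n$. Then Lemma~\ref{lemma: Sum is bent trivially} applies directly and yields that the Boolean function $f \oplus g$ is bent on $\F_2^n$.

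Now for the contradiction: both $g$ and $f \oplus g$ are bent, hence both have algebraic degree at most $n/2$. Since the algebraic degree is subadditive with respect to $\oplus$, we obtain $\deg(f) = \deg(g \oplus (f \oplus g)) \le \max\{\deg(g), \deg(f \oplus g)\} \le n/2$, contradicting the hypothesis $\deg(f) > n/2$. Hence no such $g$ exists, which proves the desired inclusion. I do not foresee a genuine obstacle here: all the heavy lifting has already been done in Lemmas~\ref{lemma: Sum is bent trivially} and~\ref{lemma: Sufficient condition}, and the only subtle point to keep track of is the switch between the linear part $l_\aBold$ and the full affine coset (via the addition of the constant $1$) that turns a one-sided distance bound into the two-sided bound required to invoke Lemma~\ref{lemma: Sufficient condition}.
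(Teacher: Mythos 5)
Your proof is correct and follows essentially the same route as the paper: the paper observes that $\deg(f)>n/2$ forces $f$ not to be a sum of two bent functions (by Rothaus's degree bound and subadditivity of degree under $\oplus$) and then invokes the first part of Corollary~\ref{corollary: Bent and covering radius}, whereas you simply inline the relevant direction of that corollary by re-deriving it from Lemmas~\ref{lemma: Sum is bent trivially} and~\ref{lemma: Sufficient condition}, including the correct handling of the constant shift $l_\aBold\mapsto l_\aBold\oplus 1$ to obtain the two-sided distance bound. The underlying argument is identical.
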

\begin{proof}
	First, recall that a bent function can have algebraic degree at most $n/2$, see~\cite{ROTHAUS1976300}. Consequently, there cannot exist a bent function $g$ such that $g\oplus f$ is again bent, i.e., $f$ is not the sum of two bent functions. With the first part of Corollary \ref{corollary: Bent and covering radius}, the claim follows.
\end{proof}
Note that the collection of the Hamming balls $B(f\oplus l,\rho)$ with $l \in \mathcal{RM}(n,1)$ is the collection of Hamming balls $B(l,\rho)$ with the centers $l \in \mathcal{RM}(n,1)$ shifted by $f$.
\begin{remark}
If the algebraic degree of a Boolean function $f$ on $\F_2^n$ is $\deg(f)\le1$ respectively $\deg(f)\ge n/2+1$, then the union of the open Hamming balls $B(f\oplus l,\rho)$ with $l \in \mathcal{RM}(n,1)$ and $\rho=\rho(\mathcal{RM}(n,1))$, contains no respectively all bent functions. For a Boolean function $f$ on $\F_2^n$ of algebraic degree $d$ with $2\le d\le n/2$, we only know that there must be some bent function in the union of the corresponding Hamming balls. With Corollary \ref{corollary: Bent and covering radius}, we can formulate some equivalent statements.
\end{remark}
\begin{corollary}
	The following statements are equivalent.
	\begin{enumerate}
		\item There exists no lonely Boolean bent function on $\F_2^n$.
		\item For every Boolean bent function $f$ on $\F_2^n$ we have 
		$\mathcal{B}_n\not\subset \bigcup\limits_{l \in\mathcal{RM}(n,1)} B(f\oplus l, \rho)$.
	\end{enumerate}
\end{corollary}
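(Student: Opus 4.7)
The plan is to derive this corollary as an immediate specialization of item~2 of Corollary~\ref{corollary: Bent and covering radius} to the Boolean case $m=1$. Setting $m=1$ in that result, the vectorial function $F$ collapses to its single nonzero component $f$, so the ball-union $\bigcup_{\vBold \in \F_2^m,\,l \in \mathcal{RM}(n,1)} B(F_{\vBold}\oplus l,\rho)$ becomes simply $\bigcup_{l \in \mathcal{RM}(n,1)} B(f\oplus l,\rho)$. Consequently, a Boolean bent function $f \in \mathcal{B}_n$ is lonely if and only if $\mathcal{B}_n \subset \bigcup_{l \in \mathcal{RM}(n,1)} B(f\oplus l,\rho)$.

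First I would negate both sides of this specialized equivalence to conclude that $f \in \mathcal{B}_n$ is extendable precisely when $\mathcal{B}_n \not\subset \bigcup_{l \in \mathcal{RM}(n,1)} B(f\oplus l,\rho)$. Next I would quantify this biconditional universally over $f \in \mathcal{B}_n$: by definition, statement~(1) asserts that every Boolean bent function on $\F_2^n$ is extendable, and under the equivalence this universal statement is precisely~(2). The two implications of the claimed biconditional then follow formally.

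There is no real technical obstacle, since all the substance resides in Corollary~\ref{corollary: Bent and covering radius}, which in turn rests on Theorem~\ref{theorem: Lonely bent functions} and Corollary~\ref{corollary: Sufficient condition vectorial}. The only subtlety worth flagging in the write-up is the side condition $m \le n/2-1$ inherited from Theorem~\ref{theorem: Lonely bent functions}, which in the Boolean case forces $n \ge 4$. The degenerate case $n=2$ can be dispatched separately: by Nyberg's bound every $f \in \mathcal{B}_2$ is lonely ($m=n/2$), while at the same time $f \oplus \mathcal{RM}(2,1) = \mathcal{B}_2$ for each such $f$ (both sets have size $8$), so the open balls $B(f\oplus l,1)=\{f\oplus l\}$ cover $\mathcal{B}_2$; hence both statements are false for $n=2$ and the equivalence holds vacuously.
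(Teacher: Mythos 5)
Your proposal matches the paper's proof, which likewise obtains the statement by specializing the second part of Corollary~\ref{corollary: Bent and covering radius} to Boolean bent functions and quantifying over all of $\mathcal{B}_n$; your extra remarks on the side condition $m\le n/2-1$ and the degenerate case $n=2$ are harmless additions. (The only pedantic point worth noting is that the union in that corollary also ranges over $\vBold=\mathbf{0}$, but the balls $B(l,\rho)$ centered at affine functions contain no bent functions, so dropping them changes nothing.)
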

\begin{proof}
	Follows from the second part of Corollary \ref{corollary: Bent and covering radius} applied to Boolean bent functions.
\end{proof}
In a similar manner one can describe the bent sum decomposition problem. 
\begin{corollary}
The following statements are equivalent.
\begin{enumerate}
	\item Tokareva's Hypothesis~\ref{hypothesis: Tokarevas} holds.
	\item For every Boolean function $f$ on $\F_2^n$ the inclusion $\mathcal{B}_n\subset \bigcup\limits_{l \in\mathcal{RM}(n,1)} B(f\oplus l, \rho)$ holds if and only if $\deg(f)>n/2$.
\end{enumerate}
\begin{proof}
By the first part of Corollary~\ref{allbentin}, Tokareva's Hypothesis~\ref{hypothesis: Tokarevas} holds
is and only if for every Boolean function $f$ of degree at most $n/2$ we have
$\mathcal{B}_n\not\subset \bigcup\limits_{l \in\mathcal{RM}(n,1)} B(f\oplus l, \rho)$.
Since a Boolean functions of degree larger than $n/2$ cannot be the sum of two bent functions,
again by Corollary~\ref{allbentin} we have 
$\mathcal{B}_n\subset \bigcup\limits_{l \in\mathcal{RM}(n,1)} B(f\oplus l, \rho)$ for all Boolean functions
with degree larger than $n/2$. This completes the proof.
\end{proof}
\end{corollary}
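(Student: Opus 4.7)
The plan is to derive the equivalence directly from two pieces of the covering-radius dictionary already assembled in the preceding subsection. First I would invoke the first item of Corollary~\ref{corollary: Bent and covering radius}, which translates the property ``$f$ is a sum of two Boolean bent functions on $\F_2^n$'' into the non-inclusion $\mathcal{B}_n \not\subset \bigcup_{l \in \mathcal{RM}(n,1)} B(f \oplus l, \rho)$. Second, I would recall Corollary~\ref{allbentin}: whenever $\deg(f) > n/2$, the inclusion $\mathcal{B}_n \subset \bigcup_{l \in \mathcal{RM}(n,1)} B(f \oplus l, \rho)$ holds automatically, because no bent function on $\F_2^n$ can have algebraic degree exceeding $n/2$ and consequently such an $f$ cannot be written as a sum of two bent functions.

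With these two facts in hand, the proof reduces to parsing the biconditional in statement (2). The direction ``$\deg(f) > n/2 \Rightarrow$ inclusion'' is unconditional by Corollary~\ref{allbentin}, so statement (2) is equivalent to its converse, namely ``inclusion $\Rightarrow \deg(f) > n/2$'', or contrapositively ``$\deg(f) \le n/2 \Rightarrow \mathcal{B}_n \not\subset \bigcup_{l \in \mathcal{RM}(n,1)} B(f \oplus l, \rho)$''. Applying Corollary~\ref{corollary: Bent and covering radius} once more, the latter is nothing but the statement that every Boolean function of degree at most $n/2$ decomposes as a sum of two bent functions, i.e.\ Tokareva's Hypothesis~\ref{hypothesis: Tokarevas}. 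This simultaneously yields (1)$\Rightarrow$(2) and (2)$\Rightarrow$(1) after splitting along the two degree regimes.

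There is no genuine obstacle here: the substantive content has already been packaged into the two cited corollaries. The only thing to watch is the careful bookkeeping of the biconditional in (2), so that one does not conflate the unconditional half (supplied by Corollary~\ref{allbentin}) with the half that actually encodes the conjecture. A clean write-up would therefore be short, essentially two lines once the contrapositive reformulation is made explicit.
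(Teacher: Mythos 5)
Your proposal is correct and follows essentially the same route as the paper: both arguments combine the first part of Corollary~\ref{corollary: Bent and covering radius} (non-inclusion $\Leftrightarrow$ $f$ is a sum of two bent functions) with Corollary~\ref{allbentin} (the inclusion is automatic when $\deg(f)>n/2$), splitting statement (2) along the two degree regimes. Your write-up is in fact slightly cleaner in its bookkeeping of the biconditional than the paper's own (which contains some typographical slips in its citations), but the mathematical content is identical.
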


As it was mentioned in~\cite{Tok12}, the first-order Reed-Muller code $\mathcal{RM}(n,1)$ for $n$ even is a metrically regular set. This motivates the  following question.
\begin{question}\label{question: Metric regularity}
	Can the set $\mathcal{C}_F$ for some $(n,m)$-bent function $F$ be metrically regular?
\end{question}
As remarked above, by the Nyberg's bound the covering radius of $\mathcal{C}_F$ is smaller than $2^{n-1}-2^{n/2-1}$ if $F$ is an $(n,m)$-bent function with $m=n/2$.
\begin{example}
	Up to equivalence there are exactly two $(4, m)$-bent functions: one Boolean bent function $f$ and one vectorial $(4,2)$ bent function $F$, which algebraic normal forms are given below
	\begin{equation*}
		f(\xBold):=x_1 x_2 \oplus x_3 x_4\quad \mbox{and} \quad F(\xBold):=\begin{pmatrix}
		x_1 x_2 \oplus x_3 x_4 \\ x_1 x_2  \oplus x_1 x_4 \oplus x_2 x_3
		\end{pmatrix}.
	\end{equation*}
	Since $f$ is a coordinate function of $F$, it is extendable, and thus by Theorem~\ref{theorem: Lonely bent functions} we have that $\rho(\mathcal{C}_f)=\rho(\mathcal{RM}(4,1))=6$. However, since $F$ achieves the Nyberg's bound, it is lonely and hence, $\rho(\mathcal{C}_F)<\rho(\mathcal{C}_f)$. With a Magma~\cite{MR1484478} program it is not difficult to check, that $\rho(\mathcal{C}_F)=4$.
\end{example}
\begin{question}\label{question: Covering radius problem}
	Let $F$ be an $(n,n/2)$-bent function. What is the covering radius of $\mathcal{C}_F$? Does it depend on $F$? More general, how much can the covering radius of $\mathcal{C}_F$ for a non-extendable $(n,m)$-bent function $F$ drop?
\end{question}

We finish this section on the connection between properties of $\mathcal{C}_F$ and the extendability of a  (vectorial) bent function $F$ with a more general observation. Let $A$ be a subset of $\F_2^v$ containing $ \oBold$, and $B$ its metric complement, with distance $\rho$,  i.e., the covering radius of $A$ is $\rho$. For a subset $U\subset B$, let $AU:= \{ \aBold \oplus \lambda \uBold\,:\,  \aBold\in A, \uBold\in U, \lambda\in\F_2\}$.

\begin{theorem}
	The set $AU$ has covering radius $\rho$ if and only if there exists $\wBold\in B\setminus U$ such that 
	$w\oplus U \subset B$. The metric complement of $AU$ is then 
	$\widehat{AU} = \{\wBold\in B\setminus U\,:\,\wBold\oplus U\subset B\}$.
	\begin{proof}
		The covering radius of $AU$ is $\rho$ if and only if there exists $\wBold \in B$ (to keep the distance from $A$) 
		such that $d( \aBold\oplus\uBold,\wBold)\ge \rho$ for all $\uBold\in U$ and $ \aBold\in A$.
		Suppose that for some $\wBold\in B$ we have $\uBold\oplus\wBold\in B$ for all $\uBold\in U$. Note that since 
		$\oBold\in A$ we have $\wBold\not\in U$. Then for all $\aBold\in A$ and $\uBold\in U$ we have
		\begin{equation}\label{distance}
		d( \aBold\oplus\uBold,\wBold) = \wt( \aBold\oplus\uBold\oplus\wBold) = 
		d( \aBold,\uBold\oplus\wBold) \ge \rho.
		\end{equation} 
		(Note that in particular for $ \aBold=\oBold$ we get $d(\uBold,\wBold)\ge \rho$.) Conversely suppose that $AU$ has covering radius $\rho$, i.e., there exists $\wBold\in B$ such that that $d( \aBold\oplus\uBold,\wBold)\ge \rho$ for all $\uBold\in U$ and $ \aBold\in A$. Then by $(\ref{distance})$ we have $d( \aBold,\uBold\oplus\wBold) \ge \rho$, i.e., $\uBold\oplus\wBold\in B$ for all $\uBold\in U$ (note that $\wBold\not\in U$ since $ \oBold\not\in B$). By definition, the metric complement of $AU$ contains all $\wBold\in B$ with $d( \aBold\oplus\uBold,\wBold) \ge \rho$ for all $ \aBold\in A$, $\uBold\in U$. By the above arguments this are exactly all $\wBold \in B$	for which $\wBold\oplus U\subset B$.
	\end{proof}
\end{theorem}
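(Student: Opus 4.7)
The plan is to exploit the obvious inclusion $A\subset AU$ (obtained by the choice $\lambda=0$ in the definition of $AU$), which immediately yields $\rho(AU)\le\rho(A)=\rho$. Consequently, $AU$ has covering radius exactly $\rho$ if and only if there exists some $\wBold\in\F_2^v$ with $d(\wBold,\yBold)\ge\rho$ for every $\yBold\in AU$. The problem thus reduces to determining when such a $\wBold$ exists, and I would do this by splitting the condition $d(\wBold,\aBold\oplus\lambda\uBold)\ge\rho$ according to the value of $\lambda\in\F_2$.

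For $\lambda=0$ the condition reads $d(\wBold,\aBold)\ge\rho$ for every $\aBold\in A$, which by the very definition of the metric complement of $A$ (recall that $d(\xBold,A)\le\rho$ always holds, with equality precisely on $B$) is equivalent to $\wBold\in B$. For $\lambda=1$, I would apply the standard translation identity
\begin{equation*}
d(\wBold,\aBold\oplus\uBold)=\wt(\wBold\oplus\aBold\oplus\uBold)=d(\wBold\oplus\uBold,\aBold),
\end{equation*}
so that the condition becomes $d(\wBold\oplus\uBold,A)\ge\rho$ for every $\uBold\in U$; again by the definition of $B$, this says exactly $\wBold\oplus U\subset B$. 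To conclude that $\wBold\in B\setminus U$ (and not merely $\wBold\in B$), I would use the hypothesis $\oBold\in A$: since $d(\oBold,A)=0<\rho$ we have $\oBold\notin B$, so if $\wBold$ happened to lie in $U$, then choosing $\uBold=\wBold$ would give $\wBold\oplus\uBold=\oBold\in B$, a contradiction.

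The converse direction is an immediate reversal of the same computation: given any $\wBold\in B\setminus U$ with $\wBold\oplus U\subset B$, both cases $\lambda=0$ and $\lambda=1$ of the distance inequality are satisfied, so $d(\wBold,AU)\ge\rho$ and the covering radius of $AU$ attains $\rho$. The characterization of the metric complement $\widehat{AU}$ then falls out for free, because the argument identifies the set of $\wBold$ achieving $d(\wBold,AU)=\rho$ as exactly $\{\wBold\in B\setminus U:\wBold\oplus U\subset B\}$.

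No step looks genuinely difficult here; the proof is essentially an unpacking of definitions combined with the translation trick $d(\wBold,\aBold\oplus\uBold)=d(\wBold\oplus\uBold,\aBold)$ that $\F_2^v$ affords. The only subtle point, and the one most easily overlooked, is the derivation of $\wBold\notin U$ from $\oBold\in A$; this is where the hypothesis that $A$ contains the origin is actually used.
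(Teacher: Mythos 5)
Your proposal is correct and follows essentially the same route as the paper's proof: the same case split over $\lambda\in\F_2$ (equivalently, over $A$ versus $A\oplus U$), the same translation identity $d(\wBold,\aBold\oplus\uBold)=d(\wBold\oplus\uBold,\aBold)$, and the same use of $\oBold\in A$ to rule out $\wBold\in U$. The only cosmetic difference is that you make the inclusion $A\subseteq AU$ and the resulting bound $\rho(AU)\le\rho$ explicit, which the paper leaves as the parenthetical remark ``to keep the distance from $A$''.
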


	\subsection{Design-theoretic approach}\label{subsection 5.2}
	In this subsection, we use nonvanishing flats to show that vanishing flats of extendable bent functions must be highly structured combinatorial objects, from what follows, that the existence of certain subdesigns in $\VF(F)$ is a measure of extendability of an $(n,m)$-bent function $F$.

	For a vectorial $(n,m)$-bent function $F(\xBold) = (f_1(\xBold),\ldots,f_m(\xBold))^T$ we define the  projections $F_s(\xBold):= (f_1(\xBold),\ldots,f_s(\xBold))^T$ and $F_{m-s}(\xBold):=(f_{s+1}(\xBold),\ldots,f_m(\xBold))^T$, which are $(n,s)$- and  $(n,m-s)$-bent functions, respectively.
	Let $\mathcal{VF}_F$, $\mathcal{VF}_{F_s}$ and $\mathcal{VF}_{F_{m-s}}$ be the block sets of the  vanishing flats of $F$, $F_s$ and $F_{m-s}$, respectively, and let
	\begin{equation*}
		\mathcal{NF}_{F}: = \bigsqcup_{\vBold\in\F_2^m\setminus\{\mathbf{0}\}}\mathcal{NF}_{\vBold,F}
	\end{equation*}
	be the (disjoint) union of the block sets of the nonvanishing flats of the vectorial bent function $F$. Define the collection of the vanishing flats of the projection $F_s$, disjoint from the vanishing flats of $F_{m-s}$ in the following way
	\begin{equation*}
		\mathcal{DF}_{F/F_{m-s}}: = \mathcal{VF}_{F_s} \cap \mathcal{NF}_{F_{m-s}}.
	\end{equation*}
	We can associate the incidence structures with the defined collections of 4-subsets as follows
	$$\mathcal{DF}(F/F_{m-s}):= (\F_2^n,\mathcal{DF}_{F/F_{m-s}}) \quad \mbox{and} \quad \mathcal{NF}(F) := (\F_2^n,\mathcal{NF}_{F}).$$
	In order to analyze the combinatorial structure of the introduced incidence structures we need the following lemma, which summarizes the relations between parameters of designs having no common blocks.
	\begin{lemma}\label{obvious}
		Let $D_1 = (\mathcal{P},\mathcal{B}_1)$ and $D_2 = (\mathcal{P},\mathcal{B}_2)$ be $t$-$(v,k,\lambda_1)$
		and $t$-$(v,k,\lambda_2)$ designs, respectively, and suppose that $\mathcal{B}_1\cap\mathcal{B}_2 = \varnothing$.
		\begin{enumerate}
			\item Then
			$D = (\mathcal{P},\mathcal{B}_1\cup\mathcal{B}_2)$ is a $t$-$(v,k,\lambda_1+\lambda_2)$ design.
			\item Conversely, if $D_1=(\mathcal{P},\mathcal{B}_1)$ is a $t$-$(v,k,\lambda_1)$ design and 
			$D = (\mathcal{P},\mathcal{B}_1\cup\mathcal{B}_2)$ is a $t$-$(v,k,\lambda_1+\lambda_2)$ design,
			then $D_2$ is a $t$-$(v,k,\lambda_2)$ subdesign of $D$.
		\end{enumerate}
	\end{lemma}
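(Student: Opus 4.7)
The lemma is essentially a bookkeeping statement about counting incidences of $t$-subsets with blocks, so my plan is to reduce both parts to direct counting arguments on the valency of $t$-subsets, taking advantage of the disjointness hypothesis $\mathcal{B}_1\cap\mathcal{B}_2=\varnothing$.

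For the first part, I would pick an arbitrary $t$-subset $T\subseteq\mathcal{P}$ and denote by $r_i(T)$ the number of blocks of $\mathcal{B}_i$ containing $T$, so that $r_1(T)=\lambda_1$ and $r_2(T)=\lambda_2$ by hypothesis. Since the blocks of $\mathcal{B}_1\cup\mathcal{B}_2$ are in bijection (as a multiset, but even as a set, since $\mathcal{B}_1\cap\mathcal{B}_2=\varnothing$) with the disjoint union of the blocks of $\mathcal{B}_1$ and $\mathcal{B}_2$, the number of blocks of $D$ containing $T$ equals $r_1(T)+r_2(T)=\lambda_1+\lambda_2$, which is independent of the choice of $T$. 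Together with the fact that every block of $D$ has cardinality $k$ and that $\mathcal{P}$ has cardinality $v$, this shows that $D$ is a $t$-$(v,k,\lambda_1+\lambda_2)$ design.

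For the second part, I would again fix an arbitrary $t$-subset $T\subseteq\mathcal{P}$ and count the blocks of $\mathcal{B}_2$ containing $T$ as a difference. By the hypothesis that $D$ and $D_1$ are $t$-designs with valencies $\lambda_1+\lambda_2$ and $\lambda_1$ respectively, and by the disjointness of $\mathcal{B}_1$ and $\mathcal{B}_2$, the number of blocks of $\mathcal{B}_2$ containing $T$ is $(\lambda_1+\lambda_2)-\lambda_1=\lambda_2$, independent of $T$. Since $\mathcal{B}_2\subseteq\mathcal{B}_1\cup\mathcal{B}_2$ and all blocks have size $k$, the incidence structure $D_2=(\mathcal{P},\mathcal{B}_2)$ is a $t$-$(v,k,\lambda_2)$ subdesign of $D$ in the sense defined earlier in the paper.

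There is no genuine obstacle here; the only subtle point worth flagging in the write-up is that the disjointness $\mathcal{B}_1\cap\mathcal{B}_2=\varnothing$ is exactly what ensures that counting blocks in $\mathcal{B}_1\cup\mathcal{B}_2$ is additive and no repeated blocks are double-counted, so the additivity of the valencies $\lambda_1+\lambda_2$ is rigorous rather than heuristic. Accordingly the proof can be kept to a few lines.
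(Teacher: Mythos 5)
Your proof is correct: the paper states this lemma without proof (it is even labelled ``obvious''), and your valency-counting argument --- additivity of the number of blocks through a fixed $t$-subset under the disjointness hypothesis, and subtraction for the converse --- is exactly the intended justification. Nothing is missing.
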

	Using the previous lemma we show that the incidence structures $\mathcal{DF}(F/F_{m-s})$ and $\mathcal{NF}(F)$ of vectorial bent functions are 2-designs and determine their parameters.
	\begin{proposition}
		\label{sumdes}
		For a vectorial $(n,m)$-bent function $F(\xBold) = (f_1(\xBold),\ldots,f_m(\xBold))^T$, consider the projections
		$F_s(\xBold) = (f_1(\xBold),\ldots,f_s(\xBold))^T$ and $F_{m-s}(\xBold) =(f_{s+1}(\xBold),\ldots,f_m(\xBold))^T$. With the notation above, the following hold.
		\begin{enumerate}
			\item $\mathcal{VF}_F = \mathcal{VF}_{F_s} \cap \mathcal{VF}_{F_{m-s}}$ and	$\mathcal{VF}_{F_s} = \mathcal{VF}_F \sqcup \mathcal{DF}_{F/F_{m-s}}$, where $\sqcup$ denotes a disjoint union.
			\item $\mathcal{DF}(F/F_{m-s})$ is a $2$-$(2^n,4,(2^{m-s}-1)\cdot 2^{n-m-1})$ design.
			\item $\mathcal{NF}(F)$ is a $2$-$(2^n,4,(2^m-1)\cdot 2^{n-m-1})$ design.
		\end{enumerate}
	\end{proposition}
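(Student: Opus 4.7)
The plan is to prove the three items in order, using the first one as a combinatorial bookkeeping identity and then bootstrapping the parameter computations in (2) and (3) from the known design structure of vanishing and nonvanishing flats established in Theorem~\ref{theorem: Bent functions via vanishing flats} and Theorem~\ref{theorem: Bent functions via nonvanishing flats}, combined with Lemma~\ref{obvious}.

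For item (1), I would argue purely from the definitions. A $4$-subset $\{\xBold_1,\xBold_2,\xBold_3,\xBold_4\}$ with $\bigoplus_{i}\xBold_i=\mathbf{0}$ is a block of $\mathcal{VF}_F$ iff $\bigoplus_{i} F(\xBold_i)=\mathbf{0}$. Since $F=(F_s,F_{m-s})^T$ is a stacking of coordinates, this condition splits into the two independent conditions $\bigoplus_{i} F_s(\xBold_i)=\mathbf{0}$ and $\bigoplus_{i} F_{m-s}(\xBold_i)=\mathbf{0}$, giving $\mathcal{VF}_F=\mathcal{VF}_{F_s}\cap\mathcal{VF}_{F_{m-s}}$. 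For the second equality, every block of $\mathcal{VF}_{F_s}$ comes with a unique value $\wBold:=\bigoplus_{i}F_{m-s}(\xBold_i)\in\F_2^{m-s}$. If $\wBold=\mathbf{0}$, the block is in $\mathcal{VF}_F$; otherwise it lies in $\mathcal{NF}_{\wBold}(F_{m-s})\subseteq\mathcal{NF}_{F_{m-s}}$, and hence in $\mathcal{DF}_{F/F_{m-s}}=\mathcal{VF}_{F_s}\cap\mathcal{NF}_{F_{m-s}}$. These two cases exhaust $\mathcal{VF}_{F_s}$ and are disjoint because a quadruple cannot simultaneously be a vanishing and a nonvanishing flat of $F_{m-s}$.

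For item (2), I would apply Theorem~\ref{theorem: Bent functions via vanishing flats} to the bent functions $F$ and $F_s$ (recall that projections of vectorial bent functions are vectorial bent), obtaining that $\mathcal{VF}(F)$ is a $2$-$(2^n,4,2^{n-m-1}-1)$ design and $\mathcal{VF}(F_s)$ is a $2$-$(2^n,4,2^{n-s-1}-1)$ design. The decomposition $\mathcal{VF}_{F_s}=\mathcal{VF}_F\sqcup\mathcal{DF}_{F/F_{m-s}}$ from part (1), combined with the subdesign half of Lemma~\ref{obvious}, then forces $\mathcal{DF}(F/F_{m-s})$ to be a $2$-design with valency
\begin{equation*}
(2^{n-s-1}-1)-(2^{n-m-1}-1)=2^{n-s-1}-2^{n-m-1}=(2^{m-s}-1)\cdot 2^{n-m-1},
\end{equation*}
as claimed.

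For item (3), Theorem~\ref{theorem: Bent functions via nonvanishing flats} gives that each $\mathcal{NF}_{\vBold}(F)$ with $\vBold\in\F_2^m\setminus\{\mathbf{0}\}$ is a $2$-$(2^n,4,2^{n-m-1})$ design. The union $\bigsqcup_{\vBold\neq\mathbf{0}}\mathcal{NF}_{\vBold,F}$ defining $\mathcal{NF}(F)$ is genuinely disjoint, because for any quadruple $\{\xBold_1,\ldots,\xBold_4\}$ with $\bigoplus_i\xBold_i=\mathbf{0}$ the value of $\bigoplus_i F(\xBold_i)$ is uniquely determined and hence the quadruple belongs to at most one $\mathcal{NF}_{\vBold,F}$. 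Iterating the additive part of Lemma~\ref{obvious} across the $2^m-1$ disjoint $2$-designs with common parameters yields that $\mathcal{NF}(F)$ is a $2$-$(2^n,4,(2^m-1)\cdot 2^{n-m-1})$ design, finishing the proof. There is no real obstacle here: once part (1) is in place, parts (2) and (3) are almost purely a disjoint-union bookkeeping argument, and the only point requiring care is to verify the disjointness of the various pieces before invoking Lemma~\ref{obvious}.
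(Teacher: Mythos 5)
Your proposal is correct and follows essentially the same route as the paper: part (1) by direct definition-chasing (splitting the condition $\bigoplus_i F(\xBold_i)=\mathbf{0}$ coordinate-wise and using that vanishing and nonvanishing flats of $F_{m-s}$ partition the affine Steiner quadruple system), and parts (2) and (3) by feeding the known parameters from Theorems~\ref{theorem: Bent functions via vanishing flats} and~\ref{theorem: Bent functions via nonvanishing flats} into the two halves of Lemma~\ref{obvious}. Your write-up is merely more explicit than the paper's about the valency arithmetic and the disjointness checks needed before invoking Lemma~\ref{obvious}.
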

	\begin{proof}
		Clearly, $\{ \xBold_1,\xBold_2,\xBold_3,\xBold_4 \}$ is a vanishing flat of $F$ if and only if it is a vanishing flat of both, $F_s$ and $F_{m-s}$, hence $\mathcal{VF}_F = \mathcal{VF}_{F_s} \cap \mathcal{VF}_{F_{m-s}}$. On the other hand, for $F_s$ seen as a projection of $F$ we can distinguish two kinds of vanishing flats, those which are also in $\mathcal{VF}_F$ (note that $\mathcal{VF}_F\subset\mathcal{VF}_{F_s}$), and those which are in $\mathcal{VF}_{F_s}$ but not in $\mathcal{VF}_{F_{m-s}}$. The latter is exactly the set $\mathcal{DF}_{F/F_{m-s}}$, and we obtain that $\mathcal{VF}_{F_s} = \mathcal{VF}_F \sqcup \mathcal{DF}_{F/F_{m-s}}$. With Lemma \ref{obvious}, $\mathcal{DF}(F/F_{m-s})$ is then a $2$-$(2^n,4,(2^{m-s}-1)\cdot 2^{n-m-1})$ design. Finally, $\mathcal{NF}(F)$ is a $2$-$(2^n,4,(2^m-1)\cdot 2^{n-m-1})$ design by Lemma \ref{obvious}.
	\end{proof}
	By Proposition \ref{sumdes}, the $2$-$(2^n,4,2^{n-s-1}-1)$ design $\VF(F)$ of a (vectorial) $(n,s)$-bent function $F$, which is a projection of a vectorial $(n,m)$-bent function $\tilde{F}$ for some $m>s$, has certain structural properties. With these observations, we can give a connection between extendability of a (vectorial) bent function and vanishing flats.
	\begin{corollary}\label{ex-able}
		Let $F$ be an $(n,s)$-bent function.
		\begin{enumerate}
			\item If $F$ is extendable, then there exist subdesigns $D=(\F_2^n,\mathcal{B})$ and $D_1=(\F_2^n,\mathcal{B}_1)$  of $\VF(F)$ with parameters $2$-$(2^n,4,2^{n-s-2}-1)$ and $2$-$(2^n,4,2^{n-s-2})$, respectively, such that $\VF(F) = D \sqcup D_1$.
			\item If $F$ is a projection of a vectorial $(n,s+r)$-bent function $\tilde{F}$ for some $s+r\le n/2$, then there exists a partition $$\VF(F)=D\sqcup \left( \bigsqcup_{i=1}^r D_i \right),$$ where $D=\VF(\tilde{F})$ is a $2$-$(2^n,4,2^{n-s-r-1}-1)$ design and for all $1 \le i \le r$ the incidence structures $D_i=(\F_2^n,\mathcal{B}_i)$ are $2$-$(2^n,4,2^{n-s-1-i})$ designs with  $|\mathcal{B}_i|= (2^{3n-s-3-i} - 2^{2n-s-3-i})/3$.
		\end{enumerate}
	\end{corollary}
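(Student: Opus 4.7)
The plan is to derive both parts of the corollary as direct consequences of Proposition~\ref{sumdes}, which already provides the ``one-step'' decomposition of the vanishing flats of a projection of a bent function. The basic observation is that if $F$ is extendable (part 1) or a projection of a larger bent function (part 2), then we have access to a chain of bent functions $F=\tilde F^{(0)},\tilde F^{(1)},\ldots,\tilde F^{(r)}=\tilde F$ such that each $\tilde F^{(i)}$ is $(n,s+i)$-bent and each $\tilde F^{(i-1)}$ is a projection of $\tilde F^{(i)}$; then Proposition~\ref{sumdes} applied at each link of this chain gives a telescoping partition of $\VF(F)$.

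For part 1, extendability of $F$ by definition yields a Boolean bent function $f$ such that $\tilde F(\xBold):=(F(\xBold),f(\xBold))^T$ is $(n,s+1)$-bent. Apply Proposition~\ref{sumdes} to $\tilde F$ with $m=s+1$ and the projection onto the first $s$ coordinates equal to $F$. The second assertion of Proposition~\ref{sumdes}(1) yields $\mathcal{VF}_F=\mathcal{VF}_{\tilde F}\sqcup \mathcal{DF}_{\tilde F/f}$. Setting $D:=\VF(\tilde F)$ and $D_1:=\mathcal{DF}(\tilde F/f)$, Theorem~\ref{theorem: Bent functions via vanishing flats} gives that $D$ is a $2$-$(2^n,4,2^{n-s-2}-1)$ design, while Proposition~\ref{sumdes}(2) (with their $m-s=1$) gives that $D_1$ is a $2$-$(2^n,4,2^{n-s-2})$ design. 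This is exactly the claim.

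For part 2, write $\tilde F=(f_1,\ldots,f_{s+r})^T$ with $F=(f_1,\ldots,f_s)^T$, and define $\tilde F^{(i)}(\xBold):=(f_1(\xBold),\ldots,f_{s+i}(\xBold))^T$ for $0\le i\le r$. Each $\tilde F^{(i)}$ is $(n,s+i)$-bent, since its nonzero component functions are obtained from nonzero components of $\tilde F$ by zero-padding the coefficient vector, and components of $\tilde F$ are bent. In particular $\tilde F^{(i-1)}$ is a projection of $\tilde F^{(i)}$, so Proposition~\ref{sumdes} applied to $\tilde F^{(i)}$ gives
\begin{equation*}
\mathcal{VF}_{\tilde F^{(i-1)}}=\mathcal{VF}_{\tilde F^{(i)}}\sqcup \mathcal{B}_i,\qquad \mathcal{B}_i:=\mathcal{DF}_{\tilde F^{(i)}/f_{s+i}},
\end{equation*}
where $D_i:=(\F_2^n,\mathcal{B}_i)$ is a $2$-$(2^n,4,2^{n-s-1-i})$ design. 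Iterating from $i=1$ to $i=r$ telescopes to $\mathcal{VF}_F=\mathcal{VF}_{\tilde F}\sqcup\bigsqcup_{i=1}^r \mathcal{B}_i$, with $\VF(\tilde F)$ a $2$-$(2^n,4,2^{n-s-r-1}-1)$ design by Theorem~\ref{theorem: Bent functions via vanishing flats}. The block-count for $D_i$ is routine from the identity $|\mathcal{B}_i|\binom{4}{2}=\lambda_i\binom{2^n}{2}$ with $\lambda_i=2^{n-s-1-i}$, which simplifies to the stated value $(2^{3n-s-3-i}-2^{2n-s-3-i})/3$.

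There is essentially no deep obstacle: the entire content sits in Proposition~\ref{sumdes}. The only point demanding any care is the verification that \emph{every} intermediate projection $\tilde F^{(i)}$ is itself bent, so that Proposition~\ref{sumdes} actually applies at each step; this is handled by the zero-padding argument above. The remainder of the argument is bookkeeping on parameters.
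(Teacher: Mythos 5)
Your proof is correct and follows essentially the same route as the paper: both iterate Proposition~\ref{sumdes} along the chain of intermediate projections $\tilde F^{(0)},\ldots,\tilde F^{(r)}$ and telescope the resulting disjoint decompositions, the only cosmetic differences being that the paper treats part 1 as the $r=1$ case of part 2 rather than separately, and computes $|\mathcal{B}_i|$ as a difference of vanishing-flat counts rather than from the design identity $|\mathcal{B}_i|\binom{4}{2}=\lambda_i\binom{2^n}{2}$ (both give the stated value).
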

	\begin{proof}
		Note that the first statement follows as the special case $r=1$ from the second. Let $F$ and $\tilde{F}$ be given as $F(\xBold) = (f_1(\xBold),\ldots,f_s(\xBold))^T$ and $\tilde{F}(\xBold) = (f_1(\xBold),\ldots, f_s(\xBold),f_{s+1}(\xBold),\ldots,f_{s+r}(\xBold))^T$. By Proposition \ref{sumdes}, the block set $\mathcal{VF}_F$ of the vanishing flats $\VF(F)$ of $F$, seen as a projection of the function $F_1(\xBold) = (f_1(\xBold),\ldots, f_s(\xBold),f_{s+1}(\xBold))^T$, is a (disjoint) union of the vanishing flats $\mathcal{VF}_{F_1}$ of $F_1$ and the set $\mathcal{B}_1 = \mathcal{DF}_{F_1/f_{s+1}}$ of cardinality $|\mathcal{B}_1|=|\mathcal{VF}_F| - |\mathcal{VF}_{F_1}| = (2^{3n-s-4} - 2^{2n-s-4})$. Again by Proposition \ref{sumdes}, and by Lemma \ref{obvious}, $D_1=(\F_2^n,\mathcal{B}_1)$ is a $2$-$(2^n,4,2^{n-s-2})$ design.	As $F_1$ is a projection of $F_2(\xBold) = (f_1(\xBold),\ldots,f_{s+1}(\xBold),f_{s+2}(\xBold))^T$, $\mathcal{VF}_{F_1}$ is a (disjoint) union of $\mathcal{VF}_{F_2}$ and $\mathcal{B}_2 = \mathcal{DF}_{F_2/f_{s+2}}$. The cardinality of $\mathcal{B}_2$ is given by $|\mathcal{B}_2|=|\mathcal{VF}_{F_1}| - |\mathcal{VF}_{F_2}| = (2^{3n-s-5} - 2^{2n-s-5})$, and again, $D_2=(\F_2^n,\mathcal{B}_2)$ is a $2$-$(2^n,4,2^{n-s-3})$ design. With a recursive argument, the second statement is shown.
	\end{proof}
	Finally, we apply Corollary \ref{ex-able} to derive a sufficient condition for lonely respectively non-extendable bent functions, which may potentially be used for a computer search of non-extendable bent functions based on the subdesign problem, which can be efficiently solved using the DESIGN package~\cite{Soicher19} of a system for computational discrete algebra GAP~\cite{GAP4}. For an example of use of  the DESIGN package for solving subdesign problems we refer to~\cite{Soicher2013}.
	
	\begin{corollary}\label{corollary: Search of nonextendable bents}
		Let $F$ be an $(n,s)$-bent function.
		\begin{enumerate}
			\item If $\VF(F)$ contains no $2$-$(2^n,4,2^{n-s-2}-1)$ subdesign (or no $2$-$(2^n,4,2^{n-s-2})$ subdesign), then $F$ is not extendable.
			\item More general, if $\VF(F)$ contains no $2$-$(2^n,4,2^{n-s-r-1}-1)$ subdesign for some $r$ satisfying $1\le r\le n/2-s-1$, then $F$ is not the projection of an $(n,n/2)$-bent function.
		\end{enumerate}
	\end{corollary}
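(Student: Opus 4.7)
The plan is to derive both parts of Corollary~\ref{corollary: Search of nonextendable bents} as straightforward contrapositives of the corresponding parts of Corollary~\ref{ex-able}. All the genuine combinatorial content—the decomposition of $\VF(F)$ into subdesigns induced by a bent extension of $F$—has already been established there via Proposition~\ref{sumdes}, so the role of this corollary is to repackage that decomposition as an obstruction criterion.

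For part 1, recall that Corollary~\ref{ex-able}(1) asserts that if $F$ is extendable then $\VF(F) = D \sqcup D_1$, where $D$ is a $2$-$(2^n,4,2^{n-s-2}-1)$ subdesign and $D_1$ is a $2$-$(2^n,4,2^{n-s-2})$ subdesign of $\VF(F)$. Taking the contrapositive, the absence of either a $2$-$(2^n,4,2^{n-s-2}-1)$ subdesign or a $2$-$(2^n,4,2^{n-s-2})$ subdesign in $\VF(F)$ already precludes extendability, which is precisely the statement of part 1.

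For part 2, suppose that $F$ is the projection of some $(n,n/2)$-bent function $\tilde{F}$. Writing $F=(f_1,\ldots,f_s)^T$ and $\tilde{F}=(f_1,\ldots,f_s,g_1,\ldots,g_{n/2-s})^T$, fix any $r$ in the admissible range $1\le r\le n/2-s-1$ and consider the intermediate $(n,s+r)$-bent function $\tilde{F}_{s+r}:=(f_1,\ldots,f_s,g_1,\ldots,g_r)^T$. Then $F$ is a projection of $\tilde{F}_{s+r}$ with $s+r\le n/2$, so Corollary~\ref{ex-able}(2) yields a partition $\VF(F)=\VF(\tilde{F}_{s+r})\sqcup \bigsqcup_{i=1}^{r} D_i$ in which $\VF(\tilde{F}_{s+r})$ appears as a subdesign of $\VF(F)$ with parameters $2$-$(2^n,4,2^{n-s-r-1}-1)$, consistent with Theorem~\ref{theorem: Bent functions via vanishing flats}. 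The contrapositive of this implication is exactly part 2.

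Because the argument is pure bookkeeping built on the previously established decomposition theorems, there is no substantive technical obstacle. The only point that requires care is that in part 2 one must actually exhibit an intermediate bent extension of $F$ of the right corank, and this is handled automatically by truncating $\tilde{F}$ coordinate by coordinate; a final sanity check should confirm that the range $1\le r\le n/2-s-1$ matches the hypothesis $s+r\le n/2$ of Corollary~\ref{ex-able}(2).
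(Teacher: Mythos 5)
Your proposal is correct and matches the paper's own (implicit) justification: the paper states this corollary without a separate proof precisely because both parts are the contrapositives of the two parts of Corollary~\ref{ex-able}, with the intermediate $(n,s+r)$-bent truncation of $\tilde{F}$ supplying the hypothesis of Corollary~\ref{ex-able}(2) exactly as you describe. Your sanity check on the range $1\le r\le n/2-s-1$ versus $s+r\le n/2$ is also the right (and only) point needing verification.
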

	\section{Conclusion and open problems}\label{section 6}
	In this paper, we considered design-theoretic and coding-theoretic aspects of cryptographically significant Boolean and vectorial functions. As a highlight we completed the characterization of one of the most important classes of cryptographic functions, namely, differentially 2-valued and plateaued functions in terms of the incidence structures. For instance, Li et al.~\cite{Li20VanishingFlats} and Tang, Ding and Xiong~\cite{TangDing2019} showed, that the valency of vanishing flats reflects differential uniformity, from what follows that differentially two-valued $(n,n)$-functions $F$ can be characterized in terms of vanishing flats $\VF(F)$ having the property to be 2-designs. In this paper, we showed that regularity of nonvanishing flats reflects another important cryptographic property, namely, plateauedness, and consequently we derived new characterizations of $(n,m)$-plateaued and $(n,m)$-bent functions in terms of nonvanishing flats having the property to be 1-designs and 2-designs, respectively.
	
	In  Table~\ref{table: Bent functions and their generalizations from design-theoretic point of view}, we summarize various design-theoretic characterizations of cryptographically significant classes of $(n,m)$-functions and mention, what kind of incidence structures one gets from the supports of codewords of a fixed weight. We denote by ``$\iff$'' a condition or combination of conditions, which characterizes a certain class of $(n,m)$-functions, and by ``$\Longrightarrow$'' the properties of the supported incidence structures  of a certain class of $(n,m)$-functions. One may observe a remarkable property of bent functions: all three constructions of incidence structures (vanishing flats, nonvanishing flats and supports of the codewords of a fixed weight) always lead to 2-designs. This is, in general, not the case for differentially two-valued $(n,n)$-functions and $(n,m)$-plateaued functions: one can see from Table~\ref{table: Bent functions and their generalizations from design-theoretic point of view} which combinatorial properties of a bent function one may lose, if one considers various generalizations.
	
	\begin{table}[h]
		\caption{Bent functions and their generalizations from design-theoretic point of view}
		\label{table: Bent functions and their generalizations from design-theoretic point of view}
		\centering
		\scalebox{.97}{\begin{tabular}{|>{\centering\arraybackslash}m{.22\linewidth}||>{\centering\arraybackslash}m{.21\linewidth}>{\centering\arraybackslash}m{.22\linewidth}|>{\centering\arraybackslash}m{.23\linewidth}|} \hline 
				Classes of $(n,m)$-functions $F$ & \multicolumn{1}{c|}{Vanishing flats $\VF(F)$} & Nonvanishing flats $\NF_{\vBold}(F)$ &  Supports $\left(\mathcal{P}\left(\mathcal{C}_F\right), \mathcal{B}_{\ell}\left(\mathcal{C}_F\right)\right)$ and $\left(\mathcal{P}\left(\mathcal{C}_F^{\perp}\right), \mathcal{B}_{k>4}\left(\mathcal{C}_F^{\perp}\right)\right)$\\ \hline \hline
				\multirow{3}{*}{$(n,m)$-Bent functions} & \multicolumn{1}{c|}{2-design}                & 2-designs                            & 2-designs \\
				& \multicolumn{1}{c|}{$\iff$}                   & $\iff$                               & $\Longrightarrow$    \\
				& \multicolumn{1}{c|}{By Theorem~\ref{theorem: Bent functions via vanishing flats}}                 & By~Theorem~\ref{theorem: Bent functions via nonvanishing flats}                             & By~\cite[Example 4]{TangDing2019}   \\ \hline
				Differentially two-valued      & 2-design                & Equiregular 1-designs                            & 2-designs \\
				$s$-plateaued                         & \multicolumn{2}{c|}{$\quad \ \iff$}                               & $\Longrightarrow$    \\
				$(n,n)$-functions              & \multicolumn{2}{c|}{By~\cite[Theorem 6.1]{TangDing2019} and Corollary~\ref{corollary: Nonvanishing flats of s-plateaued (n,m)-functions} }                              & By~\cite[Theorem 6.4]{TangDing2019}    \\ \hline
				Differentially       & \multicolumn{1}{c|}{2-design}                & Not necessarily                             & \multirow{3}{*}{TBD$^*$} \\
				two-valued                         & \multicolumn{1}{c|}{$\iff$} &     1-designs                          &     \\
				$(n,n)$-functions              & \multicolumn{1}{c|}{By~\cite[Theorem 6.1]{TangDing2019}}                  & By Remark~\ref{remark: Nonvanishing flats of APN functions}                              &     \\ \cline{1-3} \hline
				\multirow{2}{*}{$s$-Plateaued} & \multicolumn{2}{c|}{Nonvanishing flats are equiregular 1-designs}                            & \multirow{3}{*}{TBD$^{**}$} \\
				& \multicolumn{2}{c|}{$\quad \ \iff$}                               &     \\
				$(n,m)$-functions&  \multicolumn{2}{c|}{By Corollary~\ref{corollary: Nonvanishing flats of s-plateaued (n,m)-functions}}                              &    \\ \cline{1-3} \hline
				\multirow{2}{*}{Plateaued} & \multicolumn{2}{c|}{Nonvanishing flats are 1-designs}                            & \multirow{3}{*}{TBD$^{**}$} \\
				& \multicolumn{2}{c|}{$\quad \ \iff$}                               &     \\
				$(n,m)$-functions& \multicolumn{2}{c|}{By Theorem~\ref{theorem: Vanishing flats of plateaued functions}}                              &    \\ \cline{1-3} \hline
			\end{tabular}}
	\end{table}	
	
	It is out of the scope of this paper to answer all the questions about cryptographic functions, their incidence structures and linear codes. Finally, we would like to give a list of questions and open problems, which, we think, deserve further investigations.
	\begin{enumerate}
		\item What are the incidence structures, supported by the codewords of a fixed weight arising from differentially two-valued and plateaued functions, marked by TBD$^{*}$ and TBD$^{**}$ (to be determined) in Table~\ref{table: Bent functions and their generalizations from design-theoretic point of view}? We give some further insights.
		\begin{enumerate}[label*=\arabic*.]
			\item First we consider the TBD$^{*}$ entry. In general it may be difficult to say under which conditions the linear codes $\mathcal{C}_F$ and $\mathcal{C}_F^\perp$ of differentially two-valued $(n,n)$-functions $F$ support 2-designs. In Theorem~\ref{theorem: APNs with classical Walsh spectrum support 2-designs}, we specified one such a condition for APN functions, namely, having the classical Walsh spectrum. It would be interesting to find out, whether it is an if and only if condition, or to find more classes of APN functions supporting 2-designs.
			\item Now we consider TBD$^{**}$ entries. Deleting a coordinate function from quadratic APN functions from Dillon's list, one may get 1-designs from the obtained projections. However, it is not clear theoretically, why it happens, since the extended Assmus-Mattson Theorem is not applicable any more. In this way, a more careful analysis of this case is needed, although we do not expect that one can get interesting incidence structures out of this construction (we expect at most 1-designs).
		\end{enumerate}
	\end{enumerate}	
	The following series of questions is related to extendability of bent functions and metric regularity of their linear codes.
	\begin{enumerate}
	\setcounter{enumi}{2}
		\item Based on Corollary~\ref{corollary: Search of nonextendable bents} one may try to develop an algorithm for the search of non-extendable bent functions. At the same time one may also try to find theoretically the complexity of the extendability problem. As a starting point we refer to~\cite{COLBOURN198559}, where complexity of the subdesign problem is established for several classes of designs and subdesigns.
		\item Let $F$ be an $(n,s)$-bent function. By Corollary~\ref{corollary: Search of nonextendable bents}, the non-existence of a $2$-$(2^n,4,2^{n-s-r-1}-1)$ subdesign of a $2$-$(2^n,4,2^{n-s-1}-1)$ design $\VF(F)$ gives information about non-extendability of $F$. Moreover, do $2$-$(2^n,4,2^{n-s-1}-1)$ designs without $2$-$(2^n,4,2^{n-s-r-1}-1)$ subdesigns exist? On the other hand, assume one can find (computationally) a $2$-$(2^n,4,2^{n-s-r-1}-1)$ subdesign $D$ of $\VF(F)$ for an $(n,s)$-bent function $F$. Can this design $D$ be realized as $\VF(\tilde{F})$ of an $(n,s+r)$ function $\tilde{F}$, which contains $F$ as a projection?
		\item As we mentioned in Remark~\ref{remark: Uniquely extendable bent functions}, uniquely extendable bent functions (up to the choice of the coset leader, having no affine terms in ANF) are exceptionally rare in a small number of variables. In this way, we think it would be interesting to find out, which $(n,n/2)$-bent functions can be characterized by the property to be ``unique'' extensions of $(n,n/2-1)$-bent functions.
		\item  For the first-order Reed-Muller code $\mathcal{RM}(n,1)$, $n$ even, weight distribution, covering radius are well-known and its metric regularity was established recently, see~\cite{Tok12}.
		The weight distribution of linear codes $\mathcal{C}_F$ for $(n,m)$-bent functions $F$ is known as well as the covering radius $\rho(\mathcal{C}_F)$ for extendable $(n,m)$-bent functions $F$, see Theorem~\ref{theorem: Lonely bent functions}. However, it is still not clear, whether linear codes $\mathcal{C}_F$ of extendable $(n,m)$-bent functions $F$ can be metrically regular. Moreover, in the case of non-extendable $(n,m)$-bent functions $F$ the problem seems to be more complicated, since the covering radius $\rho(\mathcal{C}_F)$ is not known, see also Questions~\ref{question: Metric regularity} and~\ref{question: Covering radius problem}.
		\end{enumerate}
		\section*{Acknowledgement}
		Wilfried Meidl is supported by FWF projects P 30966, P 35138.
		
		\providecommand{\bysame}{\leavevmode\hbox to3em{\hrulefill}\thinspace}
		\providecommand{\MR}{\relax\ifhmode\unskip\space\fi MR }
		\providecommand{\MRhref}[2]{%
			\href{http://www.ams.org/mathscinet-getitem?mr=#1}{#2}
		}
		\providecommand{\href}[2]{#2}

\end{document}